\DeclareSymbolFont{largesymbols}{OMX}{zplm}{m}{n} 
\let\originalleft\left     
\let\originalright\right
\renewcommand{\left}{\mathopen{}\mathclose\bgroup\originalleft}
\renewcommand{\right}{\aftergroup\egroup\originalright}
\newcolumntype{C}{>{$}c<{$}} 
\renewenvironment{quote}{\list{}{\leftmargin=1.5em\rightmargin=1.5em}\item[]}{\endlist}
\numberwithin{equation}{section}
\newcommand{\sfrac}[2]{#1/#2}
\renewcommand{\ge}{\geq}
\renewcommand{\le}{\leq}
\DeclarePairedDelimiter{\brac}{\lparen}{\rparen} 
\DeclarePairedDelimiter{\sqbrac}{\lbrack}{\rbrack} 
\DeclarePairedDelimiter{\set}{\lbrace}{\rbrace}
\newcommand{\st}{\mspace{5mu} : \mspace{5mu}} 
\DeclarePairedDelimiter{\abs}{\lvert}{\rvert}
\DeclarePairedDelimiter{\ang}{\langle}{\rangle}
\DeclarePairedDelimiter{\normord}{{} :}{: {}} 
\DeclarePairedDelimiter{\powser}{\llbracket}{\rrbracket} 
\DeclarePairedDelimiterX{\comm}[2]{\lbrack}{\rbrack}{#1 , #2}  
\DeclarePairedDelimiterX{\acomm}[2]{\lbrace}{\rbrace}{#1 , #2} 
\DeclarePairedDelimiterX{\super}[2]{\lparen}{\rparen}{#1 \delimsize\vert \mathopen{} #2} 
\DeclareMathOperator{\pf}{pf}
\DeclareMathOperator{\sgn}{sgn}
\DeclareMathOperator{\id}{id}
\newcommand{\pd}{\partial}     
\newcommand{\dd}{\mathrm{d}}   
\newcommand{\ee}{\mathsf{e}}   
\newcommand{\wun}{\mathbf{1}}  
\DeclareMathOperator{\Res}{Res}
\DeclareMathOperator{\spn}{span}
\newcommand{\res}[1]{\Res_{#1}}                            
\newcommand{\ira}{\hookrightarrow}    
\DeclareMathOperator{\ind}{Ind}
\newcommand{\Ind}[3]{\ind^{#1}_{#2} #3}
\newcommand{\fld}[1]{\mathbb{#1}}    
\newcommand{\alg}[1]{\mathfrak{#1}}  
\newcommand{\grp}[1]{\mathsf{#1}}    
\newcommand{\Mod}[1]{\mathcal{#1}}   
\newcommand{\VOA}[1]{\mathsf{#1}}    
\newcommand{\ZZ}{\fld{Z}}
\newcommand{\QQ}{\fld{Q}}
\newcommand{\RR}{\fld{R}}
\newcommand{\CC}{\fld{C}}
\newcommand{\symgp}[1]{\grp{S}_{#1}}   
\newcommand{\affine}[1]{\widehat{#1}}
\newcommand{\AKMA}[2]{\affine{\alg{#1}} \brac[\big]{#2}}       
\newcommand{\AKMSA}[3]{\affine{\alg{#1}} \super[\big]{#2}{#3}} 
\newcommand{\halg}{\alg{h}}                                                 
\newcommand{\falg}[1]{\alg{f}_{#1}}                                         
\newcommand{\svir}[1]{\alg{svir}_{#1}}                                      
\newcommand{\MinMod}[2]{\VOA{M}\left( #1 , #2 \right)}                      
\newcommand{\svc}[1]{\VOA{V}(#1)}                                           
\newcommand{\hvoa}[1]{\VOA{H}(#1)}                                          
\newcommand{\fvosa}{\VOA{F}}                                                
\newcommand{\ff}{\VOA{HF}}                                                  
\newcommand{\ffvoa}[1]{\ff(#1)}                                             
\newcommand{\ideal}[1]{\ang[\big]{#1}}
\newcommand{\vsv}[1]{\chi_{#1}}                                             
\newcommand{\gtzhu}[2]{\mathsf{A}^{\!#2}\sqbrac*{#1}}                       
\newcommand{\zhu}[1]{\gtzhu{#1}{}}                                          
\newcommand{\parity}{\tau}                                                  
\newcommand{\tzhu}[1]{\gtzhu{#1}{\parity}}                                  
\newcommand{\gtozhu}[2]{\mathsf{O}^{#2}\sqbrac*{#1}}                        
\newcommand{\ozhu}[1]{\gtozhu{#1}{}}
\newcommand{\tozhu}[1]{\gtozhu{#1}{\parity}}
\DeclareMathOperator{\wt}{wt}                                               
\newcommand{\zstar}{\ast}                                                   
\newcommand{\zcirc}{\circ}   
\newcommand{\zmsub}[2]{\VOA{#1}_{#2}}                                       
\newcommand{\tzmsub}[2]{\VOA{#1}^{\parity}_{#2}}                            
\newcommand{\NS}{\mathrm{NS}}
\newcommand{\R}{\mathrm{R}}
\newcommand{\N}{\mathrm{N}}
\newcommand{\Ver}{\Mod{M}}                   
\newcommand{\Irr}{\Mod{L}}                   
\newcommand{\NSVer}[1]{\Ver^{\NS}(#1)}       
\newcommand{\NSIrr}[1]{\Irr^{\NS}(#1)}       
\newcommand{\RVer}[1]{\Ver^{\R}(#1)}         
\newcommand{\RIrr}[1]{\Irr^{\R}(#1)}         
\newcommand{\Fock}[1]{\Mod{F}_{#1}}                       
\newcommand{\NSFock}{\Fock{}^{\NS}}          
\newcommand{\RFock}{\Fock{}^{\R}}            
\newcommand{\bNSFock}[1]{\mathbb{F}_{#1}^{\NS}}
\newcommand{\bRFock}[1]{\mathbb{F}_{#1}^{\R}}
\DeclarePairedDelimiter{\bra}{\langle}{\rvert}
\DeclarePairedDelimiter{\ket}{\lvert}{\rangle}
\DeclarePairedDelimiterX{\braket}[2]{\langle}{\rangle}{#1 \delimsize\vert \mathopen{} #2}
\DeclarePairedDelimiterX{\bracket}[3]{\langle}{\rangle}{#1 \delimsize\vert \mathopen{} #2 \delimsize\vert \mathopen{} #3}
\newcommand{\brab}[1]{\bra[\big]{#1}}
\newcommand{\ketb}[1]{\ket[\big]{#1}}
\newcommand{\braketb}[2]{\braket[\big]{#1}{#2}}
\newcommand{\bracketb}[3]{\bracket[\big]{#1}{#2}{#3}}
\newcommand{\NSbra}{\brab{\NS}}                  
\newcommand{\NSffbra}[1]{\brab{#1;\NS}}          
\newcommand{\Rbra}{\brab{\R}}                    
\newcommand{\Rffbra}[1]{\brab{#1;\R}}            
\newcommand{\NSket}{\ketb{\NS}}                  
\newcommand{\NSffket}[1]{\ketb{#1;\NS}}          
\newcommand{\Rket}{\ketb{\R}}                    
\newcommand{\Rffket}[1]{\ketb{#1;\R}}            
\newcommand{\NSbracket}[1]{\bracketb{\NS}{#1}{\NS}}
\newcommand{\NSffbracket}[3]{\bracketb{#1;\NS}{#2}{#3;\NS}}
\newcommand{\Rffbracket}[3]{\bracketb{#1;\R}{#2}{#3;\R}}
\DeclarePairedDelimiter{\corrfn}{\langle}{\rangle}   
\newcommand{\corrfnb}[1]{\corrfn[\big]{#1}}
\newcommand{\NScorrfn}[1]{\corrfnb{#1}_{\NS}}        
\newcommand{\Rcorrfn}[1]{\corrfnb{#1}_{\R}}          
\newcommand{\kacsymbol}{\mathsf{K}}
\newcommand{\kac}[1]{\kacsymbol_{#1}}                
\newcommand{\NSkac}[1]{\kac{#1}^{\NS}}
\newcommand{\Rkac}[1]{\kac{#1}^{\R}}
\newcommand{\rkac}[1]{\overline{\kacsymbol}_{#1}}    
\newcommand{\rNSkac}[1]{\rkac{#1}^{\NS}}
\newcommand{\rRkac}[1]{\rkac{#1}^{\R}}
\newcommand{\parl}[1]{\pi_{#1}}                   
\newcommand{\admp}[2]{\delta^{(#1)}\brac*{#2}}      
\newcommand{\uniqp}[2]{\epsilon^{(#1)}\brac*{#2}}   
\newcommand{\vop}[2]{\mathrm{V}_{#1}\brac*{#2}}
\newcommand{\SCR}{\mathcal{Q}}
\newcommand{\scrf}[2]{\SCR_{#1}(#2)}
\newcommand{\scrs}[2]{\SCR_{#1}^{[#2]}}
\newcommand{\cyc}[2]{\Gamma(#1,#2)}
\newcommand{\Van}{\Delta}
\newcommand{\van}[1]{\Van\brac*{#1}}
\newcommand{\sym}{\Lambda}                                       
\newcommand{\fsym}[1]{\sym_{#1}}                                 
\newcommand{\poly}[1]{\mathsf{#1}}
\newcommand{\monsym}[1]{\poly{m}_{#1}}                           
\newcommand{\fmonsym}[2]{\poly{m}_{#1} \brac[\big]{#2}}          
\newcommand{\powsum}[1]{\poly{p}_{#1}}                           
\newcommand{\fpowsum}[2]{\poly{p}_{#1} \brac[\big]{#2}}
\newcommand{\jack}[2]{\poly{P}_{#1}^{#2}}                        
\newcommand{\fjack}[3]{\poly{P}_{#1}^{#2} \brac[\big]{#3}}
\newcommand{\djack}[2]{\poly{Q}_{#1}^{#2}}                       
\newcommand{\fdjack}[3]{\poly{Q}_{#1}^{#2} \brac[\big]{#3}}
\newcommand{\jprod}[3]{\ang*{#1}_{#2}^{#3}}    
\newcommand{\cjprod}[2]{\ang*{#1}^{#2}}        
\newcommand{\JdJ}[3]{\mathcal{N}_{#1}(#2; #3)} 
\newcommand{\symiso}[2]{\rho^{#1}_{#2}}                  
\newcommand{\uea}{universal enveloping algebra}
\newcommand{\uesa}{universal enveloping superalgebra}
\newcommand{\lw}{lowest-weight}
\newcommand{\lwv}{\lw{} vector}
\newcommand{\lwvs}{\lw{} vectors}
\newcommand{\hw}{highest-weight}
\newcommand{\hwv}{\hw{} vector}
\newcommand{\hwvs}{\hw{} vectors}
\newcommand{\hwm}{\hw{} module}
\newcommand{\hwms}{\hw{} modules}
\newcommand{\sv}{singular vector}
\newcommand{\svs}{singular vectors}
\newcommand{\voa}{vertex operator algebra}
\newcommand{\voas}{vertex operator algebras}
\newcommand{\vosa}{vertex operator superalgebra}
\newcommand{\vosas}{vertex operator superalgebras}
\newcommand{\ope}{operator product expansion}
\newcommand{\opes}{operator product expansions}
\newcommand{\PBW}{Poincar\'{e}-Birkhoff-Witt}
\newcommand{\ns}{Neveu-Schwarz}
\newcommand{\lhs}{left-hand side}
\newcommand{\rhs}{right-hand side}
\newcommand{\lhss}{left-hand sides}
\theoremstyle{plain}
\newtheorem{thm}{Theorem}[section]
\newtheorem{prop}[thm]{Proposition}
\newtheorem{lem}[thm]{Lemma}
\newtheorem{cor}[thm]{Corollary}
\newtheorem{defn}[thm]{Definition}
\newtheorem*{rmk}{Remark}
\newtheorem*{thm*}{Theorem}
\Crefname{thm}{Theorem}{Theorems}
\Crefname{prop}{Proposition}{Propositions}
\Crefname{lem}{Lemma}{Lemmas}
\Crefname{cor}{Corollary}{Corollaries}
\Crefname{defn}{Definition}{Definitions}
\begin{document}

\title[]{Superconformal minimal models and admissible Jack polynomials}

\author[O Blondeau-Fournier]{Olivier Blondeau-Fournier}

\address[Olivier Blondeau-Fournier]{
Department of Mathematics \\
King's College London \\
Strand, United Kingdom, WC2R~2LS.
}

\email{olivier.blondeau-fournier@kcl.ac.uk}

\author[P Mathieu]{Pierre Mathieu}

\address[Pierre Mathieu]{
D\'epartement de Physique, de G\'enie Physique et d'Optique \\
Universit\'e Laval \\ 
Qu\'ebec, Canada, G1V~0A6.
}

\email{pmathieu@phy.ulaval.ca}

\author[D Ridout]{David Ridout}

\address[David Ridout]{
School of Mathematics and Statistics \\
University of Melbourne \\
Parkville, Australia, 3010.
}

\email{david.ridout@unimelb.edu.au}

\author[S Wood]{Simon Wood}

\address[Simon Wood]{
Mathematical Sciences Institute \\
Australian National University \\
Acton, Australia, 2601.
}

\email{woodsi@cardiff.ac.uk}

\thanks{\today}

\begin{abstract}
  We give new proofs of the rationality of the \(N=1\) superconformal minimal model \vosas{} and of the classification of their modules in both the Neveu-Schwarz and Ramond sectors. For this, we combine the standard free field realisation with the theory of Jack symmetric functions. A key role is played by Jack symmetric polynomials with a certain negative parameter that are labelled by admissible partitions. These polynomials are shown to describe free fermion correlators, suitably dressed by a symmetrising factor. The classification proofs concentrate on explicitly identifying Zhu's algebra and its twisted analogue. Interestingly, these identifications do not use an explicit expression for the non-trivial vacuum singular vector.  While the latter is known to be expressible in terms of an Uglov symmetric polynomial or a linear combination of Jack superpolynomials, it turns out that standard Jack polynomials (and functions) suffice to prove the classification.
\end{abstract}

\maketitle

\onehalfspacing

\section{Introduction} \label{sec:Intro}

The purpose of this article is to give a new proof of the classification of the simple modules of the \(N=1\) superconformal minimal model \vosas{} \(\MinMod{p_+}{p_-}\) in the Neveu-Schwarz and Ramond sectors.  The rationality in both sectors is also established. The proof of this classification makes use of a deep connection between the theory of symmetric functions and free field realisations.  Moreover, the method of proof in both sectors is essentially the same.  This method has previously been applied to classify the simple modules of the Virasoro minimal models \cite{RidJac14}, the admissible level affine \(\AKMA{sl}{2}\) models \cite{RidSlJac15} and the triplet algebras \cite{TsuExt13}.

Let $p_+$ and $p_-$ be integers satisfying $p_+, p_- \ge 2$, $p_- - p_+ \in 2 \ZZ$ and $\gcd \set{\frac{1}{2} (p_- - p_+), p_-} = 1$.  Let
\begin{equation}
c_{p_+,p_-} = \frac{3}{2}-3\frac{(p_--p_+)^2}{p_+p_-}, \quad
h_{r,s}=\frac{(rp_--sp_+)^2-(p_--p_+)^2}{8p_+p_-}+\frac{1-(-1)^{r+s}}{32},
\end{equation}
where $r$ and $s$ are positive integers.  Additionally, let $\NSIrr{h,c}$ and $\RIrr{h,c}$ denote the simple \hwms{} over the \ns{} and Ramond algebras, respectively, whose \hwvs{} have conformal weight $h$, central charge $c$ and even parity.  Then, we can state the main result as follows (referring to \cref{sec:N=1VOSA} for our conventions concerning modules and the notion of parity reversal).
\begin{thm*}
  The $N=1$ superconformal minimal model \vosa{} \(\MinMod{p_+}{p_-}\) is rational in both the Neveu-Schwarz and Ramond sectors, that is,
  both sectors have finitely many simple \(\ZZ_2\)-graded modules and every \(\ZZ_2\)-graded module is semisimple.
  \begin{enumerate}
  \item Up to isomorphism, the simple $\MinMod{p_+}{p_-}$-modules in the Neveu-Schwarz sector are given
  by the \(\NSIrr{{h_{r,s},c_{p_+,p_-}}}\), with $1 \le r \le p_+-1$, $1 \le s \le p_--1$ and $r+s \in 2 \ZZ$,
  and their parity reversals.
  \item Up to isomorphism, the simple $\MinMod{p_+}{p_-}$-modules in the Ramond sector are given by the
  \(\RIrr{h_{r,s},c_{p_+,p_-}}\), with \mbox{$1 \le r \le p_+-1$}, $1 \le s \le p_--1$ and $r+s \in 2 \ZZ + 1$,
  and, if $p_+$ is even, the parity reversal of \(\RIrr{h_{p_+/2,p_-/2}}\)
  (the other simple Ramond modules being isomorphic to their parity-reversed counterparts).
\end{enumerate}
\end{thm*}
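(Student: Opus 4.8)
The plan is to establish rationality and the module classification through a careful analysis of Zhu's algebra $\zhu{\MinMod{p_+}{p_-}}$ in the Neveu-Schwarz sector and its twisted analogue $\tzhu{\MinMod{p_+}{p_-}}$ in the Ramond sector. Recall that the simple modules of a \vosa{} with a suitable grading are in bijection with the simple modules of its Zhu algebra (resp.\ twisted Zhu algebra for the Ramond sector), and that rationality will follow once we show these (finite-dimensional, associative) algebras are semisimple. The first step is therefore to recall the free field realisation of the superconformal algebra in terms of a free boson and a free fermion, giving a realisation of the (universal) superconformal \vosa{} inside a bosonic Fock space tensored with the fermion \vosa{} $\fvosa$. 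The vacuum module of the minimal model $\MinMod{p_+}{p_-}$ is the quotient of the universal (Verma-type) \vosa{} by the maximal ideal, which is generated by a single \sv{} in the vacuum module; the key structural input is that $\zhu{\MinMod{p_+}{p_-}}$ (resp.\ $\tzhu{\MinMod{p_+}{p_-}}$) is the quotient of $\zhu{}$ of the universal \vosa{}---a polynomial algebra in the image of the superconformal primary---by the ideal generated by the image of this \sv{}.

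The heart of the argument is to compute this image without needing a closed form for the \sv{} itself. Here is where the Jack polynomial technology enters: the \sv{} lives in a singular subspace of the Fock space, and pairing it against states built from fermion modes yields free fermion correlators. The claim (established earlier in the paper) is that, after dressing by a symmetrising factor, these correlators are computed by Jack symmetric polynomials $\fjack{\lambda}{\alpha}{\cdot}$ at a specific negative parameter $\alpha$ determined by $p_+,p_-$, and that the relevant $\lambda$ range over admissible partitions. Concretely, I would: (i) identify the Zhu algebra image of the \sv{} with the evaluation of such a Jack polynomial, or more precisely with the vanishing/non-vanishing pattern of a family of such Jacks; (ii) use the known admissibility structure of Jack polynomials at negative rational parameters---their factorisation and the combinatorics of which $\lambda$ give nonzero contributions---to pin down exactly which characters (i.e.\ which values of the Zhu-algebra generator, hence which $h_{r,s}$) survive in the quotient; (iii) translate the resulting set of admissible characters into the stated ranges $1 \le r \le p_+-1$, $1 \le s \le p_--1$ with the parity constraint $r+s$ even (NS) or odd (Ramond). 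The finiteness of this set gives finitely many simples in each sector, and showing the Zhu algebras are (commutative) semisimple---i.e.\ the defining relation has no repeated roots, which again follows from a discriminant/derivative computation on the Jack side---gives semisimplicity of all modules, hence rationality.

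For the Ramond sector one must additionally account for the zero mode of the fermion, which makes the twisted Zhu algebra slightly larger: the Ramond \lwv{} sits in a two-dimensional (parity-doubled) space acted on by the fermion zero mode, and this is responsible for the special behaviour at $r = p_+/2$, $s = p_-/2$ when $p_+$ (hence $p_-$) is even. I would treat $\tzhu{\MinMod{p_+}{p_-}}$ by the same strategy, computing the image of the \sv{} via the twisted (Ramond) correlators---which again reduce to the same family of admissible Jack polynomials up to the symmetrising dressing---and carefully analysing the one module where the square of the fermion zero mode is forced to a value that does (or does not) admit a parity-reversal-invariant structure. The constraint $\gcd\set{\frac{1}{2}(p_- - p_+), p_-} = 1$ is what guarantees that the admissible partitions parametrise distinct characters and that the count matches; it plays the same role here as coprimality does in the Virasoro case.

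The main obstacle I anticipate is step (ii): precisely controlling which admissible Jack polynomials $\fjack{\lambda}{\alpha}{\cdot}$ at the negative parameter $\alpha = -\tfrac{p_+ - 2}{p_- - 2}$ (or the appropriate analogue) contribute nonvanishing correlators, and hence precisely cutting out the ideal in the polynomial Zhu algebra. This requires the theory of Jack polynomials at negative rational parameters---their regularity, the $(k,r)$-admissibility conditions, and the explicit evaluation at the relevant specialisations of variables forced by the fermion correlator structure---together with care that the symmetrising dressing factor does not introduce spurious zeros or cancel genuine ones. Once the ideal is identified, extracting the root set of the quotient and checking squarefreeness is comparatively routine, and the NS/Ramond dichotomy emerges automatically from the parity grading of the fermion.
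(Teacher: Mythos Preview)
Your overall strategy is essentially that of the paper: compute the untwisted and twisted Zhu algebras as quotients of polynomial rings by the image of the vacuum singular vector, identify this image using the free field realisation and Jack polynomial technology, and deduce rationality from semisimplicity. However, there are three points where your sketch diverges from what actually works.

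First, the negative Jack parameter is \emph{not} determined by $(p_+,p_-)$: it is always $t=-3$. This is a structural fact about the free fermion correlators (the Vandermonde-times-pfaffian expressions of \cref{prop:VanPf=Jack}) and has nothing to do with the central charge. The $(p_+,p_-)$-dependence enters elsewhere: the matrix element of $\vsv{p_+,p_-}(w)$ is an inner product in which the $t=-3$ Jack polynomial coming from the fermion correlator is paired against a generating function expanded in Jack polynomials at the \emph{positive} parameter $-2\alpha_-/\alpha_0 = 2p_+/(p_--p_+)$. It is the specialisation map $\Xi_{-2q/\alpha_0}$ applied to the latter, evaluated via the explicit arm/leg colength product \eqref{eq:Jackspec}, that produces the roots in $q$.

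Second, and more seriously, your step (ii) understates the difficulty. The paper cannot compute $F_{p_+,p_-}(T)$ directly: after expanding, one obtains a sum \eqref{eq:F=Jack} over all partitions $\mu$ dominated by a fixed admissible partition $\kappa$, and the individual summands have different roots. The key combinatorial lemma (\cref{thm:admissiblebounds}) bounds the parts of every such $\mu$ from below, so that a fixed partition $\rho$ sits inside every $\mu$; the boxes of $\rho$ then give roots common to every summand, hence roots of $F_{p_+,p_-}$. This only accounts for roughly half the roots; the rest are recovered from the symmetry $q \leftrightarrow \alpha_0 - q$ of $h_q^{\NS}$. The argument then terminates because the degree bound of \cref{lem:degrees} is saturated. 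Your proposal does not mention this indirect mechanism, and ``factorisation and combinatorics of which $\lambda$ give nonzero contributions'' does not capture it.

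Third, semisimplicity of the Zhu algebras gives that there are no non-split self-extensions of simples, but it does \emph{not} by itself rule out extensions between non-isomorphic simples (whose ground-state spaces could sit at different conformal weights). The paper closes this gap by appealing to the known singular vector structure of the \ns{} and Ramond Verma modules \cite{IohVer03}: no Verma module with Kac-table conformal weight has a proper singular vector whose weight is again in the Kac table.
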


The (non-rigorous) classification of the simple modules appearing in the $N=1$ minimal models was, of course, very well known to physicists \cite{EicMin85,BerSup85,FriSup85} and the celebrated coset construction confirmed their results for the unitary minimal models $\MinMod{p_+}{p_++2}$ \cite{GodUni86}.  However, rigorous proofs that included the non-unitary models remained elusive.  Following Wang's explicit identification of Zhu's algebra for the Virasoro minimal models \cite{WanRat93}, Kac and Wang conjectured the corresponding result for the $N=1$ minimal models \cite{Kacn1z94}, but were unable to provide a proof for the non-unitary cases.  Subsequently, Adamovi\'{c} \cite{AdaRat97} extended the coset proof to the non-unitary cases as a simple consequence of his classification \cite{AdaVer95}, obtained with Milas, of the simple modules of the admissible level $\AKMA{sl}{2}$ models.  However, he only determined which $N=1$ modules appeared in the \ns{} sector.  The coset construction also produces the simple modules in the Ramond sector, but they were not considered because Zhu's algebra cannot be used to determine whether one has indeed found them all.  The appropriate generalisation of Zhu's algebra appeared shortly thereafter \cite{DonTwi98}, but it seems that a complete proof for the Ramond classification did not appear until \cite{MiltWeb07}.

Our classification proof applies to both \ns{} and Ramond sectors and is not contingent on a coset construction.  As noted above, it instead relies on embedding the $N=1$ \vosa{} into a free field \vosa{} and using tools from the theory of symmetric polynomials to calculate within the latter.  This connection between symmetric polynomials and free field realisations originated in the work of Wakimoto and Yamada \cite{WakFoc86b} and was continued in \cite{KatMis92,MimSin95,AwaWN95}, where it was used to derive compact formulae for singular vectors of various \voas{} in terms of their free field realisations. However, the actual utilisation of these \sv{} formulae for classifying irreducible modules appears to be new.

Recently, there has been a resurgence of interest in using symmetric polynomials to construct singular vectors, particularly for the \(N=1\) superconformal \vosas{}, thanks to the AGT conjectures \cite{AldLio10}.  In particular, there have been two parallel developments that are closely related to the work reported here. One approach \cite{BelUg13,YanUg15} uses a basis of symmetric polynomials called Uglov polynomials \cite{UglYan98}, a specialisation of Macdonald polynomials that are similar to Jack polynomials, and leads to \sv{} formulae involving a single Uglov polynomial.  However, this has thus far only been studied in the \ns{} sector.  The other approach \cite{DesSup01} instead works with superspace analogues of Jack polynomials, called Jack superpolynomials, that directly incorporate anticommuting (Grassmann) variables.  Singular vector formulae have been conjectured in both the \ns{} and Ramond sectors \cite{DesSJa12,AlaRam13} and similar results have recently been rigorously proved \cite{OPDS}.  However, these formulae involve linear combinations of Jack superpolynomials.

Our work differs from these approaches in that we are not interested in explicit \sv{} formulae themselves.  Rather, the point is to instead use implicit formulae for \svs{} to explicitly identify Zhu's algebras for the $N=1$ superconformal minimal models and thereby classify the irreducible representations in the \ns{} and Ramond sectors.  A simple corollary of this is the rationality of these minimal models.  Moreover, we do not employ Uglov polynomials nor Jack superpolynomials in proving the classification theorem, but instead find that the standard Jack symmetric polynomials are sufficient.  This does require some more sophisticated tools.  In particular, our proofs rely on the theory of negative parameter Jack polynomials associated to admissible partitions that was introduced by Feigin, Jimbo, Miwa and Mukhin \cite{Feidif02}.  This aside, many of the arguments are still significantly more involved than one would expect given the elegance of the arguments for the (non-super) Virasoro minimal models \cite{RidJac14}.  It will be very interesting to determine whether our pure-Jack formalism can be generalised to accommodate Uglov and/or Jack superpolynomials and thereby recover this expected elegance.  We mention that the recent results of \cite{OPDS} show that the non-trivial singular vector in the vacuum module can be expressed in terms of Jack superpolynomials.  However, the calculations that connect this expression to Zhu's algebras turn out to be independent of the superspace construction (the anticommuting variables) and reduce to those reported here.

\medskip

This article is organised as follows. \cref{sec:N=1} begins with a review of the \(N=1\) universal \vosas{} and their simple quotients, the \(N=1\) superconformal minimal model \vosas{}.  This is followed by a description of their standard free field realisations and an outline of the construction of screening operators, essential for the \sv{} computations to come.  The section concludes with derivations of explicit formulae for certain correlation functions, particularly those involving free fermions.  Most of this material is standard, but is included for completeness as well as to fix notation and conventions.

The main topic of \cref{sec:Jack} is an important ideal of the ring of symmetric polynomials that is intimately connected to Jack polynomials that are labelled by a given negative parameter and the so-called admissible partitions.  This is actually a special case of a much more general picture that was introduced and studied in \cite{Feidif02}.  We begin by collecting a few combinatorial results concerning admissible partitions that will be used in the calculations that follow.  The main goal is to express the free fermion correlation functions of the previous section in terms of Jack polynomials for certain admissible partitions.  The results are very elegant for the \ns{} correlators, but their Ramond analogues are (perhaps unsurprisingly) somewhat more complicated.

\cref{sec:zhu} then combines these expressions for the fermion correlators with the symmetric polynomial theory detailed in \cite{MacSym95} to identify Zhu's algebra and its twisted generalisation for any $N=1$ superconformal minimal model.  These identifications quickly yield the desired classification and rationality of the corresponding \vosas{} in the \ns{} and Ramond sectors, respectively.  Generalising the point of view of \cite[App.~B]{RidSlJac15}, we explain in \cref{sec:twistedZhu} that the definition of twisted and untwisted Zhu algebras is nothing but an abstraction of the action of zero modes on ground states.  We also emphasise that a field only induces an element of a given Zhu algebra if it has a zero mode when acting in the corresponding sector.  It seems that this point of view is rarely made explicit in the literature.  In our opinion, this greatly obscures the underlying simplicity and utility of Zhu theory.

The actual calculation of the twisted and untwisted Zhu algebras for the $N=1$ minimal models first notes that these algebras are quotients of polynomial rings in a single variable.  The goal therefore reduces to computing a single polynomial for each.  These polynomials may, in turn, be determined by studying which \hwvs{} are annihilated by the zero mode of a single (carefully chosen) null field.  Our first result is that this null field may be constructed in the free field realisation.  The proof uses the Jack polynomial technology developed in \cref{sec:Jack}.  Our second result is that the corresponding polynomials are in fact non-zero.  This follows in the untwisted case from a quite general argument, but the twisted version of this is considerably more involved and is instead proven as a corollary to the identification of the untwisted polynomial.  These results then allow us to attend to our main result, the actual identification of these polynomials (which also requires the free field realisation and Jack technology).  The calculations are notable for the fact that the methodology does not appear to allow these polynomials to be computed directly, unlike the cases detailed in \cite{RidJac14,RidSlJac15}.  Nevertheless, we are able to determine sufficiently many zeroes that complete identifications can be made by appealing to an obvious symmetry property.  It would be very interesting to determine whether these polynomials may be directly determined by generalising to Uglov or Jack superpolynomials.

\section*{Acknowledgements}

OBF is supported by le Fonds de Recherche du Qu\'{e}bec --- Nature et Technologies.  PM's research is supported by the Natural Sciences and Engineering Research Council of Canada.  DR's research is supported by the Australian Research Council Discovery Projects DP1093910 and DP160101520.  SW is supported by the Australian Research Council Discovery Early Career Researcher Award DE140101825 and the Discovery Project DP160101520.

\section{\(N=1\) superalgebras and their correlation functions} \label{sec:N=1}

In this section, we recall several well known results concerning the \(N=1\) \vosas{} and their free field realisations.  This also serves to settle notation and conventions for the sections that follow.

\subsection{\(N=1\) \vosas{}} \label{sec:N=1VOSA}

The $N=1$ superconformal algebras are a pair of infinite-dimensional complex Lie superalgebras parametrised by a label $\epsilon \in \set{0, \frac{1}{2}}$:
\begin{equation}
  \svir{\epsilon}=\bigoplus_{n\in\ZZ}\CC
  L_n\oplus\:\bigoplus_{\mathclap{m\in\ZZ+\epsilon}}\:\CC G_m\oplus \CC C.
\end{equation}
This defines a vector space direct sum decomposition into an even (bosonic) subspace, spanned by the $L_n$ and $C$, and an odd (fermionic) subspace, spanned by the $G_m$.  The superalgebra with $\epsilon = \frac{1}{2}$ is known as the \emph{\ns{} algebra} \cite{NevFac71} and that with $\epsilon = 0$ is the \emph{Ramond algebra} \cite{RamDua71}.  The defining Lie brackets of both are given by
\begin{align} \label{eq:N=1CommRels}
\begin{aligned}
  \comm{L_m}{L_n}&=(m-n)L_{m+n}+\frac{1}{12}(m^3-m)\delta_{m+n,0}C,\\
  \comm{L_m}{G_r}&=\left(\frac12m-r\right)G_{m+r},\\
  \acomm{G_r}{G_s}&=2L_{r+s}+\frac{1}{3}\left(r^2-\frac14\right)\delta_{r+s,0}C,
\end{aligned}
\qquad
\begin{aligned}
&m,n\in\ZZ, \\
&r,s\in\ZZ+\epsilon,
\end{aligned}
\end{align}
and \(C\) is central. We identify \(C\) with a multiple of the identity, \(C=c\cdot\id\), when acting on modules and refer to the number \(c\in\CC\) as the central charge. Modules over the \ns{} algebra are said to belong to the \ns{} sector, while modules over the Ramond algebra belong to the Ramond sector.

For reasons coming from physics (which are discussed at the end of this subsection), we shall require that all superalgebra modules are $\ZZ_2$-graded, meaning that they admit a vector space direct sum decomposition into an even and an odd subspace.  This decomposition must be compatible with that of the superalgebra so that the action of an even superalgebra element preserves the even and odd subspaces of the module, while the action of an odd element maps between these two subspaces.  It follows that there is an ambiguity in imposing this structure on a given indecomposable module over the superalgebra, even once the vector space decomposition has been agreed upon, because we may swap the even and odd subspaces with impunity.  In general, each indecomposable superalgebra module therefore comes in two flavours, isomorphic as modules but not as $\ZZ_2$-graded modules, which only differ in the global choice of parity.  Given a superalgebra module, we shall refer to the module obtained by swapping its even and odd subspaces as its \emph{parity reversal}.  Of course, it may happen that a module and its parity reversal are isomorphic as $\ZZ_2$-graded modules.

Recall the standard triangular decomposition of the \ns{} algebra:
\begin{align}
 \svir{\sfrac{1}{2}}^\pm=\bigoplus_{n>0}\CC L_{\pm n} \oplus 
 \bigoplus_{m>0}\CC G_{\pm m},
 \quad
  \svir{\sfrac{1}{2}}^0=\CC L_0\oplus\CC C.
\end{align}
Writing \(\svir{\sfrac{1}{2}}^{\ge}=\svir{\sfrac{1}{2}}^+\oplus\svir{\sfrac{1}{2}}^0\), the \ns{} Verma module
\begin{equation}\label{eq:NSverm}
  \NSVer{h,c}=\Ind{\svir{\sfrac{1}{2}}}{\svir{\sfrac{1}{2}}^{\ge}}{\N(h,c)}
\end{equation}
is induced from the $1$-dimensional \(\svir{\sfrac{1}{2}}^\ge\)-module $\N(h,c)=\CC \Omega_{h,c}$ characterised by the parity of the generating vector $\Omega_{h,c}$ being even and
\begin{equation}
  L_0\Omega_{h,c}=h\Omega_{h,c},\quad C\Omega_{h,c}=c\Omega_{h,c}
  ,\quad \svir{\sfrac{1}{2}}^+\ \Omega_{h,c}=0.
\end{equation}
There are, in addition, parity-reversed \ns{} Verma modules that are induced from an odd vector.  Note that \ns{} Verma modules are never isomorphic (as $\ZZ_2$-graded modules) to their parity-reversed counterparts.
By the structure theory of these modules \cite{AstStr97}, this also holds true for \ns{} \hwms{}.

The construction of Ramond Verma modules is slightly different as the decomposition
\begin{align}
 \svir{0}^\pm = \bigoplus_{n>0} \CC L_{\pm n} \oplus \bigoplus_{m>0} \CC G_{\pm m},
 \quad
 \svir{0}^0 = \CC L_0 \oplus \CC G_0 \oplus \CC C
\end{align}
is not a triangular decomposition of the Ramond algebra, because $\svir{0}^0$ is not abelian.  However, we may proceed instead via \emph{generalised} Verma modules which are induced from an arbitrary simple $\svir{0}^0$-module.  The following classification follows easily from the fact that $G_0^2 = L_0 - \frac{1}{24} C$ (in the \uea{}).
\begin{prop}
The finite-dimensional, $\ZZ_2$-graded, simple $\svir{0}^0$-modules are classified by the unique eigenvalues $h$ and $c$ of $L_0$ and $C$, respectively, and the global parity.
\begin{itemize}
\item If $h \neq \frac{c}{24}$, then there is exactly one such module (up to isomorphism), denoted by $\R(h,c)$.  Its dimension is $2$ and it is isomorphic to its parity reversal.
\item If $h = \frac{c}{24}$, then there are exactly two such modules (up to isomorphism):  $\R(c/24,c)$ and its parity reversal.  Their dimensions are $1$.
\end{itemize}
\end{prop}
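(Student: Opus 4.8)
The plan is to classify the finite-dimensional, $\ZZ_2$-graded, simple modules over $\svir{0}^0 = \CC L_0 \oplus \CC G_0 \oplus \CC C$ by exploiting the single structural identity $G_0^2 = L_0 - \frac{1}{24} C$ in the \uea{}, together with the fact that $C$ is central and $L_0$ commutes with $G_0$ (so $L_0$ is also central in $\svir{0}^0$). First I would note that on any finite-dimensional simple module, $C$ and $L_0$ act as scalars $c$ and $h$ by a Schur-lemma argument (they are central, so each acts as an intertwiner; finite-dimensionality over $\CC$ gives eigenvectors, and simplicity forces the whole module to be an eigenspace). Thus the module structure is entirely determined by a single operator $G_0$ subject to $G_0^2 = (h - \tfrac{c}{24})\,\id$, with $G_0$ odd, i.e. swapping the even and odd graded pieces.

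Next I would split into the two cases. If $h \neq \frac{c}{24}$, set $\lambda = h - \frac{c}{24} \neq 0$. Pick any nonzero homogeneous vector $v$ (say even, for one choice of global parity); then $G_0 v$ is odd and nonzero (since $G_0^2 v = \lambda v \neq 0$), so $\set{v, G_0 v}$ spans a $2$-dimensional $\ZZ_2$-graded submodule, hence all of the module by simplicity. Any such module is thus $2$-dimensional with a basis on which $L_0 = h\,\id$, $C = c\,\id$, and $G_0$ acts by $\left(\begin{smallmatrix} 0 & \lambda \\ 1 & 0 \end{smallmatrix}\right)$; rescaling the odd basis vector shows the isomorphism class is independent of the nonzero choice of $\lambda$'s normalisation, giving a unique such module $\R(h,c)$. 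To see $\R(h,c)$ is isomorphic to its parity reversal, exhibit the explicit parity-swapping isomorphism $v \mapsto G_0 v$, $G_0 v \mapsto \lambda v$ (or note that any square root of $\lambda$ rescales the interchange into an isomorphism), which is $\svir{0}^0$-linear and reverses the grading. If instead $h = \frac{c}{24}$, then $G_0^2 = 0$; on a simple module the kernel of $G_0$ is a nonzero submodule (it is nonzero because $\im G_0 \subseteq \ker G_0$ and if $G_0$ were injective it would be bijective on a finite-dimensional space, contradicting $G_0^2 = 0$ unless the space is zero), hence $G_0 = 0$ identically, and simplicity forces dimension $1$. The two global-parity choices give the two non-isomorphic modules $\R(c/24,c)$ and its parity reversal, which cannot be isomorphic as $\ZZ_2$-graded modules since any isomorphism would have to preserve the grading but their nonzero vectors have opposite parity.

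There is no serious obstacle here: the only point requiring mild care is the bookkeeping of $\ZZ_2$-gradings — ensuring that "simple" is interpreted in the graded sense, that the submodules produced ($\set{v,G_0v}$ in the first case, $\ker G_0$ in the second) are genuinely graded subspaces (they are, since $v$ can be taken homogeneous and $G_0$ is odd), and that the parity-reversal (non-)isomorphisms are stated at the level of graded modules. I would also remark in passing that these are exactly the modules from which one induces the Ramond generalised Verma modules $\RVer{h,c}$, matching the notation $\R(h,c)$ already introduced.
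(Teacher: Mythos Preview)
Your proposal is correct and is precisely the argument the paper has in mind: the paper does not spell out a proof but merely states that the classification ``follows easily from the fact that $G_0^2 = L_0 - \frac{1}{24} C$ (in the \uea{})'', and your proof is a careful unpacking of exactly this observation together with Schur's lemma and the case split on $h - \tfrac{c}{24}$.
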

\noindent For each $h,c \in \CC$ (and each choice of global parity), we may extend $\R(h,c)$ to an $\svir{0}^{\ge}$-module and then induce to obtain the Ramond Verma module
\begin{equation}\label{eq:Rverm}
  \RVer{h,c}=\Ind{\svir{0}}{\svir{0}^{\ge}}{\R(h,c)}.
\end{equation}
Ramond Verma modules with $h \neq \frac{c}{24}$ are always isomorphic to their parity-reversed counterparts, while those with $h = \frac{c}{24}$ never are. Again, this statement also holds for Ramond \hwms{} \cite{IohVer03}.

For \(h=0\), the \ns{} Verma module \(\NSVer{0,c}\) is reducible and the singular vector \(G_{-\sfrac{1}{2}}\Omega_{0,c}\) generates a proper submodule. We denote the quotient by
\begin{equation}
  \svc{c}=\frac{\NSVer{0,c}}{\ideal{G_{-\sfrac{1}{2}}\Omega_{0,c}}}.
\end{equation}
It carries the structure of an $N=1$ \vosa{}.
\begin{defn}
  The \emph{universal \(N=1\) \vosas{}} are the unique \vosas{} that are strongly generated by an even field \(T(z)\) and an odd field \(G(z)\), have defining
  \opes{}
  \begin{align}\label{eq:n1ope}
  \begin{aligned}
    T(z)T(w)&\sim \frac{c/2}{(z-w)^4}+\frac{2T(w)}{(z-w)^2}+\frac{\pd T(w)}{(z-w)},\\
    T(z)G(w)&\sim\frac{\frac32 G(w)}{(z-w)^2}+\frac{\pd G(w)}{z-w},\\ 
    G(z)G(w)&\sim \frac{2c/3}{(z-w)^3}+\frac{2 T(w)}{z-w},
  \end{aligned}
  \end{align}
  and satisfy
  no additional relations beyond
  those required by the \vosa{} axioms. These \vosas{} are parametrised by the central charge $c \in \CC$.
\end{defn}
\noindent We recall that a \vosa{} is strongly generated by a set of fields if any field of the \vosa{} may be written as a normally ordered polynomial in the fields of the generating set and their derivatives. In particular, since the fields \(T(z)\) and \(G(z)\) of the universal \(N=1\) \vosa{} \(\svc{c}\) satisfy no relations other than the \opes{} \eqref{eq:n1ope}, the set of all normally ordered monomials of derivatives of \(T(z)\) and \(G(z)\) form a basis of \(\svc{c}\) (after also imposing a \PBW{} ordering on the monomials).

The \opes{} \eqref{eq:n1ope} imply that the modes of the Laurent expansions
\begin{equation}
T(z)=\sum_{n\in\ZZ}L_nz^{-n-2}, \quad G(z)=\:\sum_{\mathclap{n\in\ZZ+\sfrac{1}{2}}}\:G_n z^{-n-\sfrac{3}{2}}
\end{equation}
satisfy the commutation relations \eqref{eq:N=1CommRels} of \(\svir{\sfrac{1}{2}}\). Indeed, 
as an \(\svir{\sfrac{1}{2}}\)-module, this universal \vosa{} is isomorphic to \(\svc{c}\) and so we will denote it by the same symbol.

\begin{prop}[Astashkevich \cite{AstStr97}] \label{prop:N=1Simple}
  The universal \(N=1\) \vosa{} \(\svc{c}\) contains a proper non-trivial ideal if and only if
  \begin{equation}
    c=c_{p_+,p_-}=\frac32-3\frac{(p_--p_+)^2}{p_+p_-},
  \end{equation}
  for some positive integers $p_+$ and $p_-$ satisfying \(p_->p_+\ge2\), \(p_--p_+\in 2\ZZ\) and \(\gcd\set{\frac{1}{2}(p_--p_+),p_-}=1\). For these central charges, 
  the maximal proper ideal is simple and it is generated by a singular vector \(\vsv{p_+,p_-}\) of conformal weight \(\frac12(p_+-1)(p_--1)\).
\end{prop}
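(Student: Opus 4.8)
The plan is to translate the assertion about ideals of the \vosa{} \(\svc{c}\) into the representation theory of \(\svir{\sfrac{1}{2}}\) and then appeal to the known structure theory of \ns{} Verma modules. First I would record the standard fact that, since \(\svc{c}\) is strongly generated by \(T(z)\) and \(G(z)\), a subspace of \(\svc{c}\) is a two-sided ideal if and only if it is an \(\svir{\sfrac{1}{2}}\)-submodule: closure under the modes \(L_n\) and \(G_n\) propagates to closure under the modes of every field by the Borcherds identity. Hence \(\svc{c}\) has a proper non-trivial ideal precisely when it is reducible as an \(\svir{\sfrac{1}{2}}\)-module; and, as \(\svc{c}\) is generated by the image of \(\Omega_{0,c}\), which sits in conformal weight \(0\), this happens exactly when \(\svc{c}\) contains a singular vector of strictly positive conformal weight. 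Under this dictionary the maximal proper ideal corresponds to the maximal proper submodule.

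Since \(\svc{c} = \NSVer{0,c}/\ideal{G_{-\sfrac{1}{2}}\Omega_{0,c}}\), the task becomes that of deciding when the maximal proper submodule of \(\NSVer{0,c}\) strictly contains \(\ideal{G_{-\sfrac{1}{2}}\Omega_{0,c}}\), equivalently when \(\NSVer{0,c}\) possesses a singular vector outside \(\ideal{G_{-\sfrac{1}{2}}\Omega_{0,c}}\). I would settle this by invoking Astashkevich's complete description of the embedding diagrams of \ns{} Verma modules \cite{AstStr97}, and then extract the reducibility locus by an elementary computation: setting \(c = \tfrac{15}{2} - 3(t + t^{-1})\), the weight \(h_{r,s}\) of the introduction becomes \(\frac{(rt - s)^2 - (t-1)^2}{8t} + \frac{1 - (-1)^{r+s}}{32}\), so \(h_{r,s}(c) = 0\) with \(r + s\) even reduces to \(rt - s = \pm(t-1)\). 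The solution \((r,s) = (1,1)\), present for every \(c\), is the one accounting for \(G_{-\sfrac{1}{2}}\Omega_{0,c}\); any further solution forces \(t\) to be a ratio of positive integers (and, by Astashkevich's analysis, it is precisely the remaining \(r+s\)-even solutions that produce singular vectors outside \(\ideal{G_{-\sfrac{1}{2}}\Omega_{0,c}}\), the \(r+s\)-odd ones being absorbed into it). Using \(p_\pm \ge 2\), \(p_- - p_+ \in 2\ZZ\) (which is exactly what forces the relevant \(r+s\) to be even) and \(\gcd\set{\tfrac{1}{2}(p_--p_+), p_-} = 1\) to normalise \(t = p_-/p_+\), one finds that these central charges are precisely the \(c_{p_+,p_-}\), and that the \(r+s\)-even solution \((r,s) \neq (1,1)\) of least grade is \((p_+ - 1,\, p_- - 1)\), of grade \(\tfrac{1}{2}(p_+-1)(p_--1)\).

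It then remains to confirm, once more from Astashkevich's classification, that for \(c = c_{p_+,p_-}\) the image in \(\svc{c}\) of this minimal-grade singular vector generates the entire maximal proper submodule and that this submodule has no further singular vector; the maximal ideal is therefore simple and is generated by the singular vector \(\vsv{p_+,p_-}\), whose conformal weight is \(\tfrac{1}{2}(p_+-1)(p_--1)\). The one central charge needing separate attention is the degenerate point \(t = 1\), namely \(c = \tfrac{3}{2}\): although \(\NSVer{0,\sfrac{3}{2}}\) then does carry singular vectors beyond \(G_{-\sfrac{1}{2}}\Omega_{0,\sfrac{3}{2}}\), they all lie inside \(\ideal{G_{-\sfrac{1}{2}}\Omega_{0,\sfrac{3}{2}}}\), so \(\svc{\sfrac{3}{2}}\) remains simple, consistently with \(\tfrac{3}{2}\) not being of the form \(c_{p_+,p_-}\).

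The main obstacle is the structure-theoretic input: Astashkevich's classification of the embedding diagrams of \ns{} Verma modules carries the real weight of the argument (in particular, the \ns{} Kac determinant alone does not suffice, since at exceptional central charges it acquires extraneous zeros not corresponding to new singular vectors). Everything else — the ideal/submodule dictionary, the arithmetic matching the reducibility locus with \(\{c_{p_+,p_-}\}\) and isolating the minimal grade, and the check at \(c = \tfrac{3}{2}\) — is routine.
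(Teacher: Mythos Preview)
The paper does not give a proof of this proposition: it is stated with attribution to Astashkevich \cite{AstStr97} and no argument is supplied. Your sketch is essentially the standard route one would take to extract the statement from that reference — translate ideals into \(\svir{\sfrac{1}{2}}\)-submodules, locate the singular vectors of \(\NSVer{0,c}\) via the Kac-type parametrisation, and then invoke the embedding diagrams to see which of them survive in the quotient \(\svc{c}\) and to read off simplicity of the maximal ideal.

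One small inaccuracy: in the \ns{} sector the Kac determinant only produces factors \(\phi_{r,s}\) with \(r+s\) even, so there are no ``\(r+s\)-odd'' singular vectors in \(\NSVer{0,c}\) to be absorbed anywhere; that parenthetical remark can simply be dropped. Otherwise your outline is sound, and you are right that the genuine content is entirely in the cited structure theory — the Kac determinant alone would not distinguish the additional zeros that actually give new singular vectors from those that do not, which is why the embedding-diagram classification is needed.
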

\noindent Note that the ordering \(p_->p_+\) is not required, but we shall assume it for later convenience.
\begin{defn}
  For $p_+$ and $p_-$ satisfying \(p_->p_+\ge2\), \(p_--p_+\in 2\ZZ\) and \(\gcd\set{\frac{1}{2}(p_--p_+),p_-}=1\), the minimal model $N=1$ \vosa{} $\MinMod{p_+}{p_-}$ is defined to be the unique simple quotient of the universal $N=1$ \vosa{} \(\svc{c_{p_+,p_-}}\) by its maximal proper ideal:
  \begin{equation}
    \MinMod{p_+}{p_-}=\frac{\svc{c_{p_+,p_-}}}{\ideal{\vsv{p_+,p_-}}}.
  \end{equation}
\end{defn}

We conclude by formalising the type of modules (and twisted modules) that we wish to classify.  As noted above, we insist that all modules possess a $\ZZ_2$-grading that is consistent with that of the superalgebra.  This is required for many conformal field-theoretic applications including those that require fusion or modular transformations (supercharacters).  Indeed, for (super)characters to exist, we must also require the finite-dimensionality of the generalised eigenspaces of $L_0$ (this, in turn, implies that any Jordan blocks for $L_0$ have finite rank).  We therefore make the following declaration:
\begin{quote}
	\emph{Unless otherwise noted, all (twisted) modules $\Mod{M}$ over a \vosa{} shall be understood to be
	\begin{itemize}
		\item finitely generated;
		\item $\ZZ_2$-graded, in the sense described above;
		\item a direct sum $\Mod{M} = \bigoplus_{n \ge h} \Mod{M}_n$ of finite-dimensional generalised $L_0$-eigenspaces $\Mod{M}_n$ of eigenvalue $n$.
	\end{itemize}
	In what follows, we shall generally only qualify modules explicitly as
        being $\ZZ_2$-graded, for brevity.}
\end{quote}
We emphasise the imposition of the lower bound $h$ on the eigenvalues of $L_0$
on $\Mod{M}$.  This serves to guarantee that a non-zero module will possess
vectors of minimal $L_0$-eigenvalue and thus will yield a non-zero module over
the corresponding Zhu algebras (see \cref{sec:twistedZhu}). The lower bound on
the conformal weight also guarantees that the action of a field on a module element is a formal Laurent series, compatible with the requirements of \opes{}.

\subsection{The free field realisation}

In this section we define the free boson and free fermion \vosas{} and embed the universal $N=1$ \vosa{} 
\(\svc{c}\) into their tensor product. Such an embedding is called a \emph{free field realisation} of $\svc{c}$.

\subsubsection{The Heisenberg algebra \(\halg\)} \label{sec:FreeBoson}

The \emph{Heisenberg algebra} is the infinite-dimensional complex Lie algebra
\begin{equation}
  \halg=\bigoplus_{n\in\ZZ}\CC a_n\oplus\CC\wun,
\end{equation}
whose Lie brackets are
\begin{equation}\label{eq:heiscomrel}
  [a_m,a_n]=m\delta_{m+n,0}\wun, \quad m,n \in \ZZ.
\end{equation}
The element \(\wun\) is central and will always be identified 
with the identity when acting on an $\halg$-module.\footnote{Provided that the central element acts non-trivially, the   generators \(a_n\) can always be rescaled so that
the central element acts as the identity on a simple module.}

The Heisenberg algebra admits the triangular decomposition
\begin{equation}
  \halg^\pm= \bigoplus_{n>0}\CC a_{\pm n},\quad \halg^0=\CC a_0\oplus\CC \wun,
\end{equation}
which we shall use to construct
Verma modules.  Writing
\(\halg^\ge=\halg^+\oplus\halg^0\) as usual, we define 
\begin{equation}
  \Fock{p}=\Ind{\halg}{\halg^\ge}{\CC \ketb{p}},\quad p\in\CC,
\end{equation}
to be the Verma module induced from the $1$-dimensional \(\halg^{\ge}\)-module characterised by
\begin{equation}
  a_0\ketb{p}=p\ketb{p},\quad\wun\ketb{p}=\ketb{p},\quad \halg^+\ketb{p}=0.
\end{equation}
Verma modules for the Heisenberg algebra are always simple and are also known as \emph{Fock spaces}.

\begin{defn}
  The \emph{Heisenberg \voas{}} \(\hvoa{\alpha_0}\), also known as the \emph{(deformed) free boson}, are the unique \voas{} that are strongly generated by a field \(a(z)\), have defining \ope{}
  \begin{equation}\label{eq:hope}
    a(z)a(w)\sim\frac{1}{(z-w)^2},
  \end{equation}
  and satisfy no additional relations beyond those required by the \voa{} axioms.  These \voas{} are parametrised by $\alpha_0\in\CC$ which determines the choice of energy-momentum tensor:
  \begin{equation} \label{eq:HeisDefT}
    T^{(\alpha_0)}(z)=\frac12\normord{a(z)^2}+\frac{\alpha_0}{2}\pd a(z),\quad \alpha_0\in\CC.
  \end{equation}
  The central charge is \(c^{(\alpha_0)}=1-3\alpha_0^2\).
\end{defn}

The \ope{} \eqref{eq:hope} implies that the modes of the Laurent expansion
\begin{equation}
  a(z)=\sum_{n\in\ZZ}a_nz^{-n-1}
\end{equation}
satisfy the commutation relations \eqref{eq:heiscomrel} of the Heisenberg algebra $\halg$.  As \(\halg\)-modules, the \(\hvoa{\alpha_0}\) are isomorphic to \(\Fock{0}\), for all \(\alpha_0\in\CC\).
Note that the choice of energy momentum tensor \eqref{eq:HeisDefT}
turns the Fock spaces \(\Fock{p}\) into Virasoro modules via 
\begin{equation}
  L_n=\frac12 \sum_{m\in\ZZ} \normord{a_m a_{n-m}}-\frac{\alpha_0}{2}(n+1)a_n.
\end{equation}
This action determines the conformal weight of the \hwv{}
\(\ketb{p}\in\Fock{p}\) to be \(h_p=\frac{1}{2}p(p-\alpha_0)\).

\subsubsection{The free fermion algebras \(\falg{\epsilon}\)} \label{sec:FreeFermion}

The \emph{free fermion algebras} are a pair of infinite-dimensional complex Lie superalgebras parametrised, as with the $N=1$ superconformal algebras, by $\epsilon \in \set{0, \frac{1}{2}}$:
\begin{equation}
  \falg{\epsilon}=\:\bigoplus_{\mathclap{n\in\ZZ+\epsilon}}\:\CC b_n\oplus\CC\wun.
\end{equation}
The Lie brackets are
\begin{equation} \label{eq:ffcomrel}
  \{b_m,b_n\}=\delta_{m+n}\wun
\end{equation}
and $\wun$ is again central and will be identified with the identity when acting on $\falg{}$-modules. As with the \(N=1\) superconformal algebra, the \(\falg{\sfrac{1}{2}}\)-modules constitute the \ns{} sector and the \(\falg{0}\)-modules the Ramond sector.

The free fermion algebra $\falg{\sfrac{1}{2}}$ admits the triangular decomposition
\begin{equation}
  \falg{\sfrac{1}{2}}^\pm=\bigoplus_{n>0} \CC b_{\pm n},
  \quad
    \falg{\sfrac{1}{2}}^0=\CC\wun ,
\end{equation}
leading, via \(\falg{\sfrac{1}{2}}^\ge=\falg{\sfrac{1}{2}}^+\oplus\falg{\sfrac{1}{2}}^0\), to the \ns{} Verma module
\begin{equation}
  \NSFock=\Ind{\falg{\sfrac{1}{2}}}{\falg{\sfrac{1}{2}}^\ge}\CC\NSket.
\end{equation}
Here, $\CC\NSket$ is the $1$-dimensional \(\falg{\sfrac{1}{2}}^\ge\)-module characterised by $\NSket$ having even parity and
\begin{equation}
  \wun\NSket=\NSket,\quad \falg{\sfrac{1}{2}}^+ \NSket=0.
\end{equation}
This Verma module, together with its parity-reversed counterpart, are simple and are the only \ns{} Verma modules.  They are called \emph{\ns{} Fock spaces}.

The algebra $\falg{0}$ similarly admits a generalised triangular decomposition
\begin{equation}
\falg{0}^\pm=\bigoplus_{n>0} \CC b_{\pm n}, \quad \falg{0}^0=\CC b_0\oplus\CC\wun
\end{equation}
in which $\acomm{b_0}{b_0} = \wun$.  There is a unique simple $\ZZ_2$-graded $\falg{0}^0$-module $\CC \Rket \oplus \CC b_0 \Rket$ on which $\wun$ acts as the identity.  In particular, this module is isomorphic to its parity-reversed counterpart.  Extending this to a module over $\falg{0}^{\ge} = \falg{0}^{+} \oplus \falg{0}^{0}$, by letting $\falg{0}^{+}$ act as zero, the corresponding generalised Verma module is
\begin{equation}
  \RFock=\Ind{\falg{0}}{\falg{0}^\ge}\brac{\CC\Rket\oplus\CC b_0\Rket}.
\end{equation}
This generalised Verma module is also unique (hence invariant under parity reversal).  It is simple as a $\ZZ_2$-graded $\falg{0}$-module and is called the \emph{Ramond Fock space}.

\begin{defn}
  The \emph{free fermion \vosa{}} \(\fvosa\) is the unique \vosa{} that is strongly generated by an
  odd parity field \(b(z)\), has
  the defining \ope{}
  \begin{equation}
    b(z)b(w)\sim\frac{1}{z-w},
  \end{equation}
  and satisfies
  no additional relations beyond
  those required by the \vosa{} axioms. The
  energy-momentum tensor is
  \begin{equation}
    T^{(\falg{})}(z)=\frac12\normord{\pd b(z) b(z)}
  \end{equation}
  and the central charge is \(c^{(\falg{})}=\frac{1}{2}\).
\end{defn}

The modes of the (generalised) Laurent expansion
\begin{equation}
b(z) = \:\sum_{\mathclap{n \in \ZZ + \epsilon}}\: b_n z^{-n-\sfrac{1}{2}}
\end{equation}
satisfy the commutation relations \eqref{eq:ffcomrel} of
the free fermion algebra \(\falg{\epsilon}\). 
As an \(\falg{\sfrac{1}{2}}\)-module, \(\fvosa\) is isomorphic to \(\NSFock\).

\subsubsection{Realising the universal \(N=1\) \vosas{}} \label{sec:N=1FFR}

Although neither \(\hvoa{\alpha_0}\) nor \(\fvosa\) contain an \(N=1\) \vosa{} individually (for instance, neither has a primary field of conformal weight $\frac{3}{2}$), their tensor product \(\ffvoa{\alpha_0}=\hvoa{\alpha_0}\otimes\fvosa\) does.

\begin{prop}\label{thm:ffr}
  Whenever \(\alpha_0^2=\frac12-\frac{c}{3}\), there exists an embedding of \vosas{} \(\svc{c}\ira \ffvoa{\alpha_0}\) that is uniquely determined by the assignment
  \begin{align} \label{eq:ffr}
      T(z)\longmapsto \frac12\normord{a(z)a(z)}+\frac{\alpha_0}{2}\pd a(z)+\frac12\normord{\pd b(z) b(z)},\quad
      G(z)\longmapsto a(z)b(z)+\alpha_0\pd b(z).
  \end{align}
  We omit the tensor product symbols for brevity, identifying $a$ with $a \otimes \wun$ and $b$ with $\wun \otimes b$.
\end{prop}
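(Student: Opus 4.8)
The plan is to verify directly that the fields on the right-hand side of \eqref{eq:ffr} satisfy the defining \opes{} \eqref{eq:n1ope} of the universal $N=1$ \vosa{} $\svc{c}$, and then invoke the universal property of $\svc{c}$ to obtain a \vosa{} homomorphism $\svc{c} \to \ffvoa{\alpha_0}$, which must be injective because $\svc{c}$ is generated by $T$ and $G$ subject to no relations beyond those forced by the \vosa{} axioms (equivalently, $\svc{c}$ has a \PBW{}-type basis of normally ordered monomials that maps to linearly independent elements). Concretely, I would write $\hat T(z) = \tfrac12 \normord{a(z)a(z)} + \tfrac{\alpha_0}{2}\pd a(z) + \tfrac12 \normord{\pd b(z) b(z)}$ and $\hat G(z) = \normord{a(z)b(z)} + \alpha_0 \pd b(z)$ (the normal ordering in $ab$ being harmless since $a$ and $b$ act on different tensor factors), and compute the three \opes{} $\hat T \hat T$, $\hat T \hat G$, $\hat G \hat G$ using Wick's theorem together with the generating \opes{} $a(z)a(w) \sim (z-w)^{-2}$ and $b(z)b(w) \sim (z-w)^{-1}$.

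The key steps, in order: (1) Recall that the bosonic piece $\tfrac12\normord{a a} + \tfrac{\alpha_0}{2}\pd a$ is the Heisenberg energy-momentum tensor $T^{(\alpha_0)}$ with central charge $1 - 3\alpha_0^2$, and the fermionic piece $\tfrac12\normord{\pd b\, b}$ is the free fermion energy-momentum tensor with central charge $\tfrac12$; since the two commute, $\hat T$ generates a Virasoro field of central charge $c = 1 - 3\alpha_0^2 + \tfrac12 = \tfrac32 - 3\alpha_0^2$, which matches $\eqref{eq:n1ope}$ precisely under the hypothesis $\alpha_0^2 = \tfrac12 - \tfrac c3$. (2) Check the $\hat T \hat G$ \ope{}: $a(z)b(z)$ has conformal weight $\tfrac12 + \tfrac12 = 1$ under $\hat T$ but $\alpha_0 \pd b$ shifts things, and a short Wick computation should show $\hat G$ is primary of weight $\tfrac32$ — this is where the coefficient $\alpha_0$ in front of $\pd b$ is pinned down, as it cancels the anomalous non-primary term coming from $\normord{ab}$. (3) Check the $\hat G \hat G$ \ope{}: expand $\hat G(z)\hat G(w) = \bigl(\normord{ab}(z) + \alpha_0 \pd b(z)\bigr)\bigl(\normord{ab}(w) + \alpha_0 \pd b(w)\bigr)$; the leading singularity comes from the double contraction of the two $\normord{ab}$ terms giving $(z-w)^{-2}\cdot(z-w)^{-1}$ contributions, and from cross-terms $\normord{ab}(z)\cdot \alpha_0\pd b(w)$; collecting these should reproduce $\tfrac{2c/3}{(z-w)^3} + \tfrac{2\hat T(w)}{z-w}$, with the $(z-w)^{-3}$ coefficient again fixing $c = \tfrac32 - 3\alpha_0^2$ and the $(z-w)^{-1}$ coefficient reproducing exactly the combination $\tfrac12\normord{aa} + \tfrac{\alpha_0}{2}\pd a + \tfrac12\normord{\pd b\, b} = \hat T$.

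The main obstacle is the bookkeeping in the $\hat G \hat G$ \ope{}, specifically the careful tracking of which contractions produce the $(z-w)^{-1}$ pole and verifying that the residue is precisely $2\hat T$ and not $2\hat T$ plus spurious normally-ordered corrections or total derivatives. The term $\normord{ab}(z)\,\normord{ab}(w)$ must be expanded by Wick's theorem keeping the single-contraction terms, which produce $\normord{a(z)a(w)}b(z)b(w)$-type expressions that Taylor-expand to $\normord{aa}(w)$ and $\pd(\ldots)$ pieces, while the fermion bilinear $b(z)b(w) \sim (z-w)^{-1} + \normord{b(z)b(w)}$ needs its subleading regular part handled to extract $\normord{\pd b\, b}(w)$; simultaneously the cross-terms with $\alpha_0 \pd b$ contribute the $\tfrac{\alpha_0}{2}\pd a$ piece and part of the central term. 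Once these combinations are seen to assemble correctly — which is a standard but delicate free-field computation — the structural argument (universal property plus freeness/\PBW{}) finishes the proof that the map is a well-defined embedding. I would present the \ope{} computations compactly, citing Wick's theorem, rather than writing out every contraction.
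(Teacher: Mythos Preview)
Your verification of the \opes{} and the appeal to the universal property of $\svc{c}$ to produce a \vosa{} homomorphism are fine, and match the paper. The gap is in your injectivity argument. You write that the map is injective ``because $\svc{c}$ has a \PBW{}-type basis of normally ordered monomials that maps to linearly independent elements'', but the second clause is precisely the statement of injectivity and you give no reason for it. Freeness of $\svc{c}$ tells you there are no extra relations among the \emph{preimages}; it says nothing about whether relations appear among the \emph{images} in $\ffvoa{\alpha_0}$.

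The issue is sharp at minimal model central charges $c=c_{p_+,p_-}$. There, $\svc{c}$ is not simple: it contains a proper ideal generated by the singular vector $\vsv{p_+,p_-}$ (\cref{prop:N=1Simple}). The kernel of your homomorphism is an ideal, so a priori it could be exactly $\ideal{\vsv{p_+,p_-}}$ --- in which case the image would be $\MinMod{p_+}{p_-}$, not $\svc{c_{p_+,p_-}}$. Ruling this out requires showing that $\vsv{p_+,p_-}$ has non-zero image in $\ffvoa{\alpha_0}$, and this is the substantive content of the proposition (indeed, the later construction of $\vsv{p_+,p_-}$ via screening operators depends on it). The paper establishes this by invoking Iohara--Koga's determination of the $\svir{1/2}$-module structure of $\Fock{0}\otimes\NSFock$: the submodule generated by the vacuum is a length-$2$ highest-weight module whose socle sits at conformal weight $>\tfrac{1}{2}$, hence is isomorphic to $\svc{c_{p_+,p_-}}$ rather than its simple quotient. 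When $\svc{c}$ is simple (generic $c$), injectivity is trivial since the kernel is an ideal and the only ideals are $0$ and $\svc{c}$; you should separate this case and then supply a genuine argument for the non-simple case.
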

\begin{proof}
  The image of $T$   is $T^{(\alpha_0)}+T^{(\falg{})}$, the standard choice of energy momentum tensor for the tensor product of \vosas{}.  By explicit computation, one verifies that the images in \eqref{eq:ffr} satisfy the \opes{} \eqref{eq:n1ope} with \(c=\frac32-3\alpha_0^2\).  The assignment \eqref{eq:ffr} thus induces a \vosa{} homomorphism. This homomorphism is obviously an embedding when $\svc{c}$ is simple.  When \(\svc{c}\) is not simple, hence \(c=c_{p_+,p_-}\) for some integers \(p_+,p_-\) satisfying \(p_->p_+\ge2\), \(p_--p_+\in2\ZZ\) and \(\gcd\{p_+,\frac{p_--p_+}{2}\}=1\) (\cref{prop:N=1Simple}), this follows easily from a result of Iohara and Koga \cite[Theorem~4.1]{IohFoc03}.  In detail, \(\ffvoa{\alpha_0}\) is isomorphic to \(\Fock{0}\otimes\NSFock\) as a \(\halg\otimes\falg{\sfrac{1}{2}}\)-module and \eqref{eq:ffr} endows the latter with the structure of a \(\svir{\sfrac{1}{2}}\)-module.  Iohara and Koga determined this structure, showing in particular that the $\svc{c}$-submodule generated by $\ketb{0} \otimes \NSket$ is, for $c=c_{p_+,p_-}$, a length $2$ \hwm{} whose socle is generated by a \sv{} of conformal weight greater than $\frac{1}{2}$.  This submodule is clearly the image of the homomorphism \eqref{eq:ffr} and standard \hw{} theory proves that it is isomorphic to \(\svc{c_{p_+,p_-}}\).  Thus, \eqref{eq:ffr} induces an embedding and the proof is complete.
\end{proof}

\begin{rmk}
The non-trivial part of this proof is to show that, for minimal model central charges, the image of the \sv{} $\vsv{p_+,p_-} \in \svc{c_{p_+,p_-}}$ is non-zero in the free field realisation.  This non-vanishing is crucial for what follows as we shall construct this image, rather than $\vsv{p_+,p_-}$ itself, and use it to classify the modules of the minimal model.
\end{rmk}

\noindent Now that we have established that \(\svc{c}\) embeds into \(\ffvoa{\alpha_0}\), we will identify the fields of \(\svc{c}\) with their images under \eqref{eq:ffr} in \(\ffvoa{\alpha_0}\).  From here on, we will assume that $\alpha_0$ determines the central charge $c=\frac{3}{2}-3\alpha_0^2$.

The tensor product modules
\begin{equation}
  \bNSFock{p}=\Fock{p}\otimes\NSFock,\quad
  \bRFock{p}=\Fock{p}\otimes\RFock\,; \quad p\in\CC,
\end{equation}
are \(\ffvoa{\alpha_0}\)-modules and so are also \(\svc{c}\)-modules. Their \hwvs{}, denoted by
\begin{equation}
\NSffket{p}=\ketb{p}\otimes\NSket,\quad
\Rffket{p}=\ketb{p}\otimes\Rket,
\end{equation}
have conformal weights
\begin{equation} \label{eq:DefConfWts}
  h_p^{\NS}=\frac{1}{2}p(p-\alpha_0),\quad
  h_p^{\R}=\frac{1}{2}p(p-\alpha_0)+\frac{1}{16},
\end{equation}
respectively.

\subsection{Screening operators} \label{sec:Screenings}

A special feature of the Heisenberg algebra is that it allows one to define so-called \emph{vertex operators}.  These, in turn, allow one to construct screening operators for the (non-super) Virasoro minimal models \cite{DotScreen84,TsuFoc86}.  This construction was generalised to the \ns{} and Ramond algebras in \cite{BerSup85,KatoMatsu88,MusFin88}.  We summarise this generalisation here, following \cite{IohFoc03}.

Extend the Heisenberg algebra \(\halg\) by a generator \(\hat a\) satisfying
the relations
\begin{equation}
  \comm{a_m}{\hat a}=\delta_{m,0}, \quad \comm{\hat a}{\wun} = 0.
\end{equation}
A \emph{vertex operator} is the operator-valued formal power series
\begin{equation}
  \vop{p}{z}= \ee^{p\hat a}z^{p a_0}
  \prod_{m\ge1}\left[\exp\left(p\frac{a_{-m}}{m}z^m\right)
    \exp\left(-p\frac{a_{m}}{m}z^{-m}\right)\right],
\end{equation}
which defines a linear map
\begin{equation}
  \vop{p}{z}\colon\Fock{q}\rightarrow \Fock{p+q}\powser{z,z^{-1}}\,z^{pq},
\end{equation}
after identifying \(\ee^{p\hat a}\ketb{q}\) with \(\ketb{p+q}\).  Note that we have grouped the factors of the product such that the exponentials within the square brackets commute for distinct values of $m$.  For later use, we record that the composition of \(k\) vertex operators is given by
\begin{align}\label{eq:vertcomp}
  \vop{p_1}{z_1}\cdots \vop{p_k}{z_k}&=\ee^{\hat a\sum_{i=1}^k p_i}
  \prod_{\mathclap{1\le i<j\le k}} \:
  (z_i-z_j)^{p_ip_j}\cdot
  \prod_{i=1}^k z_i^{p_i a_0}\cdot
  \prod_{m\ge1}\sqbrac*{\exp\brac[\bigg]{\frac{a_{-m}}{m}\sum_{i=1}^k p_i z_i^m}
    \exp\brac[\bigg]{-\frac{a_{m}}{m}\sum_{i=1}^k p_i z_i^{-m}}}.
\end{align}

A standard computation reveals that the
vertex operators \(\vop{p}{z}\) are primary fields of the free boson \voas{} $\hvoa{\alpha_0}$, of Heisenberg weight \(p\) and conformal weight \(h_p=\frac{1}{2}p(p-\alpha_0)\):
\begin{equation}
  a(z)\vop{p}{w}\sim \frac{p\vop{p}{w}}{z-w},\quad
  T^{(\alpha_0)}(z)\vop{p}{w}\sim \frac{h_p\vop{p}{w}}{(z-w)^2}
  +\frac{\pd\vop{p}{w}}{z-w}.
\end{equation}
The vertex operators of immediate interest here are those with $h_p = \frac{1}{2}$ since they are the building blocks of the screening operators introduced below.
This quadratic equation has solutions
\begin{equation}
  p=\alpha_\pm=\frac{1}{2} \left(\alpha_0\pm\sqrt{\alpha_0^2+4}\right),
\end{equation}
which satisfy $\alpha_+\alpha_-=-1$ and
$\alpha_++\alpha_-=\alpha_0$.

\begin{rmk}
	For the central charges $c_{p_+,p_-}$ of the $N=1$ minimal models, we may take the free field data to be
	\begin{equation} \label{eq:DefAlphas}
		\alpha_+ = \sqrt{\frac{p_-}{p_+}}, \quad \alpha_- = -\sqrt{\frac{p_+}{p_-}}, \quad \alpha_0 = \frac{p_- - p_+}{\sqrt{p_+ p_-}}.
	\end{equation}
	Of course, we may also swap $p_+$ and $p_-$ in these formulae.
\end{rmk}

\begin{defn}
A \emph{screening field} for a free field realisation $\VOA{V} \ira
  \VOA{W}$ is a field of the free field \vosa{} $\VOA{W}$, that is, a
    field corresponding to a vector in a module over \(\VOA{W}\), whose \opes{} with
  the fields of $\VOA{V}$ have singular parts that are total derivatives.  It
  suffices to check this for the generating fields of $\VOA{V}$.
\end{defn}
\begin{prop}
  Both
  \begin{equation} \label{eq:DefScr}
    \scrf{+}{z}=b(z)\vop{\alpha_+}{z}\quad\text{and}\quad
    \scrf{-}{z}=b(z)\vop{\alpha_-}{z}
  \end{equation}
  are screening fields for the free field realisation \eqref{eq:ffr} of \(\svc{c}\) in \(\ffvoa{\alpha_0}\):
  \begin{align}
    T(z)\scrf{\pm}{w}\sim \pd_w\frac{\scrf{\pm}{w}}{z-w},\quad
    G(z)\scrf{\pm}{w}\sim\frac{1}{\alpha_\pm} \pd_w \frac{\vop{\alpha_{\pm}}{w}}{z-w}.
  \end{align}
\end{prop}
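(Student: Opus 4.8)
The plan is to compute the singular parts of the operator product expansions of $\scrf{\pm}{w} = b(w)\vop{\alpha_\pm}{w}$ with the two generating fields $T(z)$ and $G(z)$ of $\svc{c}$, as given by the free field realisation \eqref{eq:ffr}; by the observation in the definition of screening field, checking these two generators suffices. The guiding idea is that both building blocks are primary fields living in commuting tensor factors: $b(w)$ is a Virasoro primary of weight $\tfrac12$ for $T^{(\falg{})}$, and $\vop{\alpha_\pm}{w}$ is a Heisenberg primary of weight $h_{\alpha_\pm}=\tfrac12\alpha_\pm(\alpha_\pm-\alpha_0)$ for $T^{(\alpha_0)}$. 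The relations $\alpha_+\alpha_-=-1$, $\alpha_++\alpha_-=\alpha_0$ are exactly the statement that $\alpha_\pm$ solves $p^2-\alpha_0 p-1=0$, i.e. $h_{\alpha_\pm}=\tfrac12$; this one quadratic identity is what makes everything work.

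For $T(z)=T^{(\alpha_0)}(z)+T^{(\falg{})}(z)$: since $b(w)$ and $\vop{\alpha_\pm}{w}$ sit in commuting factors, the normally ordered product $\scrf{\pm}{w}$ is a primary field for $T$ of conformal weight $h_{\alpha_\pm}+\tfrac12=1$ (no higher-order poles appear because each factor sees only its own energy-momentum tensor and is primary for it). The standard weight-one primary OPE then gives
\begin{equation*}
  T(z)\scrf{\pm}{w}\sim\frac{\scrf{\pm}{w}}{(z-w)^2}+\frac{\pd\scrf{\pm}{w}}{z-w}=\pd_w\frac{\scrf{\pm}{w}}{z-w},
\end{equation*}
which is manifestly a total derivative.

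For $G(z)=a(z)b(z)+\alpha_0\pd b(z)$ (no normal ordering is needed between the commuting factors $a$ and $b$), I would expand $G(z)\scrf{\pm}{w}$ by Wick's theorem, the only nontrivial contractions being $a(z)\vop{\alpha_\pm}{w}\sim\tfrac{\alpha_\pm\vop{\alpha_\pm}{w}}{z-w}$ and $b(z)b(w)\sim\tfrac{1}{z-w}$. From $a(z)b(z)\cdot b(w)\vop{\alpha_\pm}{w}$ the fully uncontracted term is regular; the term contracting only $a$ with $\vop{\alpha_\pm}{}$ is also regular, since $\normord{b(w)^2}=0$ for a single fermion kills the would-be simple pole; the term contracting only $b(z)$ with $b(w)$ yields, after Taylor expanding in $z-w$, the simple pole $\tfrac{\normord{a(w)\vop{\alpha_\pm}{w}}}{z-w}$; and the double contraction gives $\tfrac{\alpha_\pm\vop{\alpha_\pm}{w}}{(z-w)^2}$. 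Contracting $\alpha_0\pd b(z)$ with $b(w)$ adds $-\tfrac{\alpha_0\vop{\alpha_\pm}{w}}{(z-w)^2}$. Hence
\begin{equation*}
  G(z)\scrf{\pm}{w}\sim\frac{(\alpha_\pm-\alpha_0)\vop{\alpha_\pm}{w}}{(z-w)^2}+\frac{\normord{a(w)\vop{\alpha_\pm}{w}}}{z-w}.
\end{equation*}
The elementary identity $\pd_w\vop{\alpha_\pm}{w}=\alpha_\pm\normord{a(w)\vop{\alpha_\pm}{w}}$ (immediate from the definition of $\vop{p}{z}$) together with $\alpha_\pm-\alpha_0=\alpha_\pm^{-1}$ (again $h_{\alpha_\pm}=\tfrac12$) rewrites the right-hand side as $\tfrac{1}{\alpha_\pm}\bigl[\tfrac{\vop{\alpha_\pm}{w}}{(z-w)^2}+\tfrac{\pd_w\vop{\alpha_\pm}{w}}{z-w}\bigr]=\tfrac{1}{\alpha_\pm}\pd_w\tfrac{\vop{\alpha_\pm}{w}}{z-w}$, which is the claimed formula and is again a total derivative.

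The one genuinely delicate step is the Wick expansion of $G(z)\scrf{\pm}{w}$: one must track the fermion signs carefully, recognise that $\normord{b(w)^2}=0$ so that the $a$-only contraction contributes nothing singular, and Taylor-expand the partially contracted normal-ordered products in $z-w$ to isolate the true singular terms. Everything else is a formal consequence of the primarity of the factors and of the defining quadratic relation $\alpha_\pm^2-\alpha_0\alpha_\pm=1$.
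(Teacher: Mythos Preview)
Your proof is correct. The paper states this proposition without proof, treating it as a standard verification; your write-up supplies exactly the direct Wick computation that the paper omits, and all steps check out (in particular the use of $\normord{b(w)^2}=0$ to kill the $a$-only contraction and the identity $\alpha_\pm-\alpha_0=-\alpha_\mp=\alpha_\pm^{-1}$).
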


The appeal of screening fields for a given free field realisation $\VOA{V} \ira \VOA{W}$ is that their residues, when well defined, commute with the action of $\VOA{V}$.  These residues, referred to as \emph{screening operators}, therefore define $\VOA{V}$-module homomorphisms. In particular, they map singular vectors to singular vectors (or zero) and are thus a convenient tool for explicitly constructing singular vectors of $\VOA{V}$-modules.

Consider the following composition of \(k\) screening fields \(\scrf{\pm}{z}\):
\begin{multline} \label{eq:ScrComp'}
  \scrf{\pm}{z_1}\cdots\scrf{\pm}{z_k}=
  b(z_1)\cdots b(z_k)\:
 \ee^{k \alpha_\pm \hat a}
 \prod_{\mathclap{1\le i< j\le k}}\:(z_i-z_j)^{\alpha_\pm^2}\cdot\prod_{i=1}^k z_i^{\alpha_\pm a_0}\\
 \cdot\prod_{m\ge1}\sqbrac*{\exp\brac[\bigg]{\alpha_\pm\frac{a_{-m}}{m}\sum_{i=1}^k z_i^m}
    \exp\brac[\bigg]{-\alpha_\pm\frac{a_{m}}{m}\sum_{i=1}^k z_i^{-m}}}.
\end{multline}
This differs from the analogous compositions required for non-supersymmetric \voas{} \cite{RidJac14,RidSlJac15} in that permuting fermions is skew-symmetric, rather than symmetric, which is problematic if one intends to apply symmetric function techniques.
This can easily be remedied by factoring out the Vandermonde
determinant \(\van{z}=\prod_{1\le i<j\le k}(z_i-z_j)\): 
\begin{align}
  \prod_{\mathclap{1\le i< j\le k}}\:(z_i-z_j)^{\alpha_\pm^2}=
  \van{z}\prod_{\mathclap{1\le i< j\le k}}\:(z_i-z_j)^{\alpha_\pm^2-1}
  =\van{z}\prod_{\mathclap{1\le i\neq j\le k}}\:(z_i-z_j)^{\sfrac{(\alpha_\pm^2-1)}{2}}
\end{align}
(we have suppressed a complex phase in the second equality).  Noting that \((\alpha_\pm^2-1)/2=-\alpha_0/2\alpha_\mp\), this allows us to rewrite \eqref{eq:ScrComp'} in the form
\begin{multline}\label{eq:ScrComp}
  \scrf{\pm}{z_1}\cdots\scrf{\pm}{z_k}=
  \ee^{k \alpha_\pm \hat a}
  \prod_{\mathclap{1\le i\neq j\le k}}\
  \:\brac[\Big]{1-\frac{z_i}{z_j}}^{-\sfrac{\alpha_0}{2\alpha_\mp}}
  \cdot
  \prod_{i=1}^kz_i^{\alpha_\pm a_0+(k-1)\sfrac{(\alpha_\pm^2-1)}{2}}\\
   \cdot \van{z} b(z_1)\cdots b(z_k)
     \prod_{m\ge1}\sqbrac*{\exp\brac[\bigg]{\alpha_{\pm}\frac{a_{-m}}{m}\sum_{i=1}^k z_i^m}
    \exp\brac[\bigg]{-\alpha_{\pm}\frac{a_{m}}{m}\sum_{i=1}^k z_i^{-m}}},
\end{multline}
where the skew-symmetry of the fermion fields is now countered by that of
$\van{z}$.

To define screening operators as integrals of compositions of screening
fields, there need to exist cycles over which to integrate. The obstruction
to the existence of such cycles lies in the multivaluedness of the second product of the \rhs{} of \eqref{eq:ScrComp}. If the exponent \(\alpha_\pm a_0
+(k-1)(\alpha_\pm^2-1)/2\) evaluates to an integer, when we act on a \ns{} free
field module \(\bNSFock{q}\) (so $a_0$ is replaced by $q$), 
then there exists such a cycle \(\Gamma(k,\alpha_0)\), generically unique in homology (up to normalisation) and
constructed in \cite{TsuFoc86}. These cycles are homologically equivalent to the cycles over which one integrates in the theory of
symmetric polynomials to define inner products --- see \cite[Sec.~3]{TsuExt13} for details.
The actual construction of the cycles \(\cyc{k}{\alpha_0}\) is 
rather subtle and we refer the interested reader to \cite{TsuFoc86} for the complete picture.
                                
We mention that when acting on a Ramond free field module \(\bRFock{q}\), the
  cycles \(\Gamma(k,\alpha_0)\) exist when \(\alpha_\pm q
+(k-1)(\alpha_\pm^2-1)/2\) evaluates to a half integer (to compensate for the
half integer exponents of the free fermion fields).  However, screening
operators between Ramond free field modules shall not concern us in what follows.

\begin{defn}\label{sec:scrdef}
	For \(\mu=\frac{1}{2}(1-k)\alpha_\pm+\frac{1}{2}(1-k-2\ell)\alpha_\mp\),
	where \(k\) and \(\ell\) are integers and \(k>0\) is positive,
	the screening operator \(\scrs{\pm}{k}\colon \bNSFock{\mu} \rightarrow \bNSFock{\mu+k\alpha_\pm}\) is
	well defined as the $\svc{c}$-module   homomorphism defined by
	\begin{equation}
		\scrs{\pm}{k}=\int_{\cyc{k}{\alpha_0}}\scrf{\pm}{z_1}\cdots\scrf{\pm}{z_k}\:
		\dd z_1\cdots \dd z_k,
	\end{equation}
	meaning that the cycle \(\cyc{k}{\alpha_0}\) exists.		We choose to normalise this cycle	such that 
	\begin{equation}\label{eq:intnorm}
		\int_{\cyc{k}{\alpha_0}} \prod_{1\le i\neq j\le k} \brac[\Big]{1-\frac{z_i}{z_j}}^{-\sfrac{\alpha_0}{2\alpha_\mp}} \:
		\frac{\dd z_1\cdots \dd z_k}{z_1\cdots z_k} =1.
	\end{equation}
\end{defn}
\noindent We shall lighten notation in what follows by suppressing the cycle \(\cyc{k}{\alpha_0}\) in all integrals.
\begin{rmk}
  As previously stated, the two factors \(\van{z} b(z_1)\cdots b(z_k)\) and
  \begin{equation}
    \prod_{m\ge1}\sqbrac*{\exp\brac[\bigg]{\alpha_{\pm}\frac{a_{-m}}{m}\sum_{i=1}^k z_i^m}
      \exp\brac[\bigg]{-\alpha_{\pm}\frac{a_{m}}{m}\sum_{i=1}^k z_i^{-m}}}
  \end{equation}
  that appear on the \rhs{} of \eqref{eq:ScrComp} are both invariant under permuting the \(z_i\).  The action of the screening operators \(\scrs{\pm}{k}\) can thus be evaluated using the well studied
  family of inner products of symmetric polynomials defined by
  \begin{equation}\label{eq:intprod}
    \jprod{f,g}{k}{t}= \int
    \prod_{1\le i\neq j\le k}
    \brac[\Big]{1-\frac{z_i}{z_j}}^{1/t}
    f(z_1^{-1},\dots,z_k^{-1})g(z_1,\dots,z_k)\:
    \frac{\dd z_1\cdots \dd z_k}{z_1\cdots z_k},
  \end{equation}
  where \(f\) and \(g\) are symmetric polynomials and \(t\in\CC \setminus \set{0}\).
  The Jack symmetric polynomials \(\fjack{\lambda}{t}{z}\) are
  orthogonal with respect to the inner product labelled by $t$
  --- see \cref{sec:Jack} for more details on the role that Jack
  polynomials will play here.
\end{rmk}

\subsection{Correlation functions} \label{sec:Correlators}

In this section, we review some standard results about correlation functions for free bosons and fermions that will be important in later sections.

\subsubsection{Heisenberg correlation functions} \label{sec:CorrBoson}

Let \(\Fock{p}^*\) be the graded dual of the \hw{} $\halg$-module \(\Fock{p}\). Then, \(\Fock{p}^*\) is a \lw{} right $\halg$-module generated by a \lwv{} $\brab{p}$ satisfying 
\begin{equation}
  \braketb{p}{p} = 1, \quad \brab{p} \halg^- = 0.
\end{equation}
It is convenient to extend the domain of the functionals in \(\Fock{p}^*\) to all Fock spaces \(\Fock{q}\), \(q\in\CC\), but to have them act trivially unless \(q = p\).

\begin{defn}
  Let \(B\) be any combination of normally ordered products of free bosons \(a(z)\), vertex operators \(\vop{p}{z}\) and their derivatives.  The \emph{free boson correlation function} in $\Fock{p}$ is then defined to be $\bracketb{p}{B}{p}$.
\end{defn}
\begin{prop}
  The correlation function of \(k\) vertex operators is given by
  \begin{equation}
    \bracketb{p}{\vop{p_1}{z_1}\cdots \vop{p_k}{z_k}}{p}=\delta_{p_1+\cdots+p_k,0}
    \prod_{\mathclap{1\le i<j\le k}}\:(z_i-z_j)^{p_i p_j}\cdot\prod_{i=1}^kz_i^{pp_i}.
  \end{equation}
\end{prop}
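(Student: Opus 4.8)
\emph{Proof strategy.} The plan is to read the correlation function straight off the composition formula~\eqref{eq:vertcomp}, which was recorded precisely for this purpose. That identity presents $\vop{p_1}{z_1}\cdots\vop{p_k}{z_k}$ as a product of four pieces: the scalar $\prod_{1\le i<j\le k}(z_i-z_j)^{p_ip_j}$, the charge-shifting operator $\ee^{\hat a\sum_{i=1}^k p_i}$, the diagonal operator $\prod_{i=1}^k z_i^{p_i a_0}$, and a normally ordered exponential in the modes $a_{\pm m}$, $m\ge1$. First I would sandwich~\eqref{eq:vertcomp} between $\brab{p}$ and $\ketb{p}$, pulling the scalar $\prod_{1\le i<j\le k}(z_i-z_j)^{p_ip_j}$ out of the bracket by linearity.

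Next I would evaluate the operator part on $\ketb{p}$, working from the right. Each annihilation exponential $\exp(-\alpha\, a_m/m)$ occurring for $m\ge1$ (with $\alpha=\sum_i p_i z_i^{-m}$) collapses to its constant term on $\ketb{p}$ since $a_m\ketb{p}=0$; dually, once we pair with $\brab{p}$ on the left, every creation operator $a_{-m}$ with $m\ge1$ is annihilated because $\brab{p}\,\halg^-=0$, so each creation exponential also contributes only $1$. Since $a_0\ketb{p}=p\ketb{p}$, the factor $\prod_i z_i^{p_i a_0}$ becomes the scalar $\prod_i z_i^{p p_i}$. Hence $\brab{p}\cdot(\text{operator part})\cdot\ketb{p}$ reduces, up to the charge shift, to $\prod_i z_i^{p p_i}\braketb{p}{p}=\prod_i z_i^{p p_i}$.

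Finally I would account for $\ee^{\hat a\sum_i p_i}$: it maps $\Fock{p}$ to $\Fock{p+\sum_i p_i}$, and since the functional $\brab{p}$ was declared to vanish on every $\Fock{q}$ with $q\neq p$, the entire bracket is zero unless $\sum_i p_i=0$, which produces the factor $\delta_{p_1+\cdots+p_k,0}$; on its support $\ee^{\hat a\cdot 0}=\id$, so nothing else changes. Assembling the three contributions yields the claimed formula. I do not expect a genuine obstacle here: the only point meriting a remark is that the a priori infinite product over $m\ge1$ is harmless, since it truncates to finitely many nontrivial factors once it acts on $\ketb{p}$ and is paired against $\brab{p}$, and the multivaluedness of $(z_i-z_j)^{p_ip_j}$ and $z_i^{pp_i}$ is treated formally, in line with the conventions already fixed for $\vop{p}{z}$. (One could alternatively invoke Wick's theorem for the free boson, but reading the answer off~\eqref{eq:vertcomp} is the shortest route given what has been established.)
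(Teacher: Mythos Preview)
Your proposal is correct and follows exactly the paper's approach: apply the composition formula~\eqref{eq:vertcomp}, let the annihilation and creation exponentials collapse against $\ketb{p}$ and $\brab{p}$ respectively, read off $\prod_i z_i^{pp_i}$ from $a_0\ketb{p}=p\ketb{p}$, and obtain the Kronecker delta from the charge shift $\ee^{\hat a\sum_i p_i}$ via $\braketb{p}{p+\sum_i p_i}=\delta_{\sum_i p_i,0}$. The paper's proof is simply a terser version of yours, compressing all of this into the single observation that the matrix element equals $\prod_{i<j}(z_i-z_j)^{p_ip_j}\prod_i z_i^{pp_i}\,\braketb{p}{p+\sum_i p_i}$.
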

\begin{proof}
  This follows directly from the vertex operator composition formula \eqref{eq:vertcomp},
  \begin{align}
    \bracketb{p}{\vop{p_1}{z_1}\cdots \vop{p_k}{z_k}}{p}
    &=\prod_{\mathclap{1\le i<j\le k}}\:(z_i-z_j)^{p_i p_j}\cdot\prod_{i=1}^kz_i^{pp_i}\cdot\braketb{p}{p+p_1+\cdots+p_k},
  \end{align}
  and noting that $\braketb{p}{p+p_1+\cdots+p_k} = \delta_{p_1+\cdots+p_k,0}$.
\end{proof}

\subsubsection{Free fermion correlation functions} \label{sec:CorrFermion}

Let \(\brac*{\NSFock}^*\) and \(\brac*{\RFock}^*\) be the graded duals of \(\NSFock\) and \(\RFock\), respectively, and let \(\NSbra\) and \(\Rbra\) be the respective dual \lwvs{}:
\begin{equation}
  \braketb{\NS}{\NS} = \braketb{\R}{\R} = 1, \quad
  \NSbra \falg{\sfrac{1}{2}}^-  =0, \quad
  \Rbra \falg{0}^- =0.
\end{equation}
\begin{defn}
  Let \(F\) be any combination of normally ordered products of free fermions \(b(z)\) and their derivatives.  The \emph{free fermion correlation functions} $\NScorrfn{F}$ and $\Rcorrfn{F}$ are then defined to be
  \begin{equation}
    \NScorrfn{F}=\bracketb{\mathrm{NS}}{F}{\mathrm{NS}},\quad
    \Rcorrfn{F}=\bracketb{\mathrm{R}}{F}{\mathrm{R}}.
  \end{equation}
\end{defn}

Often, free fermion correlation functions can be conveniently expressed in terms of pfaffians. The determinant of a skew-symmetric matrix \(A = -A^\intercal\) can always be written as the square of a polynomial in the coefficients of \(A\). This polynomial is, up to an unimportant sign ambiguity, the pfaffian \(\pf(A)\) of \(A\). For later convenience, we give two equivalent definitions.
\begin{defn} \label{def:pf}
  Let \(A\) be a \(2n\times2n\) skew-symmetric matrix, so that \(A\) is uniquely determined by its upper-triangular entries \(A_{i,j}\), $i<j$. 
  We shall write $A = (A_{i,j})_{1 \le i < j \le 2n}$ to indicate a skew-symmetric matrix $A$ with given upper-triangular entries.
  \begin{enumerate}
  \item Define the pfaffian of the \(0\times0\) matrix to be $1$.  The pfaffian of $A$ can then be defined recursively by
    \begin{equation}\label{eq:recpf}
      \pf(A)=\sum_{\substack{j=1\\j\neq i}}^{2n}(-1)^{i+j+\theta(j-i)}
      A_{i,j}\pf(A_{\hat{\imath},\hat{\jmath}}),
    \end{equation}
    where the row index $i$ may be chosen arbitrarily, \(A_{\hat{\imath},\hat{\jmath}}\) denotes the matrix \(A\) with the
    \(i\)-th and \(j\)-th rows and columns removed, and
    \begin{equation}
      \theta(x)=
      \begin{cases*}
        1& if \(x>0\), \\
        0& if \(x<0\)
      \end{cases*}
    \end{equation}
    is the Heaviside step function.  In particular, $i=1$ gives the simplified formula
    \begin{equation} \label{eq:recpf'}
      \pf(A) = \sum_{j=2}^{2n} (-1)^j A_{1,j} \pf(A_{\hat{1}, \hat{\jmath}}).
    \end{equation}
  \item Alternatively, an explicit definition of the pfaffian is
  \begin{equation}\label{eq:explpf}
    \pf(A)=\sum_{\sigma\in\Pi}\sgn(\sigma)\prod_{i=1}^n A_{\sigma(2i-1),\sigma(2i)},
  \end{equation}
  where \(\Pi\) is the set of all permutations of the set \(\{1,\dots,2n\}\) that, in Cauchy notation, can be written in the form
  \begin{equation}
    \sigma=
    \begin{bmatrix}
      1&2&3&4&\dots&2n-1&2n\\
      i_1&j_1&i_2&j_2&\dots&i_n&j_n
    \end{bmatrix},\quad i_1<i_2<\dots<i_n,\ i_k<j_k.
  \end{equation}
  Note that this implies that \(i_1=1\).
  \end{enumerate}
\end{defn}

\begin{prop}
  \leavevmode
  \begin{enumerate}
  \item\label{item:odd} The correlation function of an odd number of free fermions is zero in both the Neveu-Schwarz and Ramond sectors.
  \item\label{item:NSeven} In the Neveu-Schwarz sector, the correlation function of \(2n\) free fermions is
    \begin{align}\label{eq:NSpf}
      \NScorrfn{b(z_1)\cdots b(z_{2n})}=\pf\brac*{\frac{1}{z_i-z_j}}_{1\le i<j\le 2n}.
    \end{align}
  \item\label{item:Reven} In the Ramond sector, the correlation function of \(2n\) free fermions is 
    \begin{align}\label{eq:Rpf}
      \Rcorrfn{b(z_1)\cdots b(z_{2n})}=2^{-n}\prod_{i=1}^{2n}z_i^{-\sfrac12}\cdot
      \pf\brac*{\frac{z_i+z_j}{z_i-z_j}}_{1\le i<j\le 2n}.
    \end{align}
  \end{enumerate}
\end{prop}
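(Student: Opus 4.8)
The correlation function of an odd number of fermions vanishes by a parity (Z_2-grading) argument. The dual vacuum $\bra{\NS}$ (resp. $\bra{\R}$) is even, as is $\ket{\NS}$ (resp. $\ket{\R}$); but a product of an odd number of odd fields $b(z_i)$ is odd, so pairing it with two even vectors gives zero. More concretely, in the NS sector one can insert the parity involution, or simply note that $b(z)$ shifts the fermion number parity and $\NSbra$ only pairs nontrivially with the even part of $\NSFock$. In the Ramond sector the ground space $\CC\Rket\oplus\CC b_0\Rket$ is two-dimensional with one even and one odd vector; $\Rbra$ is even and annihilates $b_0\Rket$ as well as $\falg{0}^-$, so again an odd-parity operator sandwiched between $\Rbra$ and $\Rket$ gives zero. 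I would write this out in one or two sentences.

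\textbf{The plan for parts (2) and (3):} Both identities are classical Wick-theorem computations, but I would structure the proof around the recursive definition \eqref{eq:recpf'} of the pfaffian so that both cases are handled uniformly by induction on $n$. The base case $n=0$ is $\braketb{\NS}{\NS}=1$, matching $\pf$ of the empty matrix (and similarly, after the prefactor, in the Ramond case). For the inductive step, I would move $b(z_1)$ to the right past $b(z_2),\dots,b(z_{2n})$ using the OPE $b(z)b(w)\sim \frac{1}{z-w}$ in the NS sector. Each contraction of $b(z_1)$ with $b(z_j)$ produces the \emph{singular} term $\frac{1}{z_1-z_j}$ (with sign $(-1)^{j}$ from moving $b(z_1)$ past the $j-2$ fermions $b(z_2),\dots,b(z_{j-1})$) times the correlator of the remaining $2n-2$ fermions, while the fully normally ordered term $\normord{b(z_1)\cdots b(z_{2n})}$ has vanishing expectation value since $b_{n}\NSket=0$ for $n>0$ and $\NSbra b_{-n}=0$ for $n>0$, i.e.\ positive modes annihilate to the right and negative modes to the left, killing any completely-contracted-free monomial. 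Collecting terms gives exactly the recursion \eqref{eq:recpf'} for $\pf\brac*{\frac{1}{z_i-z_j}}$, and the induction closes. For the Ramond case the only modification is that the two-point function is $\Rcorrfn{b(z_1)b(z_2)} = \frac12(z_1 z_2)^{-1/2}\frac{z_1+z_2}{z_1-z_2}$, which one computes directly from the mode expansion $b(z)=\sum_{n\in\ZZ}b_nz^{-n-1/2}$ together with $\{b_m,b_n\}=\delta_{m+n,0}$ and $\Rbra b_0 = 0 = b_0\Rket$ acting appropriately (the $b_0$ contributes only through $b_0^2=\frac12$); pulling out the overall $2^{-n}\prod z_i^{-1/2}$ reduces the recursion for $\Rcorrfn{b(z_1)\cdots b(z_{2n})}$ to precisely \eqref{eq:recpf'} for $\pf\brac*{\frac{z_i+z_j}{z_i-z_j}}$.

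\textbf{Alternative for (3):} Rather than redoing the induction, one can deduce the Ramond formula from the Neveu-Schwarz formula via the standard boson-fermion/spectral-flow type identification, or more elementarily by bosonising a single free fermion $b(z)\leftrightarrow \normord{\ee^{\pm\phi(z)}}$ and observing that the Ramond correlator differs from the NS one by the insertion of spin fields at $0$ and $\infty$; this contributes the $\prod_i z_i^{-1/2}$ factor and replaces $\frac{1}{z_i-z_j}$ by $\frac{\sqrt{z_i/z_j}+\sqrt{z_j/z_i}}{z_i-z_j}=\frac{z_i+z_j}{\sqrt{z_iz_j}\,(z_i-z_j)}$ inside the pfaffian, whence the $2^{-n}$ and the stated formula. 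I would probably relegate this to a remark and give the direct inductive proof in the text.

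\textbf{Expected main obstacle:} None of this is deep — the content is bookkeeping. The one place to be careful is the sign and the precise identification of the "singular part" in the Ramond two-point function: one must check that the half-integer mode expansion, combined with the zero mode $b_0$ and the choice $\Rbra b_0 = 0$, genuinely produces $\frac12(z_1z_2)^{-1/2}\frac{z_1+z_2}{z_1-z_2}$ and not, say, its antisymmetrisation or a version with the roles of $z_1,z_2$ swapped; getting the branch of $z^{1/2}$ and the overall normalisation consistent with the pfaffian's antisymmetry is the only genuine check. After that, the induction via \eqref{eq:recpf'} is automatic in both sectors.
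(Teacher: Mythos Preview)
Your treatment of parts \ref{item:odd} and \ref{item:NSeven} is fine and matches the paper's argument: a parity count for the odd case, and for \ref{item:NSeven} the expansion of $b(z_1)$ in modes, commutation to the right, and identification with the pfaffian recursion \eqref{eq:recpf'}.

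The gap is in part \ref{item:Reven}. First, a factual error: you write $b_0\Rket=0$, but this is false --- $b_0\Rket$ is a non-zero odd vector in the Ramond ground space, and the $\tfrac12$ you correctly obtain in the two-point function comes precisely from $\Rbracket{b_0^2}=\tfrac12$. More importantly, this non-vanishing is exactly what breaks your claim that ``pulling out the overall $2^{-n}\prod z_i^{-1/2}$ reduces the recursion to precisely \eqref{eq:recpf'}''. When you expand $b(z_1)=\sum_{m\in\ZZ}b_m z_1^{-m-1/2}$ and push modes to the right, the modes $b_m$ with $m\ge 1$ contract with each $b(z_j)$ to give $(z_1 z_j)^{-1/2}\,\frac{z_j}{z_1-z_j}$, \emph{not} the full two-point function $\tfrac12(z_1z_j)^{-1/2}\frac{z_1+z_j}{z_1-z_j}$; and the $m=0$ term does not contract but survives as
\[
z_1^{-1/2}\,\Rcorrfn{b_0\,b(z_2)\cdots b(z_{2n})}.
\]
For the pfaffian recursion with entries $\frac{z_i+z_j}{z_i-z_j}$ to close, you therefore need the separate identity
\[
\Rcorrfn{b_0\,b(z_2)\cdots b(z_{2n})}
=\tfrac12\sum_{j=2}^{2n}(-1)^j z_j^{-1/2}\,\Rcorrfn{b(z_2)\cdots\widehat{b(z_j)}\cdots b(z_{2n})},
\]
which is not automatic. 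The paper proves exactly this by running a coupled induction: it introduces $f_n$ (essentially your rescaled correlator) together with $g_n=\lim_{z_1\to\infty}f_n$, which packages the $b_0$-insertion, and first establishes the $g_n$-recursion \eqref{eq:RCorrToBeProved} before deducing the $f_n$-recursion. Your write-up would need either this extra step or a genuine appeal to a twisted Wick theorem (quasi-free/Gaussian state argument) to close the Ramond induction; as it stands, the ``automatic'' claim is where the argument fails. Your bosonisation alternative is viable but is a different proof, not a shortcut around this point.
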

\begin{proof}
  A correlation function vanishes if its argument is odd, thus for an odd number of fermions, proving \ref{item:odd}.

  We prove \ref{item:NSeven} inductively using the recursive definition of the pfaffian. For $n=0$, the correlation function reduces to $\braketb{\NS}{\NS} = 1$, in agreement with \eqref{eq:NSpf}.  For $n>0$, we expand $b(z_1)$ and then commute its modes to the right:
  \begin{align}
    \NScorrfn{b(z_1)\cdots b(z_{2n})} &= \:\sum_{\mathclap{m \ge \sfrac{1}{2}}}\: \NSbracket{b_m b(z_2) \cdots b(z_{2n})} z_1^{-m-\sfrac{1}{2}} \notag \\
    &= \sum_{j=2}^{2n} (-1)^j \NSbracket{b(z_2) \cdots \widehat{b(z_j)} \cdots b(z_{2n})} \:\sum_{\mathclap{m \ge \sfrac{1}{2}}}\: z_1^{-m-\sfrac{1}{2}} z_j^{m-\sfrac{1}{2}} \notag \\
    &= \sum_{j=2}^{2n} (-1)^j \frac{1}{z_1 - z_j} \NScorrfn{\widehat{b(z_1)} b(z_2) \cdots \widehat{b(z_j)} \cdots b(z_{2n})}.
  \end{align}
  As in \cref{def:pf}, hats denote omission.  Formula \eqref{eq:NSpf} now follows inductively from \eqref{eq:recpf'}.

  Proving \ref{item:Reven} requires a little more work.  Define
  \begin{subequations} \label{eq:RCorrDefFG}
  \begin{align}
    f_n(z_1, \ldots, z_{2n}) &= 2^n \prod_{i=1}^{2n} z_i^{1/2} \cdot \Rcorrfn{b(z_1) \cdots b(z_{2n})}, \label{eq:RCorrDefF} \\
    g_n(z_2, \ldots, z_{2n}) &= \lim_{z_1 \to \infty} f_n(z_1, \ldots, z_{2n})
    = 2^n \prod_{i=2}^{2n} z_i^{1/2} \cdot \Rcorrfn{b_0 b(z_2) \cdots b(z_{2n})} \label{eq:RCorrDefG}
  \end{align}
  \end{subequations}
  and note that $f_0 = 1$ and $f_1(z_1, z_2) = \frac{z_1+z_2}{z_1-z_2}$.  These form the base cases for the assertion that
  \begin{equation} \label{eq:RCorrAssertion}
    f_n(z_1, \dots, z_{2n}) = \pf \brac*{\frac{z_i+z_j}{z_i-z_j}}_{1 \le i < j \le 2n},
  \end{equation}
  which we shall prove by induction on $n$.

  Assume therefore that $n \ge 2$ and that
  \begin{subequations} \label{eq:RCorrAssumption}
  \begin{equation} \label{eq:RCorrAssumptionF}
    f_{n-1}(z_3, \ldots, z_{2n}) = \pf \brac*{\frac{z_i+z_j}{z_i-z_j}}_{3 \le i < j \le 2n}
    = \sum_{j=4}^{2n} (-1)^j \frac{z_3+z_j}{z_3-z_j} f_{n-2}(z_4, \ldots, \widehat{z_j} \ldots, z_{2n}),
  \end{equation}
  using \eqref{eq:recpf'}, from which it follows that
  \begin{equation} \label{eq:RCorrAssumptionG}
    g_{n-1}(z_4, \ldots, z_{2n}) = \sum_{j=4}^{2n} (-1)^j f_{n-2}(z_4, \ldots, \widehat{z_j} \ldots, z_{2n}),
  \end{equation}
  \end{subequations}
  using \eqref{eq:RCorrDefG}.  If we could show that
  \begin{equation} \label{eq:RCorrToBeProved}
    g_n(z_2, \ldots, z_{2n}) = \sum_{j=2}^{2n} (-1)^j f_{n-1}(\widehat{z_1}, z_2, \ldots, \widehat{z_j} \ldots, z_{2n}),
  \end{equation}
  then we would be able to prove \eqref{eq:RCorrAssertion} by expanding $b(z_1)$ in \eqref{eq:RCorrDefF} as follows:
  \begin{align}
    f_n(z_1, \ldots, z_{2n})
    &= 2^n \prod_{i=2}^{2n} z_i^{1/2} \cdot \Rcorrfn{b_0 b(z_2) \cdots b(z_{2n})} + 2^n \prod_{i=2}^{2n} z_i^{1/2} \cdot \sum_{m \ge 1} \Rcorrfn{b_m b(z_2) \cdots b(z_{2n})} z_1^{-m} \notag \\
    &= g_n(z_2, \ldots, z_{2n}) + 2^n \sum_{j=2}^{2n} (-1)^j \prod_{\substack{i=2 \\ i \neq j}}^{2n} z_i^{1/2} \cdot \Rcorrfn{b(z_2) \cdots \widehat{b(z_j)} \cdots b(z_{2n})} \frac{z_j}{z_1-z_j} \notag \\
    &= \sum_{j=2}^{2n} (-1)^j f_{n-1}(\widehat{z_1}, z_2, \ldots, \widehat{z_j} \ldots, z_{2n}) + \sum_{j=2}^{2n} (-1)^j \frac{2z_j}{z_1-z_j} f_{n-1}(\widehat{z_1}, z_2, \ldots, \widehat{z_j}, \ldots, z_{2n}) \notag \\
    &= \sum_{j=2}^{2n} (-1)^j \frac{z_1+z_j}{z_1-z_j} f_{n-1}(\widehat{z_1}, z_2, \ldots, \widehat{z_j}, \ldots, z_{2n}) = \pf \brac*{\frac{z_i+z_j}{z_i-z_j}}_{1 \le i < j \le 2n}.
  \end{align}
  Here, we have also used \eqref{eq:recpf'}, \eqref{eq:RCorrDefFG} and \eqref{eq:RCorrAssumption}.

  To complete the proof, we therefore need to show \eqref{eq:RCorrToBeProved}.  By expanding $b(z_2)$ in \eqref{eq:RCorrDefG} as before, we arrive at
  \begin{equation} \label{eq:RCorrExpandingG}
    g_n(z_2, \ldots, z_{2n}) = f_{n-1}(z_3, \ldots, z_{2n}) + \sum_{j=3}^{2n} (-1)^{j-1} \frac{2z_j}{z_2-z_j} g_{n-1}(z_3, \ldots, \widehat{z_j}, \ldots, z_{2n}).
  \end{equation}
  Using the assumption \eqref{eq:RCorrAssumptionG}, the second term on the \rhs{} can be brought to the more symmetric form
  \begin{equation}
    \sum_{j=3}^{2n} \sum_{\substack{k=3 \\ k \neq j}}^{2n} (-1)^{j+k+\theta(k-j)} \frac{2z_j}{z_2-z_j} f_{n-2}(z_3, \ldots, \widehat{z_j}, \ldots, \widehat{z_k}, \ldots z_{2n}).
  \end{equation}
  The Heaviside function here arises because $z_j$ is omitted.  It is important because it implies that the above double sum would vanish if we replaced $\frac{2z_j}{z_2-z_j}$ by a constant --- the term with $(j,k) = (r,s)$ cancels that with $(j,k) = (s,r)$, for $r \neq s$.  Because of this, we are free to add $1$ to $\frac{2z_j}{z_2-z_j}$, thereby replacing it by $\frac{z_2+z_j}{z_2-z_j}$.  Comparing with \eqref{eq:RCorrAssumptionF}, in the form
  \begin{equation}
    f_{n-1}(z_2, \ldots, \widehat{z_k}, \ldots, z_{2n}) = \sum_{\substack{j=3 \\ j \neq k}}^{2n} (-1)^{j-1+\theta(j-k)} \frac{z_2+z_j}{z_2-z_j} f_{n-2}(z_3, \ldots, \widehat{z_j}, \ldots, \widehat{z_k}, \ldots z_{2n}),
  \end{equation}
  we can rewrite \eqref{eq:RCorrExpandingG} as
  \begin{align}
    g_n(z_2, \ldots, z_{2n}) &= f_{n-1}(\widehat{z_2}, z_3, \ldots, z_{2n}) + \sum_{k=3}^{2n} (-1)^k f_{n-1}(z_2, \ldots, \widehat{z_k}, \ldots, z_{2n}) \notag \\
    &= \sum_{k=2}^{2n} (-1)^k f_{n-1}(\widehat{z_1}, z_2, \ldots, \widehat{z_k}, \ldots, z_{2n}),
  \end{align}
  which is \eqref{eq:RCorrToBeProved}.
\end{proof}
  
\section{From fermionic correlators to Jack symmetric functions} \label{sec:Jack}

In this section, we relate the free fermion correlators that were just calculated to Jack polynomials. The latter turn out to be
unusual in that their parameter is \(-3\).
Such negative rational parameters are usually not permitted because they lead to singularities in the definition of Jack polynomials (with their standard normalisations \cite{MacSym95}).
In \cite{Feidif02}, Feigin, Jimbo, Miwa and Mukhin showed that Jack polynomials with negative rational parameter are well defined only for special, so-called admissible, partitions. Moreover, these admissible Jack polynomials were found to span a certain ideal of the ring of symmetric polynomials. 
We review this theory and then show how these results enable one to derive useful identities relating free fermion correlators and Jack symmetric polynomials. These identities will be crucial for the classification results of the next section.

The standard reference for symmetric functions and polynomials is Macdonald's seminal book \cite{MacSym95}. There, the theory of Jack symmetric functions and polynomials is deduced, sometimes implicitly, from that of the more general Macdonald functions and polynomials.
A short explicit summary of the properties of Jack symmetric functions that we shall require may also be found in \cite[App.~A]{RidSlJac15}.

\subsection{Admissible partitions} \label{sec:AdmPart}

We begin by discussing an important class of partitions, a special case of that introduced in \cite{Feidif02}, establishing the basic properties that we shall later need.
\begin{defn}
  \leavevmode
  \begin{itemize}
  \item Let \(\pi_{\ell}\) be the set of partitions of all integers whose length is at most \(\ell\).
    A partition \(\lambda\in\pi_\ell\) is \emph{admissible} if
    \begin{equation}
      \lambda_i-\lambda_{i+2}\ge 2,\quad 1\le i\le \ell-2.
    \end{equation}
		Note that a partition whose length is strictly less than $\ell$ is understood to be padded with zeroes so that $\lambda_{\ell} = 0$.
  \item For $\ell \ge 2$ and \(n_1, n_2 \in \ZZ\) such that $0 \le n_2 \le n_1 \le n_2 + 2$, denote by \(\admp{\ell}{n_1,n_2} \in \pi_\ell\) the admissible partition whose parts are
    \begin{equation}
      \admp{\ell}{n_1,n_2}_{\ell-1}=n_1;\quad
      \admp{\ell}{n_1,n_2}_{\ell}=n_2;\quad
      \admp{\ell}{n_1,n_2}_{i}=\admp{\ell}{n_1,n_2}_{i+2}+2,
      \quad 1\le i\le \ell-2\,;
    \end{equation}
    that is, $\admp{\ell}{n_1,n_2} = [\ldots, n_1+4, n_2+4, n_1+2, n_2+2, n_1, n_2]$.
  \end{itemize}
\end{defn}
\noindent Thus, \(\admp{\ell}{n_1,n_2}\) is the unique, minimal weight, length (at most) \(\ell\), admissible partition whose last two parts are \(n_1\) and \(n_2\) (in that order).  Its weight is
\begin{equation} \label{eq:AdmWt}
\abs[\big]{\admp{\ell}{n_1,n_2}} =
\begin{cases*}
\frac{1}{2} \ell n_1 + \frac{1}{2} \ell n_2 + \frac{1}{2} \ell (\ell-2), & if \(\ell\) is even, \\
\frac{1}{2} (\ell-1) n_1 + \frac{1}{2} (\ell+1) n_2 + \frac{1}{2} (\ell-1)^2, & if \(\ell\) is odd.
\end{cases*}
\end{equation}

A partition \(\lambda\) is said to be \emph{bounded from above} by another partition \(\mu\) if $\lambda_i\le\mu_i$, for all $i$ (appending zeroes to the end of $\lambda$ if necessary).  This relation will be denoted by \(\lambda\subseteq \mu\).  In this circumstance, one also says that \(\mu\) is \emph{bounded from below} by \(\lambda\). For example, \(\admp{\ell}{0,0}\) bounds every admissible partition in $\pi_\ell$ from below. The more familiar dominance ordering, wherein $\lambda$ is \emph{dominated} by $\mu$ if $\lambda_1 + \cdots + \lambda_i \le \mu_1 + \cdots + \mu_i$, for all $i$, only applies to partitions of equal weights and will be denoted by $\lambda \le \mu$.

\begin{lem}\label{thm:admissiblebounds}
  If $\ell \ge 2$ and \(\mu\in\pi_\ell\) is dominated by the admissible partition \(\admp{\ell}{m,m}\), for some $m \ge 0$, then
  \begin{enumerate}
  \item\label{thm:admbnd1} for \(\ell=2n\) even, the parts of \(\mu\) satisfy $\mu_{2j-1}\ge m+n-j$ and $\mu_{2j} \ge m+n-j$, for $1\le j \le n$;
  \item\label{thm:admbnd2} for \(\ell=2n-1\) odd, the parts of \(\mu\) satisfy $\mu_{2j-1}\ge m+n-j$, for $1\le j \le n$, and $\mu_{2j} \ge m+n-j-1$, for $1\le j \le n-1$.
  \end{enumerate}
	If $\mu$ is instead dominated by any admissible partition $\lambda$ which is bounded from below by \(\admp{\ell}{m+1,m}\), then
  \begin{enumerate}[resume]
  \item\label{thm:admbnd3} for \(\ell=2n\) even, the parts of \(\mu\) satisfy $\mu_{2j-1}\ge m+n-j+1$ and $\mu_{2j} \ge m+n-j$, for $1\le j \le n$;
  \item\label{thm:admbnd4} for \(\ell=2n-1\) odd, the parts of \(\mu\) satisfy $\mu_{2j-1}\ge m+n-j$, for $1\le j \le n$, and $\mu_{2j} \ge m+n-j$, for $1\le j \le n-1$.
  \end{enumerate}
\end{lem}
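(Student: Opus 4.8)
The four claims are all of the same shape: given that $\mu$ is dominated by a particular admissible partition (or by any admissible $\lambda$ bounded below by a particular $\admp{\ell}{\cdot,\cdot}$), we want lower bounds on the individual parts $\mu_i$. The natural strategy is to \emph{reverse} the dominance inequalities: dominance $\mu \le \nu$ controls partial sums from the top, so to bound $\mu_i$ from below we should look at the complementary partial sum $\mu_{i+1} + \cdots + \mu_\ell = \abs{\mu} - (\mu_1 + \cdots + \mu_i)$. Since $\abs{\mu} = \abs{\nu}$ (dominance only compares equal-weight partitions) and $\mu_1 + \cdots + \mu_i \le \nu_1 + \cdots + \nu_i$, we get $\mu_{i+1} + \cdots + \mu_\ell \ge \nu_{i+1} + \cdots + \nu_\ell$. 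Then, because $\mu$ is weakly decreasing, $\mu_{i+1} \ge \frac{1}{\ell - i}(\mu_{i+1} + \cdots + \mu_\ell)$ is too weak, but the sharper observation is that $\mu_j$ for $j \le i+1$ is at least $\mu_{i+1}$, so we instead want to push the other way. The cleanest route: to bound $\mu_k$ below, note $(\ell - k + 1)\mu_k \ge \mu_k + \mu_{k+1} + \cdots + \mu_\ell \ge \nu_k + \cdots + \nu_\ell$, giving $\mu_k \ge \frac{1}{\ell-k+1}\sum_{j \ge k}\nu_j$, and then check that for $\nu = \admp{\ell}{m,m}$ (resp.\ $\admp{\ell}{m+1,m}$) the tail sums $\sum_{j\ge k}\nu_j$, divided by $\ell - k + 1$ and rounded up, give exactly the stated bounds.

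First I would record the explicit tail sums of the reference partitions. For $\nu = \admp{\ell}{m,m}$ the parts are, reading from the bottom, $m, m, m+2, m+2, m+4, m+4, \dots$, so the tail $\sum_{j \ge k} \nu_j$ is a sum of an arithmetic-like sequence that is easy to evaluate in closed form, splitting into the $\ell$ even and $\ell$ odd cases exactly as in \eqref{eq:AdmWt}. Then for each index $k = 2j-1$ or $k = 2j$ I would compute $\lceil \frac{1}{\ell-k+1}\sum_{i\ge k}\nu_i \rceil$ and verify it equals $m + n - j$ (with the appropriate $\pm 1$ shifts). For claims \ref{thm:admbnd3}--\ref{thm:admbnd4}, the hypothesis is weaker in that $\lambda$ is only bounded below by $\admp{\ell}{m+1,m}$ rather than being equal to it, but bounded below ($\admp{\ell}{m+1,m} \subseteq \lambda$) means $\lambda_i \ge \admp{\ell}{m+1,m}_i$ for every $i$, hence the tail sums only increase: $\sum_{j\ge k}\lambda_j \ge \sum_{j \ge k}\admp{\ell}{m+1,m}_j$, and we may as well argue with $\nu = \admp{\ell}{m+1,m}$, whose parts from the bottom are $m, m+1, m+2, m+3, m+4, m+5, \dots$.

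An alternative, possibly slicker, inductive approach: argue by induction on $\ell$, peeling off the last one or two parts of $\mu$ and of the reference partition and reducing to a smaller instance; admissibility of the reference partition is precisely what makes the induction hypothesis applicable after removing parts. Either way the content is the same elementary manipulation of partial sums; I would go with whichever makes the four cases most uniform, probably the direct tail-sum computation since it avoids bookkeeping about how dominance behaves under removing parts of $\mu$ (which need not itself be admissible).

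\textbf{Main obstacle.} There is no conceptual difficulty here — the only real work is the bookkeeping: correctly evaluating the tail sums of $\admp{\ell}{m,m}$ and $\admp{\ell}{m+1,m}$ in the even/odd cases, getting the floor/ceiling right, and matching indices so that the bound at position $2j-1$ or $2j$ comes out as $m + n - j$ with exactly the stated off-by-one corrections (note the asymmetry between $\mu_{2j-1}$ and $\mu_{2j}$ in \ref{thm:admbnd1} vs.\ \ref{thm:admbnd3}). The one subtlety worth flagging is that $\mu$ itself is \emph{not} assumed admissible, so any argument must use only that $\mu$ is a partition (weakly decreasing, nonnegative) dominated by the reference — which is exactly why the partial-sum/tail-sum method, rather than anything using the structure of admissible partitions for $\mu$, is the right tool.
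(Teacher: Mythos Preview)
Your approach is essentially identical to the paper's: both use the chain $(\ell-i+1)\mu_i \ge \sum_{k\ge i}\mu_k \ge \sum_{k\ge i}\nu_k$ (the first inequality from $\mu$ being weakly decreasing, the second from reversing the dominance partial sums via $\abs{\mu}=\abs{\nu}$), and then evaluate the tail sums of the reference partition $\nu=\admp{\ell}{m,m}$ or $\admp{\ell}{m+1,m}$ in the four parity cases. The paper streamlines the bookkeeping slightly by writing the reference as $[m^\ell]+\delta$ with $\delta\in\{\admp{\ell}{0,0},\admp{\ell}{1,0}\}$, so the $m$-contribution separates off trivially and only the tail sums of $\delta$ need to be computed; your observation that $\lambda\supseteq\admp{\ell}{m+1,m}$ only increases tail sums is exactly how the paper handles \ref{thm:admbnd3}--\ref{thm:admbnd4} as well.
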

\noindent It is useful for the proof to first note that the admissible partitions $\admp{\ell}{0,0}$ and $\admp{\ell}{1,0}$ have the following parts:
\begin{equation} \label{eq:admpParts}
\admp{2n}{0,0}_{2j-1} = \admp{2n}{0,0}_{2j} = \admp{2n-1}{0,0}_{2j-1} = 2(n-j), \quad \admp{2n-1}{0,0}_{2j} = 2(n-j-1); \quad \admp{\ell}{1,0}_i = \ell-i.
\end{equation}
We also recall the convenient notation $[m^{\ell}]$ for a length $\ell$ partition, all of whose parts are $m$.
\begin{proof}
  The four estimates on the parts of $\mu$ all follow from the same argument. So, let \(\delta\) be either \(\admp{\ell}{1,0}\) or \(\admp{\ell}{0,0}\), \(\lambda \supseteq \delta + [m^{\ell}]\) and $\mu \le \lambda$.  In case $\delta = \admp{\ell}{0,0}$, we should take $\lambda = \delta + [m^{\ell}]$, but this does not affect the proof.  Since the parts of $\mu$ are weakly decreasing and $\mu \le \lambda$, we have
  \begin{equation} \label{eq:muineq}
    (\ell-i+1)\mu_i \ge \sum_{k=i}^\ell \mu_k \ge \sum_{k=i}^\ell \lambda_k \ge \sum_{k=i}^\ell (m + \delta_k) = (\ell-i+1)m + \sum_{k=i}^\ell \delta_k.
  \end{equation}
  The four estimates now follow by evaluating the sum on the \rhs{}.

  We show the first explicitly. Writing $i=2j-1$ or $i=2j$, the sum evaluates to
 \begin{align}
    \sum_{\mathclap{k=2j-1}}^{2n}\:\admp{2n}{0,0}_k = 2(n-j+1)(n-j) \quad \text{or} \quad 
    \sum_{k=2j}^{2n}\admp{2n}{0,0}_k = (2n-2j+1)(n-j) - (n-j),
  \end{align}
  respectively.  The inequality \eqref{eq:muineq} therefore yields
  \begin{equation}
    \begin{aligned}
      2(n-j+1)\mu_{2j-1}&\ge 2(n-j+1)(n-j),\\
      (2n-2j+1)\mu_{2j}&\ge (2n-2j+1)(n-j)-(n-j)
    \end{aligned}
    \quad\Rightarrow\quad \mu_{2j-1}, \mu_{2j}\ge n-j,
  \end{equation}
  as required.  The remaining estimates follow similarly.
\end{proof}

\begin{lem}\label{thm:uniqueadm}
  Let \(\ell\) be a positive integer. Then, for every integer \(k\) satisfying
  \begin{equation}
    \abs*{\admp{\ell}{0,0}} \le k\le \abs*{\admp{\ell}{1,0}},
  \end{equation}
  there exists exactly one admissible weight $k$ partition
  \(\lambda\in\parl{\ell}\) satisfying \(\admp{\ell}{0,0}\subseteq \lambda\subseteq \admp{\ell}{1,0}\). Furthermore, the only
  admissible partition in $\pi_{\ell}$ dominated by \(\lambda\) is \(\lambda\) itself.
\end{lem}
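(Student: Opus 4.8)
The plan is to establish the two assertions separately: the first by an explicit parametrisation of the partitions sandwiched between $\admp{\ell}{0,0}$ and $\admp{\ell}{1,0}$, and the second by reducing it to that first uniqueness statement.

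For existence and uniqueness, I would first read off from \eqref{eq:admpParts} that $\admp{\ell}{1,0}_i - \admp{\ell}{0,0}_i \in \{0,1\}$ for every $i$, the value $1$ occurring exactly on the set $S$ of indices $\le \ell$ of one fixed parity --- the odd ones when $\ell$ is even, the even ones when $\ell$ is odd --- and that $\abs{S} = \lfloor \ell/2 \rfloor = \abs{\admp{\ell}{1,0}} - \abs{\admp{\ell}{0,0}}$ (the latter checked from \eqref{eq:AdmWt}). Hence every partition $\lambda$ with $\admp{\ell}{0,0} \subseteq \lambda \subseteq \admp{\ell}{1,0}$ is of the form $\lambda_i = \admp{\ell}{0,0}_i + \epsilon_i$ with $\epsilon_i \in \{0,1\}$ supported on $S$, and $\abs{\lambda} = \abs{\admp{\ell}{0,0}} + \sum_i \epsilon_i$.

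It then remains to impose that such a $\lambda$ is weakly decreasing and admissible. Using $\admp{\ell}{0,0}_i - \admp{\ell}{0,0}_{i+2} = 2$ for $1 \le i \le \ell-2$, the admissibility condition $\lambda_i - \lambda_{i+2} \ge 2$ becomes $\epsilon_i \ge \epsilon_{i+2}$; since $\epsilon$ is supported on a single parity class this says exactly that $\epsilon$ is weakly decreasing along $S$, and a quick inspection of the differences $\admp{\ell}{0,0}_i - \admp{\ell}{0,0}_{i+1}$ (which are $0,2,0,2,\dots$ or $2,0,2,0,\dots$) shows that the weakly-decreasing condition on $\lambda$ is then automatic. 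A weakly decreasing $\{0,1\}$-sequence on $S$ is determined by its number $t \in \{0,1,\dots,\abs{S}\}$ of $1$'s, and then $\abs{\lambda} = \abs{\admp{\ell}{0,0}} + t$. So $t$ is in bijection with the admissible partitions sandwiched between $\admp{\ell}{0,0}$ and $\admp{\ell}{1,0}$, their weights running over exactly the stated range, one partition per weight; this is the first assertion.

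For the second assertion, let $\mu \in \pi_{\ell}$ be admissible with $\mu \le \lambda$ in dominance, so $\abs{\mu} = \abs{\lambda} = k$. Since $\admp{\ell}{0,0}$ bounds every admissible partition in $\pi_{\ell}$ from below (immediate by iterating $\mu_i \ge \mu_{i+2} + 2$, as recalled after the definition of admissible partitions), we get $\admp{\ell}{0,0} \subseteq \mu$ for free, so by the uniqueness just proved it suffices to show $\mu \subseteq \admp{\ell}{1,0}$, i.e. $\mu_i \le \ell - i$ for $1 \le i \le \ell$. Suppose not, and let $i$ be the least index with $\mu_i > \ell - i$. If $i \ge 3$, minimality forces $\mu_{i-2} \le \ell - (i-2)$, while admissibility of $\mu$ gives $\mu_{i-2} \ge \mu_i + 2 \ge \ell - (i-2) + 1$, a contradiction; hence $i \in \{1,2\}$. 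If $i = 1$ then $\mu_1 > \ell - 1 \ge \lambda_1$; if $i = 2$ then $\mu_1 \le \ell - 1$ by minimality and $\mu_1 \ge \mu_2 \ge \ell - 1$, forcing $\mu_1 = \mu_2 = \ell - 1$ and so $\mu_1 + \mu_2 = 2\ell - 2 > (\ell-1)+(\ell-2) \ge \lambda_1 + \lambda_2$. Either way some partial sum of $\mu$ strictly exceeds that of $\lambda$, contradicting $\mu \le \lambda$. Thus $\mu \subseteq \admp{\ell}{1,0}$, and the first part forces $\mu = \lambda$.

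I do not anticipate a serious obstacle. The one genuinely non-mechanical step is the reduction in the second assertion --- noticing that it is enough to prove $\mu \subseteq \admp{\ell}{1,0}$, and then squeezing the contradiction out of the \emph{first} index at which $\mu$ fails to sit inside the staircase $\admp{\ell}{1,0}$. The remaining difficulties are bookkeeping: tracking which parity class of indices carries the $\{0,1\}$ difference $\admp{\ell}{1,0} - \admp{\ell}{0,0}$ (this swaps with the parity of $\ell$) and checking that monotonicity of $\lambda$ comes for free, plus recalling that the dominance order presupposes equal weight so that $\abs{\mu} = k$ is available from the outset.
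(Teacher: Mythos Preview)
Your argument is correct. The first assertion is handled essentially as the paper does: the difference $\admp{\ell}{1,0} - \admp{\ell}{0,0}$ is a $\{0,1\}$-sequence supported on one parity class, admissibility forces that sequence to be weakly decreasing, and the weight then determines the partition uniquely.

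For the second assertion you take a genuinely different route. The paper argues directly from the first index $i$ at which $\mu_i < \lambda_i$: this forces $\mu_i = \admp{\ell}{0,0}_i$, and then admissibility pins every later part of $\mu$ to its minimal value $\admp{\ell}{0,0}_j$, yielding $\abs{\mu} < \abs{\lambda}$. You instead reduce part~2 to part~1 by proving $\mu \subseteq \admp{\ell}{1,0}$ directly, looking at the first index where $\mu$ would exceed the staircase $\ell - i$ and deriving a contradiction with either admissibility (for $i \ge 3$) or dominance (for $i \le 2$). Your reduction is slightly cleaner in that it uses only the explicit staircase formula $\admp{\ell}{1,0}_i = \ell - i$ and avoids the implicit parity bookkeeping the paper needs when propagating $\mu_j = \admp{\ell}{0,0}_j$ from index $i$ to index $i+1$. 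The paper's approach, on the other hand, never invokes the containment $\mu \subseteq \admp{\ell}{1,0}$ and gives a weight contradiction in one stroke. Both are short and elementary.
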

\begin{proof}
  Since \(\lambda\) is bounded above and below by
  \(\admp{\ell}{1,0}\) and \(\admp{\ell}{0,0}\), respectively, its parts must
  satisfy
  \begin{equation}
    \begin{aligned}
      2n-2j+1\ge{} &\lambda_{2j-1}\ge 2n-2j, & 2n-2j \ge{} &\lambda_{2j} \ge 2n-2j, & &\text{if \(\ell = 2n\);} \\
      2n-2j\ge{} &\lambda_{2j-1}\ge 2n-2j, & 2n-2j-1 \ge{} &\lambda_{2j} \ge 2n-2j-2, & &\text{if \(\ell = 2n-1\).}
    \end{aligned}
  \end{equation}
  Every part whose index has the same parity as $\ell$ is thus fixed, while every other part is constrained to take one of two possible values.  Moreover, if $\lambda_i$ takes the larger of its possible values, for some $i$, then $\lambda_{i-2}$ must also take the larger of its possible values, because admissibility requires that $\lambda_{i-2} - \lambda_i \ge 2$.  It follows that for those parts for which there is a choice, there exists an integer \(m\), with \(0 \le m \le \lfloor \ell/2\rfloor\), such that the first \(m\) parts take the larger value, while the remaining \(\lfloor \ell/2 \rfloor - m\) parts take the smaller value.  Clearly, this gives exactly $\lfloor \ell/2 \rfloor +1$ possibilities for $\lambda$, one for every weight \(\abs{\lambda} = k\) between \(\abs*{\admp{\ell}{0,0}}\) and \(\abs*{\admp{\ell}{1,0}}\).

  Now, fix one such $\lambda$ and assume that there exists an admissible partition \(\mu \in \pi_{\ell}\) that is strictly dominated by \(\lambda\), thus $\abs{\mu} = \abs{\lambda}$ and $\mu \le \lambda$, but $\mu \neq \lambda$.  Let \(i\) be the minimal integer such that \(\mu_i<\lambda_i\).  Since \(\mu\) and \(\lambda\) are both bounded below by \(\admp{\ell}{0,0}\), we must have \(\lambda_{i}=\admp{\ell}{1,0}_i\) and \(\mu_{i}=\admp{\ell}{0,0}_i\). As $\mu$ is admissible, it follows that $\mu_j = \admp{\ell}{0,0}_j \le \lambda_j$, for all $j>i$.  This, however, implies that \(\abs{\mu} < \abs{\lambda}\), a contradiction.
\end{proof}
\begin{defn} \label{def:uniqp}
Given $0 \le m \le \lfloor \ell/2 \rfloor$, denote by $\uniqp{\ell}{m}$ the unique admissible partition of length (at most) $\ell$ and weight $k = \abs*{\admp{\ell}{0,0}} + m$ that is bounded below by $\admp{\ell}{0,0}$ and bounded above by $\admp{\ell}{1,0}$, as in \cref{thm:uniqueadm}.
\end{defn}
\noindent We remark that the parts of these partitions are given by
\begin{equation}
  \uniqp{\ell}{m}_i =
  \begin{cases*}
    \admp{\ell}{1,0}_i, & if \(i \le 2m\), \\
    \admp{\ell}{0,0}_i, & otherwise
  \end{cases*}
\end{equation}
and that the parts of $\admp{\ell}{0,0}$ and $\admp{\ell}{1,0}$ were given in \eqref{eq:admpParts}.  To illustrate this, suppose that $\ell=6$ so that
\begin{equation}
\delta^{(6)}(0,0)=[4,4,2,2,0,0] \quad \text{and} \quad \delta^{(6)}(1,0)=[5,4,3,2,1,0].
\end{equation}
It is easy to check that the admissible partitions $\lambda \in \pi_6$ satisfying $\delta^{(6)}(0,0) \subseteq \lambda \subseteq \delta^{(6)}(1,0)$ are
\begin{equation}
	\begin{aligned}
		\abs*{\lambda}&=12:&\lambda&=[4,4,2,2,0,0]=\uniqp{6}{0}, \\
		\abs*{\lambda}&=13:&\lambda&=[5,4,2,2,0,0]=\uniqp{6}{1},
	\end{aligned}
	\qquad
	\begin{aligned}
		\abs*{\lambda}&=14:&\lambda&=[5,4,3,2,0,0]=\uniqp{6}{2}, \\
		\abs*{\lambda}&=15:&\lambda&=[5,4,3,2,1,0]=\uniqp{6}{3},
	\end{aligned}
\end{equation}
in accordance with \cref{thm:uniqueadm}.

To formulate the next result in a concise way, we introduce the following compact notation.
Consider a partition $\mu$ with at most $\ell$ parts 
and let $n$ be an integer such that $n\ge \mu_1$. Then, we set
\begin{equation}
  [n-\mu] =[n-\mu_\ell,n-\mu_{\ell-1},\dots, n-\mu_1].
\end{equation}
Since the parts of \(\mu\) are subtracted in reverse order, the parts of
\([n-\mu]\) are weakly decreasing. Thus, \([n-\mu]\) is a partition and its
weight is \(n\ell - \abs{\mu}\).

\begin{lem}\label{thm:adminversion}
  If \(n\ge\mu_1\), then $\mu$ is an admissible partition if and only if \([n-\mu]\) is.
\end{lem}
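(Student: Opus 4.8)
The plan is to unwind the admissibility condition directly in terms of the parts of $[n-\mu]$. Recall that admissibility of a length-$\ell$ partition $\nu$ is the condition $\nu_i - \nu_{i+2} \ge 2$ for $1 \le i \le \ell-2$, where parts are padded by zeroes to reach length $\ell$; this is a ``local'' condition on parts two apart, so it ought to be symmetric under the order-reversing, complementing map $\mu \mapsto [n-\mu]$.

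First I would fix notation: write $\nu = [n-\mu]$, so that $\nu_j = n - \mu_{\ell+1-j}$ for $1 \le j \le \ell$ (using that $\mu$ has at most $\ell$ parts, padded with zeroes, and that $n \ge \mu_1$ guarantees all these are nonnegative and weakly decreasing, as already observed in the excerpt). The key computation is then simply: for $1 \le j \le \ell - 2$,
\begin{equation}
  \nu_j - \nu_{j+2} = (n - \mu_{\ell+1-j}) - (n - \mu_{\ell-1-j}) = \mu_{\ell-1-j} - \mu_{\ell+1-j}.
\end{equation}
Setting $i = \ell - 1 - j$, as $j$ ranges over $1, \dots, \ell-2$ the index $i$ ranges over $1, \dots, \ell-2$, and the displayed quantity is exactly $\mu_i - \mu_{i+2}$. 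Hence $\nu_j - \nu_{j+2} \ge 2$ for all relevant $j$ if and only if $\mu_i - \mu_{i+2} \ge 2$ for all relevant $i$; that is, $[n-\mu]$ is admissible if and only if $\mu$ is. Since applying the same construction with the same $n$ to $[n-\mu]$ returns $\mu$ (as $[n-\mu]_1 = n - \mu_\ell \le n$), the ``if and only if'' needs no separate converse argument.

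The only point requiring a modicum of care — and the nearest thing to an obstacle — is the bookkeeping around zero-padding: one must be sure that the padded parts $\mu_{\ell-1} = \mu_\ell = 0$ (when $\mu$ is genuinely shorter) are handled consistently on both sides, and that $[n-\mu]$ is indeed being regarded as a length-$\ell$ partition with its own padding rule matching. This is routine: the definition of $[n-\mu]$ explicitly produces exactly $\ell$ parts from the $\ell$ (padded) parts of $\mu$, so the lengths match by construction and the index substitution $i \leftrightarrow \ell-1-i$ is a clean involution on $\{1, \dots, \ell-2\}$. I would state the index computation in a single display and conclude immediately; the whole proof should be only a few lines.
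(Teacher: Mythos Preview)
Your proof is correct and takes essentially the same approach as the paper: both compute $\nu_j - \nu_{j+2} = \mu_{\ell-1-j} - \mu_{\ell+1-j}$ directly from the definition of $[n-\mu]$ and observe that the index substitution is a bijection on $\{1,\dots,\ell-2\}$. The only cosmetic difference is that the paper invokes the involution to reduce to one direction, whereas you note (equivalently) that the computation already gives both directions at once.
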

\begin{proof}
  Since the map \(\mu\mapsto [n-\mu]\) is an involution
  on the set of partitions of length at most \(\ell\) with first part bounded by
  \(n\), it is sufficient to merely check the ``if'' part.
  Let \(\nu=[n-\mu]\) and assume that \(\mu\) is admissible. Then,
  \begin{equation}
    \nu_i-\nu_{i+2}=(n-\mu_{\ell-i+1})-(n-\mu_{\ell-i-1})=\mu_{\ell-i-1}-\mu_{\ell-i+1}\ge
    2,\quad 1\le i\le \ell-2,
  \end{equation}
  and, thus, \(\nu\) is admissible. 
\end{proof}

\subsection{Jack symmetric polynomials at $t=-3$} \label{sec:FJMM}

We now turn to the relationship between admissible partitions and Jack symmetric polynomials $\jack{\lambda}{t}$, recalling that the latter are parametrised by partitions $\lambda$ and a complex parameter $t$.  It is common to exclude the case where $t$ is rational and negative because the definition \cite{MacSym95} of $\jack{\lambda}{t}$, as a linear combination of monomial symmetric functions $\monsym{\mu}$ with $\mu \le \lambda$, may then fail for some $\lambda$.  More precisely, the coefficients of this linear combination, in the normalisation where $\jack{\lambda}{t}=\monsym{\lambda}+\cdots$, may diverge for $t \in \QQ_{<0}$.

For instance, consider the Jack polynomial labelled by the partition $[2,2]$:
\begin{equation}
\jack{[2,2]}{t}=\monsym{[2,2]}+\frac{2}{(t+1)}\monsym{[2,1,1]}+\frac{12}{(t+1)(t+2)}\monsym{[1,1,1,1]}.
\end{equation}
It clearly diverges at $t=-1$ and $-2$.
Note that if we set $t=-3$ and restrict to three variables (so that $\monsym{[1,1,1,1]}=0$), then it reduces to
\begin{equation}
	\jack{[2,2]}{-3}(z_1,z_2,z_3) = \monsym{[2,2]}(z_1,z_2,z_3)-\monsym{[2,1,1]}(z_1,z_2,z_3) = z_1^2 z_2^2 + z_1^2 z_3^2 + z_2^2 z_3^2 - z_1^2 z_2 z_3 - z_1 z_2^2 z_3 - z_1 z_2 z_3^2,
\end{equation}
which vanishes when $z_1=z_2=z_3$.  Moreover, the partition $[2,2]=[2,2,0]$ is admissible for three variables.
The essential insight of \cite{Feidif02} was to show that Jack polynomials with negative rational parameter $t$ remain well defined if the partitions are restricted by a suitable admissibility condition and that this restriction is interesting because the well defined Jack polynomials always span a space of symmetric polynomials that vanish when a certain number of variables coincide.

\begin{defn}
  Let \(\fsym{\ell}=\mathbb{C}[z_1,\dots,z_\ell]^{\symgp{\ell}}\) be the ring of complex symmetric polynomials in \(\ell\) variables and let \(I_\ell\) denote the ideal of symmetric polynomials \(f\in\fsym{\ell}\) that satisfy
  \begin{equation}
    f(z_1,\dots,z_\ell)=0 \quad \text{whenever} \quad z_1=z_2=z_3.
  \end{equation}
\end{defn}
\begin{thm}[Feigin, Jimbo, Miwa, Mukhin \cite{Feidif02}]
  \leavevmode
  \begin{enumerate}
  \item Let \(\lambda\in\pi_\ell\) be admissible. Then, the Jack polynomial
    \(\fjack{\lambda}{t}{z_1,\dots,z_\ell}\) is well defined at \(t=-3\), that
    is, its expansion into monomial symmetric polynomials has no poles at \(t=-3\).
  \item The Jack polynomials \(\fjack{\lambda}{-3}{z_1,\dots,z_\ell}\), with \(\lambda\) admissible,
    form a basis of the ideal \(I_\ell\).
  \item The ideal \(I_\ell\) is closed under the action of the differential operators
    \begin{equation}\label{eq:diffVirgens}
      \overline{L}_n=-\sum_{i=1}^\ell z_i^{n+1}\frac{\pd}{\pd z_i},\quad  n\ge -1,\ n\in\ZZ\,
    \end{equation}
    and therefore defines a module over a maximal subalgebra of the centreless Virasoro (or Witt) algebra.
  \end{enumerate}
\end{thm}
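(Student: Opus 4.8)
The weight of the theorem is in parts~(1) and~(2); part~(3) I would dispatch first, as it is elementary. Each \(\overline{L}_n\) with \(n \ge -1\) preserves \(\fsym{\ell}\), being invariant under the \(\symgp{\ell}\)-action permuting the \(z_i\), and a standard computation gives \(\comm{\overline{L}_m}{\overline{L}_n} = (m-n)\overline{L}_{m+n}\); together with the familiar fact that \(\set{\overline{L}_n \st n \ge -1}\) is a maximal subalgebra of the Witt algebra, this yields the asserted module structure once we know \(\overline{L}_n I_\ell \subseteq I_\ell\). For the latter I would argue geometrically: the common zero locus \(V = \set{z_1 = z_2 = z_3}\) is a linear subspace of \(\CC^\ell\), and the vector field \(X_n = \sum_i z_i^{n+1}\,\pd/\pd z_i\) is tangent to it, since at any point of \(V\) the first three components of \(X_n\), namely \(z_1^{n+1}\), \(z_2^{n+1}\) and \(z_3^{n+1}\), agree, so that \(X_n\) lies in the (constant) direction space of \(V\). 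Hence, for \(f \in I_\ell\) and \(p \in V\), the whole line \(p + \CC\,X_n(p)\) is contained in \(V\), on which \(f\) vanishes, so the directional derivative \((X_n f)(p) = -(\overline{L}_n f)(p)\) is zero. Thus \(\overline{L}_n f\) vanishes on \(V\), and by the \(\symgp{\ell}\)-symmetry of \(I_\ell\) it vanishes on every diagonal \(\set{z_i = z_j = z_k}\), whence \(\overline{L}_n f \in I_\ell\).

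For parts~(1) and~(2) I would follow the circle of ideas of \cite{Feidif02}. Recall that, at generic \(t\), \(\fjack{\lambda}{t}{z_1,\dots,z_\ell}\) is the unique eigenvector of the differential operator \(D\) diagonalised by the Jack polynomials that has the triangular form \(\monsym{\lambda} + \sum_{\mu < \lambda} c_{\lambda\mu}(t)\,\monsym{\mu}\); in the monomial basis \(D\) is itself triangular with respect to the dominance order, with diagonal entries the explicitly known numbers \(e_\mu(t)\), and the recursion successively fixing the \(c_{\lambda\mu}(t)\) divides, at each step, by a difference \(e_\lambda(t) - e_\mu(t)\). The only candidate poles of \(c_{\lambda\mu}(t)\) are therefore the parameters where such a difference vanishes for some \(\mu < \lambda\). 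The crux of part~(1) is the combinatorial fact that, when \(\lambda\) is admissible and \(t = -3\), every such resonant \(\mu\) is separated from \(\lambda\) by an off-diagonal entry of \(D\) that vanishes as well, so the apparent pole cancels and the monomial expansion of \(\fjack{\lambda}{-3}{z}\) stays finite. Proving this cancellation --- by a precise description of which \(\mu \in \parl{\ell}\) with \(\mu \le \lambda\) actually occur, combined with the explicit formula for \(e_\lambda(t)\) --- is the first place where I expect the argument to turn genuinely technical.

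Part~(2) I would then assemble in three steps. Linear independence of \(\set{\fjack{\lambda}{-3}{z} \st \lambda \in \parl{\ell} \ \text{admissible}}\) is immediate from their distinct leading terms \(\monsym{\lambda}\). That each lies in \(I_\ell\) is the clustering property of Jack polynomials at parameter \(-3\), namely that \(\fjack{\lambda}{-3}{z_1,\dots,z_\ell}\) vanishes whenever three of the variables coincide; I would prove this by induction on \(\ell\) using the specialisation properties of Jack polynomials, or extract it from the analysis of \cite{Feidif02}. For spanning, the triangularity already noted shows that \(\set{\monsym{\mu} \st \mu \in \parl{\ell} \ \text{not admissible}}\) descends to a spanning set of \(\fsym{\ell}/I_\ell\); to see it is a basis one must verify that \(\dim\brac*{\fsym{\ell}/I_\ell}_d\) equals, in every degree \(d\), the number of non-admissible partitions of \(d\) of length at most \(\ell\). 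Since \(\fsym{\ell}/I_\ell\) is the image of the algebra homomorphism \(\fsym{\ell} \to \CC\sqbrac*{w, z_4, \dots, z_\ell}\) induced by setting \(z_1 = z_2 = z_3 = w\), its Hilbert series is computable, and the required equality reduces to a finite but non-obvious combinatorial identity. This identity --- together with the resonance cancellation of part~(1) --- is where I expect the substance of \cite{Feidif02} to reside; the remaining ingredients, triangularity and the eigenfunction bookkeeping, are routine.
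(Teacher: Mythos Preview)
The paper does not prove this theorem: it is stated with attribution to \cite{Feidif02} and followed only by a remark on the parameter convention and the general $(k,r,\ell)$-admissibility notion. There is therefore no ``paper's own proof'' to compare against.

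Your outline is a fair summary of the strategy in \cite{Feidif02}. The argument you give for part~(3) is complete and correct: the vector fields $X_n$ are symmetric and tangent to the diagonal $\{z_1=z_2=z_3\}$, so they preserve $I_\ell$. For parts~(1) and~(2) you correctly identify the two genuinely non-trivial ingredients --- the cancellation of resonances $e_\lambda(-3)=e_\mu(-3)$ against vanishing off-diagonal entries when $\lambda$ is admissible, and the Hilbert-series count matching admissible versus non-admissible partitions --- and rightly defer their proofs to the reference. One small caution: your spanning argument for $\fsym{\ell}/I_\ell$ via the homomorphism $z_1=z_2=z_3=w$ computes the coordinate ring of the \emph{reduced} diagonal, whereas $I_\ell$ as defined is a priori only the ideal of functions vanishing on that locus; you would need to know that $I_\ell$ is radical (or compute the image ring directly) to turn this into a dimension count. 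In \cite{Feidif02} this is handled by a more direct combinatorial argument.
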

\noindent We mention that the Jack parameter $\beta$ used in \cite{Feidif02} is related to $t$ by $\beta = t^{-1}$.

\begin{rmk}
The general form of the admissibility condition introduced in \cite{Feidif02} depends upon two integers $k$ and $r$ for which $k+1$ and $r-1$ are coprime.  There, a partition $\lambda \in \pi_{\ell}$ is said to be $(k,r,\ell)$-admissible if
\begin{equation}
\lambda_i-\lambda_{k+i}\ge r,\quad 1\le i\le \ell-k.
\end{equation}
The authors then proved that the Jack polynomial $\jack{\lambda}{t}$, with $t=-(k+1)/(r-1)$, is well defined whenever $\lambda$ satisfies this admissibility condition.  We will only have need of the special case where $k=2$ and $r=2$, corresponding to the Jack parameter being $t=-3$, hence we refer to the $(2,2,\ell)$-admissible partitions as simply being admissible.
\end{rmk}

Recall from \cref{sec:Screenings} that $\van{z_1,\dots,z_\ell}=\prod_{1\le i<j\le \ell}(z_i-z_j)$ denotes the Vandermonde determinant.  As in that section, we will use this determinant here to trade skew-symmetric functions for symmetric ones.
\begin{prop} \label{prop:VanPf=Jack}
  For every positive integer $n$, we have
  \begin{subequations}\label{eq:vandpfaff}
    \begin{align}
      \label{eq:vandpfaffA}
      \van{z_1,\dots,z_{2n}}\pf\brac*{\frac{1}{z_i-z_j}}_{1\le i<j\le
        2n}&=
      \fjack{\admp{2n}{0,0}}{-3}{z_1,\dots,z_{2n}},\\
      \label{eq:vandpfaffB}
      \van{z_1,\dots,z_{2n}}\pf\brac*{\frac{z_i+z_j}{z_i-z_j}}_{1\le
        i<j\le 2n}&= \fjack{\admp{2n}{1,0}}{-3}{z_1,\dots,z_{2n}}.
    \end{align}
  \end{subequations}
\end{prop}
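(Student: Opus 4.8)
The plan is to recognise each left-hand side as an element of the ideal $I_{2n}$ that is homogeneous of the expected degree, then to invoke the theorem of Feigin, Jimbo, Miwa and Mukhin \cite{Feidif02} together with \cref{thm:uniqueadm} to identify it with a scalar multiple of the claimed Jack polynomial, and finally to pin the scalar down by computing a single coefficient.

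First I would observe that both products are genuinely polynomials: expanding the Pfaffian over perfect matchings $M$ of $\set{1,\dots,2n}$, the denominator $\prod_{(i,j)\in M}(z_i-z_j)$ of each term divides $\van{z_1,\dots,z_{2n}}$, so that $\van{z_1,\dots,z_{2n}}$ times that term equals $\pm\prod_{(i,j)\notin M}(z_i-z_j)$ in the case of \eqref{eq:vandpfaffA} and $\pm\prod_{(i,j)\notin M}(z_i-z_j)\prod_{(i,j)\in M}(z_i+z_j)$ in the case of \eqref{eq:vandpfaffB}. Since Pfaffians and Vandermonde determinants are both skew-symmetric, each product is a symmetric polynomial; it is moreover homogeneous, of degree $\binom{2n}{2}-n=\abs*{\admp{2n}{0,0}}$, respectively $\binom{2n}{2}=\abs*{\admp{2n}{1,0}}$, by \eqref{eq:AdmWt}. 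To see that these polynomials lie in $I_{2n}$, I would specialise $z_1=z_2$ using the recursion \eqref{eq:recpf'}: only the $j=2$ term survives the multiplication by $\van{z}$, and one finds that each product becomes $\prod_{j=3}^{2n}(z_1-z_j)^2$ times the analogous $(n-1)$-variable product (with an extra factor $2z_1$ in the second case). This manifestly vanishes when $z_3=z_1$, and symmetry propagates the vanishing to every coincidence of three variables. By \cite{Feidif02}, each product is therefore a linear combination of Jack polynomials $\fjack{\lambda}{-3}{z_1,\dots,z_{2n}}$ with $\lambda\in\pi_{2n}$ admissible of the relevant weight.

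For \eqref{eq:vandpfaffA} this already finishes the identification up to a scalar, since $\admp{2n}{0,0}$ is bounded above by every admissible partition in $\pi_{2n}$ and is therefore the \emph{unique} admissible partition in $\pi_{2n}$ of weight $\abs*{\admp{2n}{0,0}}$; hence $\van{z_1,\dots,z_{2n}}\pf\brac*{\tfrac{1}{z_i-z_j}}=c_n\fjack{\admp{2n}{0,0}}{-3}{z_1,\dots,z_{2n}}$ for some $c_n\in\CC$. For \eqref{eq:vandpfaffB} there are in general several admissible partitions of weight $\binom{2n}{2}$, so I would bound the monomial support in dominance order. Each factor in $\prod_{(i,j)\notin M}(z_i-z_j)\prod_{(i,j)\in M}(z_i+z_j)$ is attached to a distinct pair $\set{i,j}$, and in any monomial of the expansion the exponents of a chosen set $S$ of $k$ variables sum to at most the number of pairs meeting $S$, namely $\binom{2n}{2}-\binom{2n-k}{2}=\sum_{i=1}^{k}(2n-i)=\sum_{i=1}^{k}\admp{2n}{1,0}_i$ (using \eqref{eq:admpParts}). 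Thus every monomial $z^{\mu}$ that occurs satisfies $\mu\le\admp{2n}{1,0}$ in dominance; since the Jack polynomials are triangular, $\fjack{\lambda}{-3}{z_1,\dots,z_{2n}}=\monsym{\lambda}+(\text{dominance-lower terms})$, and \cref{thm:uniqueadm} asserts that the only admissible partition dominated by $\admp{2n}{1,0}$ is $\admp{2n}{1,0}$ itself, the linear combination collapses to $\van{z_1,\dots,z_{2n}}\pf\brac*{\tfrac{z_i+z_j}{z_i-z_j}}=d_n\fjack{\admp{2n}{1,0}}{-3}{z_1,\dots,z_{2n}}$.

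It then remains to check $c_n=d_n=1$, which I would do by comparing the coefficient of the leading monomial $z^{\admp{2n}{0,0}}$, respectively $z^{\admp{2n}{1,0}}$, on the two sides; because the Jack polynomials are monomial-normalised and the dominance-lower monomials cannot contribute, this coefficient equals the scalar. For \eqref{eq:vandpfaffA}, producing the exponent $2n-2$ on $z_1$ and then on $z_2$ forces $\set{1,2}\in M$, and iterating forces $M$ to be the ``parallel'' matching $\set{(1,2),(3,4),\dots}$; each factor is then forced to contribute its smaller-indexed variable, yielding coefficient $+1$. For \eqref{eq:vandpfaffB}, producing the exponents $\admp{2n}{1,0}_i=2n-i$ again forces each factor to contribute its smaller-indexed variable, but now \emph{for every} matching $M$, each contributing with coefficient $+1$; hence the coefficient of $z^{\admp{2n}{1,0}}$ is $\sum_{M}\sgn(M)=\pf(U)$, where $U=(U_{i,j})_{1\le i<j\le 2n}$ has all $U_{i,j}=1$, and \eqref{eq:recpf'} gives $\pf(U)=1$ by an immediate induction on $n$. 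I expect the main obstacle to be the dominance-support estimate underlying \eqref{eq:vandpfaffB}: although the counting is elementary, it is the crux and hinges on the numerical coincidence $\binom{2n}{2}-\binom{2n-k}{2}=\sum_{i=1}^{k}\admp{2n}{1,0}_i$ combined with \cref{thm:uniqueadm}; verifying membership in $I_{2n}$ via the Pfaffian recursion is the other point requiring care.
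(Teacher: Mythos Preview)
Your proof is correct and follows essentially the same strategy as the paper: show membership in $I_{2n}$, use the admissible-Jack basis together with degree/weight considerations to pin down the polynomial up to a scalar, then compute the leading coefficient via $\pf(1)=1$. The only notable variation is in identifying the Jack polynomial for \eqref{eq:vandpfaffB}: the paper bounds only the first two parts of any contributing partition and checks directly that $\admp{2n}{1,0}$ is the unique admissible partition of the right weight satisfying those bounds, whereas you establish the full dominance bound $\mu\le\admp{2n}{1,0}$ via the pair-counting identity and then invoke \cref{thm:uniqueadm}; both arguments are short and yield the same conclusion.
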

\noindent The identity \eqref{eq:vandpfaffA} appears to have been originally stated in \cite{BerMod07}, though without proof.  To the best of our knowledge, the identity \eqref{eq:vandpfaffB} is new.  We shall prove both identities here for completeness.
\begin{proof}
  We first show that the two \lhss{} of \eqref{eq:vandpfaff} lie in the ideal \(I_{2n}\). Formula \eqref{eq:explpf} implies that these \lhss{} are equal to
  \begin{equation}\label{eq:explpf2}
    \van{z_1,\dots,z_{2n}} \sum_{\sigma\in\Pi} \sgn(\sigma)\prod_{i=1}^n 
    \frac{(z_{\sigma(2i-1)}+z_{\sigma(2i)})^a}{z_{\sigma(2i-1)}-z_{\sigma(2i)}},
  \end{equation}
  where we set $a=0$ for \eqref{eq:vandpfaffA} and $a=1$ for \eqref{eq:vandpfaffB}.
  This product is clearly symmetric and, as each factor \(z_i-z_j\), for \(1\le i< j\le 2n\), appears exactly once in the Vandermonde
  determinant and at most once in the denominator of each summand of the pfaffians, the product is a polynomial.  However, each variable \(z_i\), for \(1\le i\le 2n\), appears exactly once in the denominator of each summand of the pfaffian, where it is paired with a unique \(z_j\) such that the factor $z_i - z_j$ of the summand cancels that of $\van{z_1,\dots,z_{2n}}$.  The product \eqref{eq:explpf2} might therefore be non-zero if $z_i=z_j$.  However, if three variables are equal, then the product must vanish and so the \lhss{} of \eqref{eq:vandpfaff} lie in the ideal \(I_{2n}\).

  We next prove \eqref{eq:vandpfaffA}.  Note first that the total degree of its \lhs{} is \(2n(n-1) = \abs*{\admp{2n}{0,0}}\).  As $\admp{2n}{0,0}$ is the unique admissible partition of this weight, and as this weight is minimal for all admissible partitions in $\pi_{2n}$, this forces the \lhs{} to be proportional to \(\fjack{\admp{2n}{0,0}}{-3}{z_1,\dots,z_{2n}}\). Equality then follows by expanding the \lhs{} in monomial symmetric polynomials $\monsym{\lambda}$ and showing that the coefficient of
  \begin{equation}
  \monsym{\admp{2n}{0,0}} = z_1^{2(n-1)}z_2^{2(n-1)}z_3^{2(n-2)}z_4^{2(n-2)}\cdots + \cdots
  \end{equation}
  is $1$.  The only summand of \eqref{eq:explpf2} which gives rise to this monomial is the one for which \(\sigma=\id\) and its coefficient is indeed $1$.

  For \eqref{eq:vandpfaffB}, the total degree of the \lhs{} is, instead, \(n(2n-1) = \abs*{\admp{2n}{1,0}}\).  Moreover, if we expand \eqref{eq:explpf2} in the $\monsym{\lambda}$, then the maximal exponent appearing in each monomial is bounded above by $2n-1$ and the next-to-maximal exponent by $2n-2$.  The admissible partitions of length $2n$ and weight $n(2n-1)$ all have parts that violate these bounds, except $\admp{2n}{1,0}$.  It therefore follows that the \lhs{} of \eqref{eq:vandpfaffB} is proportional to \(\fjack{\admp{2n}{1,0}}{-3}{z_1,\dots,z_{2n}}\). As the coefficient of the monomial symmetric polynomial \(\monsym{\admp{2n}{1,0}} = z_1^{2n-1}z_2^{2n-2}z_3^{2n-3}\cdots + \cdots\) is
  \begin{equation}
    \sum_{\sigma\in\Pi} \sgn(\sigma)=\pf\left(1\right)_{1\le i<j\le 2n} =1,
  \end{equation}
  the last equality following easily from the recursion formula \eqref{eq:recpf}, the proportionality factor is again $1$.
\end{proof}

\begin{prop}\label{thm:shiftedcor}
  Let \(n\) be a positive integer.  In the Neveu-Schwarz sector, the fermion correlation functions satisfy the following identities:
  \begin{subequations}\label{eq:NSshiftedcor}
    \begin{align}
      \prod_{i=1}^{2n}z_i^{-(2n-2)}\cdot\van{z_1,\dots,z_{2n}}\NScorrfn{b(z_1+w)\cdots
        b(z_{2n}+w)}&=
      \fjack{\admp{2n}{0,0}}{-3}{z_1^{-1},\dots,z_{2n}^{-1}}, \label{eq:NSshcorr1} \\
      \prod_{i=1}^{2n-1}z_i^{-(2n-3)}\cdot\van{z_1,\dots,z_{2n-1}}
      \NScorrfn{b(z_1+w)\cdots b(z_{2n-1}+w)b(w)}&=
      \fjack{\admp{2n-1}{0,0}}{-3}{z_1^{-1},\dots,z_{2n-1}^{-1}}. \label{eq:NSshcorr2}
    \end{align}
  \end{subequations}
  In the Ramond sector, they instead satisfy
  \begin{subequations}\label{eq:Rshiftedcor}
    \begin{align}
      &2^n\prod_{i=1}^{2n}(z_i+w)^{\sfrac12}\cdot\prod_{i=1}^{2n}z_i^{-(2n-1)}\cdot
      \van{z_1,\dots,z_{2n}}\Rcorrfn{b(z_1+w)\cdots b(z_{2n}+w)} \notag \\
      &\mspace{300mu}=\sum_{m=0}^n c_m^{(2n)} \fjack{[2n-1-\uniqp{2n}{m}]}{-3}{z_1^{-1},\dots,z_{2n}^{-1}} w^{n-m}, \label{eq:Rshcorr1} \\
      &2^n\prod_{i=1}^{2n-1}(z_i+w)^{\sfrac12}\cdot\prod_{i=1}^{2n-1}z_i^{-(2n-2)}\cdot
      \van{z_1,\dots,z_{2n-1}}\Rcorrfn{b_0b(z_1+w)\cdots b(z_{2n-1}+w)} \notag \\
      &\mspace{300mu}=\sum_{m=0}^{n-1} c_m^{(2n-1)} \fjack{[2n-2-\uniqp{2n-1}{m}]}{-3}{z_1^{-1},\dots,z_{2n-1}^{-1}} w^{n-1-m}, \label{eq:Rshcorr2}
    \end{align}
  \end{subequations}
  where the $c_\lambda^{(\ell)} \in \CC$ are constants, the $\uniqp{\ell}{m}$ are the partitions of \cref{def:uniqp}, and $[k-\lambda]$ is the involution of \cref{thm:adminversion}.
\end{prop}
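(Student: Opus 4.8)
The plan is to feed the pfaffian formulae of the preceding proposition (\eqref{eq:NSpf}, \eqref{eq:Rpf}) into the Vandermonde--pfaffian identities of \cref{prop:VanPf=Jack}, and then convert the outcome into the stated form using two standard facts about Jack polynomials: \emph{stability} (\(\fjack{\lambda}{t}{z_1,\dots,z_{\ell-1},0}=\fjack{\lambda}{t}{z_1,\dots,z_{\ell-1}}\) when \(\ell-1\ge\ell(\lambda)\)) and the \emph{complement} relation \(\prod_{i}z_i^{k}\,\fjack{\mu}{t}{z_1^{-1},\dots,z_\ell^{-1}}=\fjack{[k-\mu]}{t}{z_1,\dots,z_\ell}\), valid whenever \(\mu_1\le k\). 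Although the complement relation is usually quoted for generic \(t\), it is rational in \(t\) with no pole at \(t=-3\) as soon as \emph{both} \(\mu\) and \([k-\mu]\) are admissible, the latter being guaranteed by \cref{thm:adminversion}; so it passes to \(t=-3\) by continuity. I will also use freely the fact (noted in \cref{sec:AdmPart}) that every admissible partition in \(\pi_\ell\) contains \(\admp{\ell}{0,0}\).

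\textbf{The Neveu--Schwarz identities.} Since each \(z_i-z_j\) is unchanged under \(z_k\mapsto z_k+w\), the correlator \(\NScorrfn{b(z_1+w)\cdots b(z_{2n}+w)}\) is independent of \(w\) and equals \(\pf\bigl(1/(z_i-z_j)\bigr)\). Multiplying by \(\van{z_1,\dots,z_{2n}}\) and using \eqref{eq:vandpfaffA} gives \(\fjack{\admp{2n}{0,0}}{-3}{z_1,\dots,z_{2n}}\); as \eqref{eq:admpParts} shows \(\admp{2n}{0,0}\) is self-complementary with \(k=\admp{2n}{0,0}_1=2n-2\), the complement relation yields \eqref{eq:NSshcorr1}. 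For \eqref{eq:NSshcorr2} write \(b(w)=b(0+w)\), apply \eqref{eq:vandpfaffA} in the \(2n\) variables \(z_1,\dots,z_{2n-1},z_{2n}\), set \(z_{2n}=0\), invoke stability (\(\ell(\admp{2n}{0,0})=2n-2\le 2n-1\)) to get \(\fjack{\admp{2n}{0,0}}{-3}{z_1,\dots,z_{2n-1}}\), and check from \eqref{eq:admpParts} that \([\,(2n-2)-\admp{2n-1}{0,0}\,]=\admp{2n}{0,0}\) as partitions of length \(\le 2n-1\); the complement relation then closes the case.

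\textbf{The Ramond identities.} Here \(w\) survives: substituting \(z_i\mapsto z_i+w\) into \eqref{eq:Rpf} and cancelling \(2^n\prod_i(z_i+w)^{1/2}\) against \(2^{-n}\prod_i(z_i+w)^{-1/2}\) turns the left side of \eqref{eq:Rshcorr1} into \(\prod_i z_i^{-(2n-1)}P_w\), where \(P_w:=\van{z_1,\dots,z_{2n}}\,\pf\bigl(\tfrac{z_i+z_j+2w}{z_i-z_j}\bigr)\). As in the proof of \cref{prop:VanPf=Jack}, \(P_w\) is a genuine polynomial, symmetric in the \(z_i\), and lies in \(I_{2n}\) for each \(w\); assigning \(w\) and the \(z_i\) common degree \(1\) makes every matrix entry degree \(0\), so \(P_w\) is jointly homogeneous of degree \(\binom{2n}{2}=n(2n-1)\), and since its \(w\)-degree is at most \(n\) we may write \(P_w=\sum_{m=0}^n w^{n-m}R_m\) with \(R_m\in I_{2n}\) homogeneous of degree \(2n(n-1)+m=\abs*{\uniqp{2n}{m}}\) (the endpoints being \(R_0=2^n\fjack{\admp{2n}{0,0}}{-3}{z}\) and \(R_n=\fjack{\admp{2n}{1,0}}{-3}{z}\) by \eqref{eq:vandpfaffA}, \eqref{eq:vandpfaffB}). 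The crux is to show that each \(R_m\) is a scalar multiple of \(\fjack{\uniqp{2n}{m}}{-3}{z}\). Since \(R_m\in I_{2n}\), the theorem of Feigin, Jimbo, Miwa and Mukhin expands it over the admissible Jack basis; reading off \eqref{eq:explpf}, each summand of \(R_m\) is a Vandermonde monomial (a permutation of \((2n-1,\dots,1,0)\)) divided by \(n\) disjoint factors \(z_i-z_j\) and multiplied by \(m\) binomials \(z_i+z_j\) built from disjoint index pairs; counting how many of its factors a given variable can "win" shows that in every monomial of \(R_m\) at most one exponent equals \(2n-1\), so the dominance-maximal partition \(\lambda^{\ast}\) with nonzero coefficient (whose monomial genuinely occurs in \(R_m\), as \(\fjack{\lambda^{\ast}}{-3}{}=\monsym{\lambda^{\ast}}+\cdots\) and no other term can contribute it) satisfies \(\lambda^{\ast}_1\le 2n-1\), \(\lambda^{\ast}_2\le 2n-2\), hence \(\lambda^{\ast}\subseteq\admp{2n}{1,0}\) by admissibility. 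Being admissible, \(\lambda^{\ast}\) also contains \(\admp{2n}{0,0}\), so \cref{thm:uniqueadm} forces \(\lambda^{\ast}=\uniqp{2n}{m}\) (the unique admissible partition of its weight in that box); thus \(\uniqp{2n}{m}\) is the \emph{unique} dominance-maximal term and all other admissible partitions occurring are strictly dominated by it --- but the last assertion of \cref{thm:uniqueadm} says no admissible partition other than \(\uniqp{2n}{m}\) is dominated by \(\uniqp{2n}{m}\), so there are none, and \(R_m\) is a single Jack polynomial. Finally, \(\prod_i z_i^{-(2n-1)}R_m\) is rewritten, via the complement relation with \(k=2n-1\) and \cref{thm:adminversion}, as a multiple of \(\fjack{[2n-1-\uniqp{2n}{m}]}{-3}{z^{-1}}\), which is exactly the \(w^{n-m}\)-coefficient in \eqref{eq:Rshcorr1}; the \(c_m^{(2n)}\) absorb the rest. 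Identity \eqref{eq:Rshcorr2} goes the same way, except that the \(b_0\) insertion is the \(z\to\infty\) limit of a fermion (compare the \(g_n\) of the previous proof), so the pfaffian gains a row and column of \(1\)'s; this lowers the \(w\)-degree to \(n-1\) and the exponent bounds by one, matching \(\admp{2n-1}{0,0}\), \(\admp{2n-1}{1,0}\) and the sum \(\sum_{m=0}^{n-1}\).

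\textbf{Main obstacle.} The delicate step is the claim that \(R_m\) (and its odd-length analogue) is a \emph{single} Jack polynomial: one must turn the exponent bounds coming from \eqref{eq:explpf} into the containment \(\lambda^{\ast}\subseteq\admp{2n}{1,0}\) and then use both halves of \cref{thm:uniqueadm} (with \cref{thm:admissiblebounds}) to eliminate every lower-dominance contribution. Everything else is routine bookkeeping with pfaffians and with the stability and complement properties of Jack polynomials.
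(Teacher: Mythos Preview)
Your argument is correct and, in its overall architecture, follows the paper's proof: feed the pfaffian formulae into \cref{prop:VanPf=Jack}, get something in the ideal $I_\ell$, pin down the admissible partitions via \cref{thm:uniqueadm}, and invert variables via the complement relation. For the Neveu--Schwarz identities your proof is essentially identical to the paper's (the paper phrases the inversion step as ``lies in $I_\ell$, hence proportional to the unique Jack of that weight, and the leading monomial fixes the constant'', which is the complement relation unpacked).

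For the Ramond identities there is one genuine difference worth noting. You bound the exponents in $R_m$ by a direct monomial analysis of the pfaffian expansion (``at most one exponent equals $2n-1$'', then admissibility propagates the bound to $\lambda^{\ast}\subseteq\admp{2n}{1,0}$). This works, though the monomial counting is slightly delicate. The paper instead observes at the outset that, by translation invariance of the Vandermonde and \eqref{eq:vandpfaffB},
\[
P_w \;=\; \fjack{\admp{2n}{1,0}}{-3}{z_1+w,\dots,z_{2n}+w},
\]
i.e.\ the entire left side is a \emph{shifted Jack polynomial}. This single observation delivers both $R_m\in I_{2n}$ (vanishing on triple diagonals is shift-invariant) and the exponent bound $\lambda^{\ast}\subseteq\admp{2n}{1,0}$ (expanding $(z_i+w)^{a_i}$ never increases exponents), with no pfaffian bookkeeping. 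Your route has the advantage of being self-contained from the pfaffian; the paper's route is shorter and explains \emph{why} the bounds are exactly those of $\admp{2n}{1,0}$. The odd case \eqref{eq:Rshcorr2} is handled similarly in both: the paper first takes $z_0\to\infty$ in the unshifted even formula to get $\fjack{\admp{2n-1}{1,0}}{-3}{z}$ and then repeats the shift argument, which is what your ``row and column of $1$'s'' sketch amounts to once fleshed out.
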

\begin{proof}
  For the Neveu-Schwarz sector, we first note that the correlator in \eqref{eq:NSshcorr1} is a translation-invariant pfaffian, by \eqref{eq:NSpf}, hence its \lhs{} does not depend on \(w\). By \cref{prop:VanPf=Jack}, this \lhs{} is
  \begin{equation} \label{eq:NSCorrJack}
    \prod_{i=1}^{2n}z_i^{-(2n-2)}\cdot \fjack{\admp{2n}{0,0}}{-3}{z_1,\dots,z_{2n}}.
  \end{equation}
  Since the first (and therefore maximal) part of any partition dominated by
  \(\admp{2n}{0,0}\) is, at most, \(\admp{2n}{0,0}_1 = 2n-2\), this is the maximal exponent of any \(z_i\) in the expansion of \(\fjack{\admp{2n}{0,0}}{-3}{z_1,\dots,z_{2n}}\) into monomials.  Therefore, \eqref{eq:NSCorrJack} is a symmetric polynomial of total degree \(2n(n-1)\) in the inverted variables \(z_i^{-1}\) that
  lies in the ideal \(I_l\). It is thus proportional to \(\fjack{\admp{2n}{0,0}}{-3}{z_1^{-1},\dots,z_{2n}^{-1}}\). The equality \eqref{eq:NSshcorr1} now follows from the identity
  \begin{equation}
    \prod_{i=1}^{2n}z_i^{-(2n-2)} \cdot \fmonsym{\admp{2n}{0,0}}{z_1,\dots,z_{2n}} = \fmonsym{[2(n-1)-\admp{2n}{0,0}]}{z_1^{-1},\dots,z_{2n}^{-1}}
  \end{equation}
  and noting that \([2(n-1)-\admp{2n}{0,0}]=\admp{2n}{0,0}\).
  
  For \eqref{eq:NSshcorr2}, note first that $\van{z_1, \ldots, z_{2n-1}, 0} = \prod_{i=1}^{2n-1} z_i \cdot \van{z_1, \ldots, z_{2n-1}}$ implies that its \lhs{} is equal to that of \eqref{eq:NSshcorr1}, multiplied by $z_{2n}^{2n-2}$ and then evaluated at $z_{2n} = 0$.  The \lhs{} of \eqref{eq:NSshcorr2} therefore simplifies to
  \begin{equation}
    \prod_{i=1}^{2n-1}z_i^{-(2n-2)}\cdot\fjack{\admp{2n}{0,0}}{-3}{z_1,\dots,z_{2n-1},0} 
    = \prod_{i=1}^{2n-1}z_i^{-(2n-2)}\cdot\fjack{\admp{2n-1}{2,0}}{-3}{z_1,\dots,z_{2n-1}}.
  \end{equation}
  As $[2n-2-\admp{2n-1}{2,0}] = \admp{2n-1}{0,0}$, the equality with the \rhs{} follows by the same reasoning as before. 

  In the Ramond sector, correlation functions are not translation-invariant. Nevertheless, the translation-invariance of the Vandermonde determinant still allows us to conclude that
  \begin{equation} \label{eq:RCorrJack}
    2^n \prod_{i=1}^{2n}\:(z_i+w)^{\sfrac12}\cdot\van{z_1,\dots,z_{2n}}\Rcorrfn{b(z_1+w)\cdots b(z_{2n}+w)}
        =\fjack{\admp{2n}{1,0}}{-3}{z_1+w,\dots,z_{2n}+w},
  \end{equation}
  using \eqref{eq:Rpf} and \eqref{eq:vandpfaffB}.  Expand \(\fjack{\admp{2n}{1,0}}{-3}{z_1+w,\dots,z_{2n}+w}\) in powers of \(w\):
  \begin{equation}
    \fjack{\admp{2n}{1,0}}{-3}{z_1+w,\dots,z_{2n}+w}=\:\sum_{k=0}^{\mathclap{\abs*{\admp{2n}{1,0}}}}\:
    f_k(z_1,\dots,z_{2n})w^k,\quad
    f_k\in \Lambda_{2n}.
  \end{equation}
  Further, since \(\fjack{\admp{2n}{1,0}}{-3}{z_1+w,\dots,z_{2n}+w}\) vanishes if three or more of the \(z_i\) coincide, the same is true for the \(f_k\).  They therefore lie in the ideal \(I_{2n}\) and are thus linear combinations of Jack polynomials
  \(\fjack{\lambda}{-3}{z_1, \ldots, z_{2n}}\), where the $\lambda$ are admissible.  Moreover, the exponents of the \(z_i\) in each polynomial $f_k$ are bounded by the maximal exponents in \(\fjack{\admp{2n}{1,0}}{-3}{z_1,\dots,z_{2n}}\). Thus, the admissible partitions \(\lambda\) that appear when expanding the \(f_k\) into Jack polynomials must be bounded above by \(\admp{2n}{1,0}\). 
  \cref{thm:uniqueadm} now implies that each of these partitions must be one of the $\uniqp{2n}{m}$, with $0 \le m \le n$, of \cref{def:uniqp}.  By comparing total degrees, we arrive at
  \begin{equation} \label{eq:Jack=Jack}
    \fjack{\admp{2n}{1,0}}{-3}{z_1+w,\dots,z_{2n}+w}=
    \sum_{m=0}^n c_m^{(2n)}\fjack{\uniqp{2n}{m}}{-3}{z_1,\dots,z_{2n}} w^{n-m},\quad
    c_m^{(2n)}\in \CC.
  \end{equation}
  Now, \cref{thm:uniqueadm} also states that any admissible partition \(\mu \in \pi_{2n}\) dominated
  by one of the $\uniqp{2n}{m}$ is equal to $\uniqp{2n}{m}$.  The reasoning followed in the \ns{} sector thus applies, yielding
  \begin{equation}
    \prod_{i=1}^{2n}z_i^{-(2n-1)}\cdot\fjack{\admp{2n}{1,0}}{-3}{z_1+w,\dots,z_{2n}+w}=
    \sum_{m=0}^n c_m^{(2n)}\fjack{[2n-1-\uniqp{2n}{m}]}{-3}{z_1^{-1},\dots,z_{2n}^{-1}} w^{n-m}.
  \end{equation}
  Substituting into \eqref{eq:RCorrJack} finally gives \eqref{eq:Rshcorr1}.  Note that setting $w=0$ in \eqref{eq:Jack=Jack} gives
  \begin{equation}
    \fjack{\admp{2n}{1,0}}{-3}{z_1,\dots,z_{2n}} = c_n^{(2n)} \fjack{\uniqp{2n}{n}}{-3}{z_1,\dots,z_{2n}}
    = c_n^{(2n)} \fjack{\admp{2n}{1,0}}{-3}{z_1,\dots,z_{2n}},
  \end{equation}
  hence $c_n^{(2n)} = 1$.

  The Ramond identity \eqref{eq:Rshcorr2} for an odd number of variables can be derived as a limit of
  the case of an even number of variables. Consider \eqref{eq:Rshcorr1} with $w=0$ and $z_i \to z_{i-1}$ (noting that $[2n-1-\uniqp{2n}{n}] = [2n-1-\admp{2n}{1,0}] = \admp{2n}{1,0}$):
  \begin{equation}
    2^n\prod_{i=0}^{2n-1}z_i^{-(2n-1)}\cdot\prod_{i=0}^{2n-1}z_i^{\sfrac12}\cdot\van{z_0,\dots,z_{2n-1}}
    \Rcorrfn{b(z_0)\cdots b(z_{2n-1})} = \fjack{\admp{2n}{1,0}}{-3}{z_0^{-1},\dots,z_{2n-1}^{-1}}.
  \end{equation}
  Taking the limit as \(z_0\rightarrow \infty\) now gives
  \begin{align}
    2^n\prod_{i=1}^{2n-1}z_i^{-(2n-1)}\cdot\prod_{i=1}^{2n-1} z_i^{\sfrac12}\cdot\van{z_1,\ldots z_{2n-1}}\Rcorrfn{b_0b(z_1)\cdots b(z_{2n-1})}
    &=\fjack{\admp{2n}{1,0}}{-3}{0,z_1^{-1},\dots,z_{2n-1}^{-1}} \notag \\
    &=\fjack{\admp{2n-1}{2,1}}{-3}{z_2^{-1},\dots,z_{2n}^{-1}}\\
    \Rightarrow \qquad 2^n\prod_{i=1}^{2n-1} z_i^{\sfrac12} \cdot \van{z_1,\ldots z_{2n-1}} \Rcorrfn{b_0b(z_1)\cdots b(z_{2n-1})}
    &=\fjack{\admp{2n-1}{1,0}}{-3}{z_1,\dots,z_{2n-1}}.
  \end{align}
  This formula, together with the translation-invariance of the Vandermonde determinant, gives the starting point for analysing the case where the \(z_i\) are shifted by \(w\), analogous to \eqref{eq:RCorrJack}.  The rest of the argument is identical to that described above and we shall omit it, remarking only that it shows that $c_{n-1}^{(2n-1)} = 1$.
\end{proof}

\begin{rmk}
Note that combining \eqref{eq:vandpfaffB} with \eqref{eq:Jack=Jack} results in
\begin{align}
\sum_{m=0}^n c_m^{(2n)}\fjack{\uniqp{2n}{m}}{-3}{z_1,\dots,z_{2n}} w^{-m}
&= \van{z_1, \ldots, z_{2n}} w^{-n} \pf\brac*{\frac{z_i+z_j+2w}{z_i-z_j}}_{1\le i<j\le 2n} \notag \\
&=\van{z_1, \ldots, z_{2n}} \pf\brac*{\frac{1}{w}\frac{z_i+z_j}{z_i-z_j} + \frac{2}{z_i-z_j}}_{1\le i<j\le 2n},
\end{align}
so taking $w \to \infty$ yields
\begin{equation}
c_0^{(2n)} \fjack{\admp{2n}{0,0}}{-3}{z_1,\dots,z_{2n}} = \van{z_1, \ldots, z_{2n}} \pf\brac*{\frac{2}{z_i-z_j}}_{1\le i<j\le 2n} = 2^n \fjack{\admp{2n}{0,0}}{-3}{z_1,\dots,z_{2n}},
\end{equation}
by \eqref{eq:vandpfaffA}.  We conclude that $c_0^{(2n)} = 2^n$.  The coefficient \(c_0^{(2n)}\) being non-zero, in turn implies that
the remaining \(c_m^{(2n)}\) are also non-zero. One can see this by showing that the Taylor expansion \eqref{eq:Jack=Jack} is equivalent to
\begin{align}
  \fjack{\admp{2n}{1,0}}{-3}{z_1+w,\dots,z_{2n}+w}=\sum_{m = 0}^n
  \frac{(-\overline{L}_{-1})^m}{m!}
  \fjack{\admp{2n}{1,0}}{-3}{z_1,\dots,z_{2n}} w^m,
\end{align}
where \(\overline{L}_{-1}\) is one of the differential operators of
\eqref{eq:diffVirgens}. Since \(c_0^{(2n)}\neq0\), it follows that
\(\overline{L}_{-1}^n \fjack{\admp{2n}{1,0}}{-3}{z_1,\dots,z_{2n}}\neq0\)
and thus \(\overline{L}_{-1}^m
\fjack{\admp{2n}{1,0}}{-3}{z_1,\dots,z_{(2n)}}\neq0\), for all $0\le m \le n$. Consequently, \(c_{n-m}^{(2n)}\neq0\), for all \(0 \le m\le n\). However, we do not need this result for the classifications that follow.
\end{rmk}

\section{The minimal model spectrum}\label{sec:zhu}

Zhu's algebra \cite{ZhuMod96} formalises the notion of the algebra of zero modes acting on ground states, these being vectors that are annihilated by all modes of strictly positive index \cite{FeiAnn92,FreVer92,MatZer05}. It is the most important tool for classifying modules over \voas{}. The generalisation of Zhu's algebra to modules over a \vosa{} and its twist for the Ramond sector were first formulated in \cite{Kacn1z94,DonTwi98}. We give a brief overview of twisted Zhu algebras, fixing our notation and emphasising the motivation behind the definitions, in \cref{sec:twistedZhu}. Here, we combine this twisted Zhu theory with the symmetric polynomial technology developed in the previous section to classify the simple $\MinMod{p_+}{p_-}$-modules.

\subsection{Constructing the singular vector $\vsv{p_+,p_-}$} \label{sec:SVConstruction}

The main obstacles to being able to determine Zhu's algebra for the $N=1$ minimal model \vosa{}
\(\MinMod{p_+}{p_-}\) are finding an explicit formula for the singular vector \(\vsv{p_+,p_-}\) in the universal $N=1$ \vosa{} \(\svc{c_{p_+,p_-}}\) (see \cref{prop:N=1Simple}) and then evaluating the action of its zero mode, or that of its descendants, on ground states. This would allow one to determine the images of the ideal generated by $\vsv{p_+,p_-}$ in the Zhu algebra \(\zhu{\svc{c_{p_+,p_-}}}\) and the twisted Zhu algebra \(\tzhu{\svc{c_{p_+,p_-}}}\). The untwisted and twisted
Zhu algebras of \(\MinMod{p_+}{p_-}\) are then the quotients of those of \(\svc{c_{p_+,p_-}}\) by the respective images \cite{Kacn1z94,DonTwi98}.

The free field realisation \eqref{eq:ffr} solves both the problem of finding
the singular vector and the problem of evaluating its zero
mode. Recall that the screening operators \(\scrs{\pm}{k}\), constructed in
\cref{sec:scrdef}, are module homomorphisms of the \(N=1\) superconformal
algebra. Acting with either on a \hwv{}
\(\NSffket{q}\) of appropriate highest weight \(q\) thus gives a singular vector or zero.
For definiteness, we will only use the screening operator $\scrs{+}{k}$ in
what follows.  We will also, without loss of generality, assume from here on
that $p_- > p_+$ and that $c_{p_+,p_-}$ is an $N=1$ minimal model central
charge, so that $p_+ \ge 2$, $p_- - p_+ \in 2 \ZZ$ and $\gcd
\set{\frac{1}{2} (p_- - p_+), p_-} = 1$.
\begin{lem}\label{thm:svlem}
  The vector $\scrs{+}{p_+-1}\NSffket{-(p_+-1)\alpha_+}$ is non-zero, hence it may be identified with the \sv{} of \cref{prop:N=1Simple} that generates the maximal ideal of \(\svc{c_{p_+,p_-}}\subset \ffvoa{\alpha_0}\):
  \begin{equation}\label{eq:VSV}
    \vsv{p_+,p_-}=\scrs{+}{p_+-1}\NSffket{-(p_+-1)\alpha_+}.
  \end{equation}
\end{lem}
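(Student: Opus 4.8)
The assertion has two halves: that $v \coloneqq \scrs{+}{p_+-1}\NSffket{-(p_+-1)\alpha_+}$ is non-zero, and that it then equals $\vsv{p_+,p_-}$ up to a scalar. For the second half, one first checks, using $\alpha_0 = \alpha_+ + \alpha_-$, $\alpha_+\alpha_- = -1$ and $\alpha_+^2 = p_-/p_+$, that $h^{\NS}_{-(p_+-1)\alpha_+} = \tfrac12(p_+-1)(p_--1)$, the conformal weight of $\vsv{p_+,p_-}$; since the $\scrs{+}{k}$ are $\svc{c}$-module homomorphisms (\cref{sec:scrdef}), $v$ is a singular vector of this weight in $\ffvoa{\alpha_0} = \bNSFock{0}$, or zero. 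By \cref{thm:ffr} and the structure of $\bNSFock{0}$ as a $\svc{c_{p_+,p_-}}$-module determined by Iohara and Koga, the submodule generated by $\NSffket{0}$ is isomorphic to $\svc{c_{p_+,p_-}}$, its unique (up to scalars) singular vector of weight $\tfrac12(p_+-1)(p_--1)$ is the image of $\vsv{p_+,p_-}$, and no singular vector of this weight lies outside it; hence, once $v \ne 0$ is established, the identification \eqref{eq:VSV} follows.

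So the real task is to prove $v \ne 0$, and the plan is to exhibit one non-zero matrix coefficient. Substitute the composition formula \eqref{eq:ScrComp} for $k = p_+-1$ screening fields $\scrf{+}{z}$ and act on $\NSffket{-(p_+-1)\alpha_+} = \ketb{-(p_+-1)\alpha_+}\otimes\NSket$: the bosonic annihilation modes drop out, $a_0$ becomes $-(p_+-1)\alpha_+$, and $\ee^{(p_+-1)\alpha_+\hat a}$ sends $\ketb{-(p_+-1)\alpha_+}$ to $\ketb{0}$, so that $v$ equals $\ketb{0}$ tensored with a $\cyc{p_+-1}{\alpha_0}$-integral whose integrand is the product of the vertex-operator kernel $\prod_{i\ne j}(1-z_i/z_j)^{-\alpha_0/2\alpha_-}$, a monomial $\prod_i z_i^{c_0}$ with integer exponent $c_0 = -\tfrac12(p_-+p_+-2)$ (an integer since $p_-\equiv p_+\bmod 2$), the bosonic exponential $\prod_{m\ge1}\exp(\tfrac{\alpha_+}{m}a_{-m}\sum_i z_i^m)\ketb{0}$, and the fermionic factor $\van{z}\,b(z_1)\cdots b(z_{p_+-1})\NSket$. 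Pair $v$ against a dual vector $\brab{0}\,P(a)\otimes\brab{w}$, where $P$ is a polynomial in the positive Heisenberg modes and $\brab{w}\in(\NSFock)^*$ sits in the appropriate grade. Pairing against $\brab{0}P(a)$ converts the bosonic exponential into a homogeneous symmetric polynomial $h_P(z_1,\dots,z_{p_+-1})$ of degree $\tfrac12(p_+-1)(p_--1)$, which may be prescribed freely via the dictionary $a_{-m}\leftrightarrow\powsum{m}$; pairing against $\brab{w}$ converts the (symmetric in the $z_i$) factor $\van{z}\,b(z_1)\cdots b(z_{p_+-1})\NSket$ into $\van{z}$ times a fermion correlator, which by \cref{prop:VanPf=Jack} (take $\brab{w} = \brab{\NS}$ when $p_+$ is odd, so $p_+-1$ is even) or its one-fewer-variable analogue, proved as in \cref{prop:VanPf=Jack} (take $\brab{w} = \brab{\NS}b_{\sfrac{1}{2}}$ when $p_+$ is even) equals a non-zero multiple of $\fjack{\admp{p_+-1}{0,0}}{-3}{z_1,\dots,z_{p_+-1}}$.

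The matrix coefficient is therefore, up to the non-zero cycle normalisation \eqref{eq:intnorm}, the inner product $\jprod{f,h_P}{p_+-1}{t}$ of \eqref{eq:intprod} with $1/t = -\alpha_0/2\alpha_- = (p_--p_+)/2p_+$, where $f$ is obtained from $\prod_i z_i^{c_0+1}\fjack{\admp{p_+-1}{0,0}}{-3}{z}$; by \cref{thm:adminversion} (and using $p_- - p_+ \ge 2$) this is an honest symmetric polynomial, namely a multiple of a Jack polynomial at parameter $-3$ for an admissible partition, of degree $\tfrac12(p_+-1)(p_--1)$, i.e.\ the same degree as $h_P$. Expanding $f$ in the Jack basis $\{\jack{\lambda}{t}\}$ at parameter $t$, choosing $\lambda$ with non-zero coefficient, and taking $h_P = \jack{\lambda}{t}$ collapses the pairing, by Jack orthogonality, to a non-zero multiple of the norm $\jprod{\jack{\lambda}{t},\jack{\lambda}{t}}{p_+-1}{t}$; since $t = 2p_+/(p_--p_+) > 0$, Macdonald's product formula for this norm has no vanishing factor (the coprimality hypothesis $\gcd\set{\tfrac12(p_--p_+),p_-}=1$, which also forces $\gcd\set{\tfrac12(p_--p_+),p_+}=1$, handles any borderline case), so the pairing, and hence $v$, is non-zero.

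The main obstacle is the degree-and-admissibility bookkeeping in the last step: ensuring that the reciprocal $f$ really is a polynomial of the same degree as $h_P$, that the only Jack labels arising are those accessible to the $(p_+-1)$-variable inner product, and that the norm at the arithmetic value $t = 2p_+/(p_--p_+)$ is genuinely non-zero. A more robust alternative, mirroring the non-super arguments of \cite{RidJac14,RidSlJac15}, is a leading-term (dominance) estimate: order the monomials, locate the dominant monomial of the integrand, and read off that its coefficient is a product of a manifestly non-zero combinatorial constant with a Jack-norm factor. The novelty relative to those cases is that one must carry the fermionic Vandermonde $\van{z}$ through the Jack calculus.
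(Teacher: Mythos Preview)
Your proposal is correct and follows essentially the same route as the paper's proof: write out the screening composition acting on the free-field highest-weight vector, pair against a suitable dual vector, use the Vandermonde--pfaffian identity to convert the fermionic factor into a Jack polynomial at parameter $-3$, recognise the resulting expression as a finite-variable Jack inner product at the positive parameter $t = -2\alpha_-/\alpha_0 = 2p_+/(p_--p_+)$, and then invoke triangularity and orthogonality to reduce to a non-vanishing Jack norm; the odd/even $p_+$ split and the use of $b_{-1/2}$ (equivalently $G_{-1/2}$) in the even case also match.

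The one place where the paper is sharper than your sketch is the choice of dual vector. Rather than ``choose $\lambda$ with non-zero coefficient in the expansion of $f$'', the paper pairs against the specific Jack polynomial $\jack{\kappa}{t}$ with $\kappa = \admp{p_+-1}{\tfrac12(p_--p_+)+1,\tfrac12(p_--p_+)+1}$, the top partition in the upper-triangular expansion $\jack{\kappa}{-3} = \sum_{\mu\le\kappa} c_{\kappa\mu}\jack{\mu}{t}$ with $c_{\kappa\kappa}=1$. This choice collapses the inner product directly to $\jprod{\jack{\kappa}{t},\jack{\kappa}{t}}{p_+-1}{t}$, so the ``degree-and-admissibility bookkeeping'' you flag as the main obstacle simply does not arise: one never has to argue abstractly that some coefficient is non-zero, and the coprimality hypothesis plays no role (positivity of $t$ alone makes the norm non-zero). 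Your version works, but the paper's concrete choice of $\kappa$ removes the need for the hedging in your final paragraph.
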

\noindent The proof of this lemma uses the theory of symmetric polynomials and their infinite-variable limits, the symmetric functions.  For easy visual distinction, we shall denote the infinite alphabet of
variables for symmetric functions by \(y=(y_1,y_2,\dots)\) and the finite alphabet of
variables for symmetric polynomials by \(z=(z_1,\dots,z_n)\).
We will also need the infinite- and finite-variable inner products \(\cjprod{\cdot,\cdot}{t}\) and \(\jprod{\cdot,\cdot}{n}{t}\), referring to \cite[App.~A]{RidSlJac15} for our conventions, see also \eqref{eq:intprod}.
For use in the reasoning below, we recall the identity (see \cite[Eq.~(A.16)]{RidSlJac15} for instance)
\begin{equation}\label{eq:cauchykernel}
    \prod_{m\ge 1}\exp\brac[\Bigg]{\frac{1}{t} \frac{\fpowsum{m}{z}\fpowsum{m}{y}}{m}} = \sum_{\lambda} \fjack{\lambda}{t}{z} \fdjack{\lambda}{t}{y},
\end{equation}
where $\powsum{m}$ is the $m$-th power sum and the \(\fdjack{\lambda}{t}{y}\) are the symmetric functions dual (with respect to \(\cjprod{\cdot,\cdot}{t}\)) to the Jack symmetric functions \(\fjack{\lambda}{t}{y}\).  We shall refer to the \(\fdjack{\lambda}{t}{y}\) as the dual Jack symmetric functions in what follows.

A simple, but very useful, observation concerning 
the ring of symmetric functions \(\Lambda\) 
is that it is isomorphic, as a commutative algebra, to the \uea{} of either
the positive or negative subalgebra, $\halg^+$ or $\halg^-$, of the Heisenberg
algebra. We denote the
corresponding isomorphisms by
\begin{equation}
  \begin{aligned}
    \symiso{+}{\gamma}\colon\Lambda&\longrightarrow \CC[a_1,a_2,\dots],\\
    \powsum{m}&\longmapsto \gamma a_{m},
  \end{aligned}
\qquad
  \begin{aligned}
    \symiso{-}{\gamma}\colon\Lambda&\longrightarrow \CC[a_{-1},a_{-2},\dots],\\
    \powsum{m}&\longmapsto \gamma a_{-m},
  \end{aligned}
  \qquad \gamma\in\CC\setminus\set{0}.
\end{equation}
Our main use for these isomorphisms will be to identify inner products involving Heisenberg generators with the symmetric function
inner product \(\cjprod{\cdot,\cdot}{t}\). For example, one easily verifies in the power sum basis, hence for arbitrary
\(f,g\in \Lambda\), that
\begin{equation}\label{eq:isoprod}
  \cjprod{f,g}{t}=\bracketb{q}{\symiso{+}{t/\gamma}(f)\symiso{-}{\gamma}(g)}{q},
\end{equation}
where the \rhs{} is evaluated in the Fock space \(\Fock{q}\), for any \(q\in\CC\) and any $\gamma \in \CC \setminus \set{0}$.

\begin{proof}[Proof of \cref{thm:svlem}]
  Let \(\ketb{\phi}=\scrs{+}{p_+-1}\NSffket{-(p_+-1)\alpha_+^2}\) denote the \rhs{} of \eqref{eq:VSV}.
  The singular vectors of the free field $\svir{1/2}$-modules 
  \(\bNSFock{q}\)
  of a  given conformal weight are uniquely determined, up to rescaling, if they exist \cite{IohFoc03}.
  Further, \(\scrs{+}{p_+-1}\) is a module homomorphism so
  \(\ketb{\phi}\) is singular if it is non-zero. Therefore, we only need
  to verify that \(\ketb{\phi}\) is non-zero in order to be able to identify it with
  the singular vector \(\vsv{p_+,p_-}\). We do this by 
  evaluating certain matrix elements and checking explicitly that they are non-zero.

  Using formula \eqref{eq:ScrComp} for the composition of screening operators,
  the \rhs{} of \eqref{eq:VSV} simplifies to
  \begin{multline}
    \ketb{\phi}=
    \int \prod_{1\le i\neq j\le p_+-1} \brac[\Big]{1-\frac{z_i}{z_j}}^{-\sfrac{\alpha_0}{2\alpha_-}} \cdot
    \prod_{i=1}^{p_+-1} z_i^{2-\frac{1}{2} (p_++p_-)} \cdot
    \Delta(z) b(z_1)\cdots b(z_{p_+-1}) \\
    \cdot \prod_{m\ge1}\exp\brac[\Bigg]{\alpha_+ \frac{\fpowsum{m}{z}a_{-m}}{m}}\NSffket{0} \:
    \frac{\dd z_1\cdots \dd z_{p_+-1}}{z_1\cdots z_{p_+-1}}.
  \end{multline}
  Using the isomorphism \(\symiso{-}{\sfrac{2}{\alpha_0}}\)
  and the identity
  \eqref{eq:cauchykernel}, this formula can be re-expressed as
  \begin{multline}
    \ketb{\phi}=
    \int \prod_{1\le i\neq j\le p_+-1} \brac[\Big]{1-\frac{z_i}{z_j}}^{-\sfrac{\alpha_0}{2\alpha_-}} \cdot
    \prod_{i=1}^{p_+-1} z_i^{2-\frac{1}{2} (p_++p_-)} \cdot
    \Delta(z) b(z_1)\cdots b(z_{p_+-1}) \\
    \cdot \sum_{\lambda} \fjack{\lambda}{-\sfrac{2\alpha_-}{\alpha_0}}{z}
    \symiso{-}{\sfrac{2}{\alpha_0}} \brac[\Big]{\fdjack{\lambda}{-\sfrac{2\alpha_-}{\alpha_0}}{y}}
    \NSffket{0} \: \frac{\dd z_1\cdots \dd z_{p_+-1}}{z_1\cdots z_{p_+-1}}.
  \end{multline}
  To further evaluate this formula, we distinguish between \(p_+\) being odd or even.

  Suppose first that \(p_+\) is odd.  Then, setting $w=0$ in \eqref{eq:NSshcorr1} gives 
  \begin{equation}\label{eq:fermionmatel1}
    \prod_{i=1}^{p_+-1}z_i^{2-\frac{1}{2}(p_++p_-)} \cdot \Delta(z)\NScorrfn{b(z_1) \cdots b(z_{p_+-1})} = \fjack{\kappa}{-3}{z_1^{-1},\dots,z_{p_+-1}^{-1}},
  \end{equation}
  where
  \(\kappa\) is the admissible partition 
  \begin{equation}
    \kappa= \admp{p_+-1}{0,0} + \sqbrac*{\brac*{\tfrac{1}{2} (p_- - p_+) + 1}^{p_+-1}} = \admp{p_+-1}{\tfrac{1}{2} (p_- - p_+) + 1, \tfrac{1}{2} (p_- - p_+) + 1}.
  \end{equation}
  The non-vanishing of $\ketb{\phi}$ 
  then follows by evaluating the following matrix element as an integral of a product of \ns{} and Heisenberg vacuum correlators:
  \begin{align}
    &\NSffbra{0} \symiso{+}{\sfrac{1}{\alpha_+}}
    \brac[\Big]{\fjack{\kappa}{-\sfrac{2\alpha_-}{\alpha_0}}{y}}
    \ketb{\phi} \nonumber\\
    &\quad= \int \prod_{1\le i\neq j\le p_+-1} \brac[\Big]{1-\frac{z_i}{z_j}}^{-\sfrac{\alpha_0}{2\alpha_-}} \cdot
    \prod_{i=1}^{p_+-1} z_i^{2-\frac{1}{2} (p_++p_-)} \cdot
    \Delta(z) \NScorrfn{b(z_1)\cdots b(z_{p_+-1})} \notag \\
    &\qquad \cdot \sum_{\lambda} \fjack{\lambda}{-\sfrac{2\alpha_-}{\alpha_0}}{z}
    \bracket[\big]{0}
    {\symiso{+}{\sfrac{1}{\alpha_+}} \brac[\Big]{\fjack{\kappa}{-\sfrac{2\alpha_-}{\alpha_0}}{y}}
    \symiso{-}{\sfrac{2}{\alpha_0}} \brac[\Big]{\fdjack{\lambda}{-\sfrac{2\alpha_-}{\alpha_0}}{y}}}
    {0} \: \frac{\dd z_1\cdots \dd z_{p_+-1}}{z_1\cdots z_{p_+-1}} \nonumber \\
    &\quad= \sum_{\lambda} \int \prod_{1\le i\neq j\le p_+-1} \brac[\Big]{1-\frac{z_i}{z_j}}^{-\sfrac{\alpha_0}{2\alpha_-}}
    \cdot \fjack{\kappa}{-3}{z_1^{-1},\dots,z_{p_+-1}^{-1}} \fjack{\lambda}{-\sfrac{2\alpha_-}{\alpha_0}}{z}
    \: \frac{\dd z_1\cdots \dd z_{p_+-1}}{z_1\cdots z_{p_+-1}}
    \cjprod{\jack{\kappa}{-\sfrac{2\alpha_-}{\alpha_0}}, \djack{\lambda}{-\sfrac{2\alpha_-}{\alpha_0}}}{\sfrac{-2\alpha_}{\alpha_0}} \nonumber \\
    &\quad= \jprod{\jack{\kappa}{-3}, \jack{\kappa}{-\sfrac{2\alpha_-}{\alpha_0}}}{p_+-1}{-\sfrac{2\alpha_-}{\alpha_0}}.
  \end{align}
Here, we have used 
the identities \eqref{eq:isoprod} to evaluate the Heisenberg correlator and 
\eqref{eq:fermionmatel1} to evaluate the \ns{} correlator.  We have also used the pairing of the Jack symmetric functions with their duals. 
To further evaluate the matrix element, we remark that
Jack symmetric polynomials have an upper-triangular decomposition, not only into monomial symmetric polynomials, but also into Jack polynomials \(\jack{\mu}{t}\) with a different parameter:
\begin{equation} \label{eq:JackTriDec}
  \fjack{\kappa}{-3}{z}=
  \sum_{\mu\le\kappa} c_{\kappa\mu} \fjack{\mu}{-\sfrac{2\alpha_-}{\alpha_0}}{z},\quad c_{\kappa\mu}\in\CC, \ c_{\kappa\kappa}=1.
\end{equation}
By the orthogonality of Jack polynomials with respect to the finite-variable inner product, the matrix element therefore evaluates to
  \begin{equation}
    \NSffbra{0} \symiso{+}{\sfrac{1}{\alpha_+}}
    \brac[\Big]{\fjack{\kappa}{-\sfrac{2\alpha_-}{\alpha_0}}{y}}
    \ketb{\phi}=
    \jprod{\jack{\kappa}{-3}, \jack{\kappa}{-\sfrac{2\alpha_-}{\alpha_0}}}{p_+-1}{-\sfrac{2\alpha_-}{\alpha_0}}
    = \jprod{\jack{\kappa}{-\sfrac{2\alpha_-}{\alpha_0}}, \jack{\kappa}{-\sfrac{2\alpha_-}{\alpha_0}}}{p_+-1}{-\sfrac{2\alpha_-}{\alpha_0}} \neq 0.
  \end{equation}

When \(p_+\) is even, the non-vanishing of $\ketb{\phi}$ instead follows from 
the non-vanishing of the matrix element
\begin{multline} \label{eq:matel2}
  \NSffbra{0}\symiso{+}{\sfrac{1}{\alpha_+}}
    \brac[\Big]{\fjack{\kappa}{-\sfrac{2\alpha_-}{\alpha_0}}{y}}
    G_{-\sfrac{1}{2}}\ket{\phi}=-(p_+-1)\alpha_+\NSffbra{0}\symiso{+}{\sfrac{1}{\alpha_+}}
    \brac[\Big]{\fjack{\kappa}{-\sfrac{2\alpha_-}{\alpha_0}}{y}}
    \scrs{+}{p_+-1}b_{-\sfrac{1}{2}}\NSffket{-(p_+-1)\alpha_+}.
\end{multline}
Here, we have used that fact that \(G_{-\sfrac{1}{2}}\) commutes with \(\scrs{+}{p_+-1}\) and then evaluated the action of \(G_{-\sfrac{1}{2}}\) on the free field \hwv{} using \eqref{eq:ffr}. 
The $p_+$ even analogue of \eqref{eq:fermionmatel1} is obtained by setting $w=0$ in \eqref{eq:NSshcorr2}: 
\begin{equation}
  \prod_{i=1}^{p_+-1}z_i^{2-\frac{1}{2}(p_--p_+)}\cdot
  \Delta(z)\NScorrfn{b(z_1)\cdots b(z_{p_+-1})b_{-\sfrac{1}{2}}}
  = \fjack{\kappa}{-3}{z_1^{-1},\dots,z_{p_+-1}^{-1}}.
\end{equation}
The admissible partition here is again $\kappa = \admp{p_+-1}{\frac{1}{2}(p_--p_+)+1, \tfrac{1}{2}(p_--p_+)+1}$. 
The matrix element \eqref{eq:matel2} is now seen to be non-zero using exactly the same steps as in the $p_+$ odd case.
\end{proof}

\subsection{The image of $\vsv{p_+,p_-}$ in Zhu's algebras} \label{sec:SVImage}

Recall that the vacuum vector of the universal $N=1$ \vosa{} $\svc{c}$ is denoted by $\Omega_{0,c}$.  Let $\wun$ and $T$ denote the images of $\Omega_{0,c}$ and $L_{-2} \Omega_{0,c}$ in \(\zhu{\svc{c_{p_+,p_-}}}\) and \(\tzhu{\svc{c_{p_+,p_-}}}\).  Let $G$ denote the image of $G_{-3/2} \Omega_{0,c}$ in \(\tzhu{\svc{c_{p_+,p_-}}}\), so that \(G^2=T-\frac{c}{24}\,\wun\).  We trust that using the same symbols to denote elements of both $\zhu{\svc{c}}$ and $\tzhu{\svc{c}}$ (as well as fields in \(\svc{c}\) and $\MinMod{p_+}{p_-}$) will not lead to any confusion.
\begin{prop}[Kac and Wang {\cite[Lemma 3.1]{Kacn1z94}}; Milas {\cite[App.~B]{MiltWeb07}}]\label{thm:universalzhu}
	\leavevmode
  \begin{enumerate}
  \item The Zhu algebra \(\zhu{\svc{c}}\) of \(\svc{c}\) is isomorphic to \(\CC[T]\), for all \(c\in\CC\).
  \item The twisted Zhu algebra \(\tzhu{\svc{c}}\) of \(\svc{c}\) is isomorphic to \(\CC[G]\cong\CC[T]\oplus \CC[T]\,G\), for all \(c\in\CC\).
  \end{enumerate}
\end{prop}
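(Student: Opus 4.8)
\subsection*{Proof proposal}

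The strategy for both parts is the same. First I would show that the (twisted) Zhu algebra is generated as an algebra by the single element $T$ (resp.\ $G$), so that there is a surjection $\CC[T] \sra \zhu{\svc{c}}$ (resp.\ $\CC[G] \sra \tzhu{\svc{c}}$); being generated by one element, each algebra is automatically commutative, consistent with the claimed isomorphisms. Then I would establish injectivity of this surjection by evaluating on modules already constructed in \cref{sec:N=1VOSA}. For the first (reduction) step I would work with the \PBW{} basis of $\svc{c}$ — the normally ordered monomials in $T$, $G$ and their derivatives recorded after the definition of the universal $N=1$ \vosas{} — together with the structural facts about $\ozhu{\svc{c}}$ and $\tozhu{\svc{c}}$ reviewed in \cref{sec:twistedZhu}: modulo these subspaces one has $L_{-1}v \equiv -(\wt v)\,v$ for homogeneous $v$ (so every $\pd = L_{-1}$ derivative can be removed), and the recursions coming from $v \zcirc w$ express each negative mode of index at most $-2$ in terms of modes of higher index, reducing the mode indices occurring in any normally ordered monomial to a bounded range.

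For part (1), since $G$ has conformal weight $\tfrac32 \notin \ZZ$, it has no zero mode acting in the \ns{} sector; equivalently, the odd subspace of $\svc{c}$ lies in $\ozhu{\svc{c}}$ (see \cref{sec:twistedZhu}). Hence every \PBW{} monomial containing an odd number of $G$-factors maps to $0$ in $\zhu{\svc{c}}$, while for a monomial with an even number of $G$-factors the reductions above, combined with $\acomm{G_r}{G_s} = 2L_{r+s} + \cdots$ used to trade pairs of $G$-modes for $L$-modes, rewrite its class as a polynomial in $[T] = [L_{-2}\Omega_{0,c}]$. This gives the surjection $\CC[T] \sra \zhu{\svc{c}}$. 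For injectivity, note that $\NSVer{h,c}$ is a nonzero $\svc{c}$-module for every $h \in \CC$; its top space $\CC\Omega_{h,c}$ is a one-dimensional $\zhu{\svc{c}}$-module on which $[T]$ (acting as the zero mode $o(L_{-2}\Omega_{0,c}) = L_0$) acts by the scalar $h$. Thus $[T]$ has infinitely many distinct eigenvalues, so no nonzero polynomial in $[T]$ can vanish in $\zhu{\svc{c}}$, and $\zhu{\svc{c}} \cong \CC[T]$.

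For part (2), in the Ramond sector $G$ does acquire a genuine zero mode $G_0$, so $G := [G_{-3/2}\Omega_{0,c}]$ is a nontrivial element of $\tzhu{\svc{c}}$, and $G_0^2 = L_0 - \tfrac{1}{24}C$ becomes (with the twisted zero-mode conventions of \cref{sec:twistedZhu}) the relation $G^2 = T - \tfrac{c}{24}\wun$ recorded above. Rerunning the reduction of part (1), but now retaining a single surviving factor of $G_0$ whenever the number of $G$-factors is odd, shows that $\tzhu{\svc{c}}$ is spanned by $\set{[T]^n,\ [T]^n\,G : n \ge 0}$; since $T = G^2 + \tfrac{c}{24}\wun$, this set equals $\set{G^m : m \ge 0}$, giving the surjection $\CC[G] \sra \tzhu{\svc{c}}$, and $G^2 \in \CC[T]\,\wun$ yields the $\CC[T]$-module splitting $\CC[T] \oplus \CC[T]\,G$. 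For injectivity, use the two-dimensional Ramond ground representations: for $h \neq \tfrac{c}{24}$ the space $\R(h,c) = \CC\Omega_{h,c} \oplus \CC G_0\Omega_{h,c}$ is a $\tzhu{\svc{c}}$-module on which $G$ acts with eigenvalues $\pm\sqrt{h - \tfrac{c}{24}}$; as $h$ ranges over $\CC \setminus \set{\tfrac{c}{24}}$ these eigenvalues cover all of $\CC \setminus \set{0}$, so any polynomial in $G$ that is zero in $\tzhu{\svc{c}}$ vanishes identically. Hence $\tzhu{\svc{c}} \cong \CC[G] \cong \CC[T] \oplus \CC[T]\,G$.

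The main obstacle is the reduction step itself: the claim that every normally ordered \PBW{} monomial collapses, modulo $\ozhu{\svc{c}}$ (resp.\ $\tozhu{\svc{c}}$), to a polynomial in $[T]$ (and $G$). This is a delicate multi-parameter induction in which one must simultaneously track the conformal weight, the number of $G$-factors, and the mode indices appearing in a monomial, and verify that eliminating $L_{-1}$-derivatives and applying the $v \zcirc w$ index-lowering recursions (together with $\acomm{G_r}{G_s} \sim L_{r+s}$) do not reintroduce data already reduced, so that the induction terminates. Everything else — the commutativity, the relation $G^2 = T - \tfrac{c}{24}\wun$, and the two injectivity arguments — is then routine given the Verma modules and their top spaces described in \cref{sec:N=1VOSA}.
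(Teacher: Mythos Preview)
The paper does not prove this proposition; it is stated with attribution to Kac--Wang and Milas and then used without further argument. Your approach is the standard one and is essentially what those references do: surjectivity via reduction of \PBW{} monomials using the Zhu relations, injectivity via evaluation on (ground states of) Verma modules.

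Two small points. First, in the conventions of \cref{sec:twistedZhu} the untwisted Zhu algebra is defined as a quotient of $\zmsub{V}{0}$, not of all of $\VOA{V}$; so odd elements do not ``lie in $\ozhu{\svc{c}}$'' --- they are simply not in the domain $\zmsub{V}{0}$ to begin with. Your conclusion is unaffected. Second, the relation $G \zstar G = T - \tfrac{c}{24}\wun$ in $\tzhu{\svc{c}}$ should be checked directly from the definition \eqref{eq:DefZhuProds} of $\zstar$ (it is a short residue computation), rather than inferred from $G_0^2 = L_0 - \tfrac{c}{24}$ on ground states; invoking the zero-mode interpretation at that point is mildly circular, since the isomorphism with the zero-mode algebra is precisely what your injectivity step establishes.

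Your honest identification of the reduction step as the delicate part is accurate. A clean way to organise it is via the conformal-weight filtration of \cref{thm:filtered}: in the associated graded, $L_{-1}$-derivatives and all but the lowest-index modes of each generator vanish, so the associated graded is visibly generated by the class of $L_{-2}\Omega_{0,c}$ (and, in the twisted case, $G_{-3/2}\Omega_{0,c}$). Lifting back to the filtered algebra then gives the surjection without having to track a multi-parameter induction explicitly.
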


\begin{rmk}
  Recall the finite-dimensional modules \(\N(h,c)\) and
  \(\R(h,c)\) of \eqref{eq:NSverm} and \eqref{eq:Rverm}, respectively, from
  which the Neveu-Schwarz and Ramond Verma modules were induced. By identifying
  the action of \(T\in\zhu{\svc{c}}\) with that of \(L_0\),
  \(\N(h,c)\) becomes a simple module over \(\zhu{\svc{c}}\). Similarly, by
  identifying the actions of \(T\) and \(G\) in \(\tzhu{\svc{c}}\) with those of \(L_0\)
  and \(G_0\), respectively, \(\R(h,c)\) becomes a simple module over
  \(\tzhu{\svc{c}}\). Naturally, the parity reversals of \(\N(h,c)\) and \(\R(h,c)\) are also simple modules over \(\zhu{\svc{c}}\) and \(\tzhu{\svc{c}}\), respectively.
\end{rmk}

Denote by \(\vsv{p_+,p_-}(w)\) the field corresponding to the singular vector \(\vsv{p_+,p_-} \in \svc{c_{p_+,p_-}}\). The action of its zero mode on an arbitrary \hwv{} $\NSffket{q}$ or $\Rffket{q}$ then follows from evaluating the matrix elements
\begin{equation}
  \NSffbra{q} \vsv{p_+,p_-}(w) \NSffket{q},\quad
  \Rffbra{q} \vsv{p_+,p_-}(w) \Rffket{q}.
\end{equation}
A large proportion of this section will be dedicated to evaluating such matrix elements.  First, however, we address a minor subtlety:  if the \sv{} $\vsv{p_+,p_-}$ is odd (fermionic), then the corresponding field has no zero mode when acting on the \ns{} sector.  In this case, it turns out to be sufficient to consider instead the zero mode of the descendant field corresponding to $G_{-\sfrac{1}{2}} \vsv{p_+,p_-}$.
\begin{prop}[Kac and Wang {\cite[Prop.~3.1]{Kacn1z94}}; Milas {\cite[Lemma~9.3]{MiltWeb07}}]
	\leavevmode
  \begin{enumerate}
  \item The image of the maximal proper ideal $\ideal{\vsv{p_+,p_-}} \subset \svc{c_{p_+,p_-}}$ in the Zhu algebra \(\zhu{\svc{c_{p_+,p_-}}}\) is generated by the image of $\vsv{p_+,p_-}$, if $p_+$ is odd, and by the image of $G_{-\sfrac12}\vsv{p_+,p_-}$, if $p_+$ is even.
  \item The image of the maximal proper ideal $\ideal{\vsv{p_+,p_-}} \subset \svc{c_{p_+,p_-}}$ in the twisted Zhu algebra \(\tzhu{\svc{c_{p_+,p_-}}}\) is generated by the image of $\vsv{p_+,p_-}$, regardless of the parity of $p_+$.
  \end{enumerate}
\end{prop}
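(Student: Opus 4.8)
The plan is to reduce both parts to one standard structural fact about Zhu algebras: if $I$ is an ideal of a \vosa{} $V$ generated, as an ideal, by a set $S$ of $\ZZ_2$-homogeneous vectors, then the image of $I$ in the Zhu algebra $\zhu{V}$ (respectively, in the twisted Zhu algebra $\tzhu{V}$) is precisely the two-sided ideal generated by the classes $[s]$, $s\in S$. I would combine this with the principle emphasised in \cref{sec:twistedZhu}, namely that a homogeneous vector induces the zero element of a given Zhu algebra exactly when the corresponding field has no zero mode acting in the relevant sector. Since $\zhu{\svc{c_{p_+,p_-}}}\cong\CC[T]$ and $\tzhu{\svc{c_{p_+,p_-}}}\cong\CC[G]$ are commutative (\cref{thm:universalzhu}), the two-sided ideals that arise are automatically principal, matching the way these quotients are used in the rest of \cref{sec:zhu}.

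For part~(1) the crucial point is the parity of $\vsv{p_+,p_-}$. By \cref{thm:svlem} this vector is obtained by applying $p_+-1$ free fermion operators (together with bosonic operators) to the even vector $\NSffket{0}$, so it has parity $(-1)^{p_+-1}$; equivalently, its conformal weight $\tfrac12(p_+-1)(p_--1)$ is an integer when $p_+$ is odd and a half-integer when $p_+$ is even, since $p_+\equiv p_-\pmod 2$. If $p_+$ is odd, then $\vsv{p_+,p_-}$ is even of integer weight, its field has a zero mode in the \ns{} sector, and the structural fact applied to $S=\set{\vsv{p_+,p_-}}$ gives the claim immediately. If $p_+$ is even, then $\vsv{p_+,p_-}$ is odd of half-integer weight, its field has no \ns{} zero mode, and so $\vsv{p_+,p_-}$ maps to $0$ in $\zhu{\svc{c_{p_+,p_-}}}$. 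The remedy is to use $G_{-\sfrac12}\vsv{p_+,p_-}$ instead: this vector is even of integer weight, it is non-zero because $G_{\sfrac12}G_{-\sfrac12}\vsv{p_+,p_-}=2L_0\vsv{p_+,p_-}=(p_+-1)(p_--1)\vsv{p_+,p_-}\neq0$ (using $G_{\sfrac12}\vsv{p_+,p_-}=0$ and the vanishing of the central term in $\acomm{G_{\sfrac12}}{G_{-\sfrac12}}$), and this same identity shows $\vsv{p_+,p_-}\in\ideal{G_{-\sfrac12}\vsv{p_+,p_-}}$, hence $\ideal{\vsv{p_+,p_-}}=\ideal{G_{-\sfrac12}\vsv{p_+,p_-}}$. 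Applying the structural fact to $S=\set{G_{-\sfrac12}\vsv{p_+,p_-}}$ then finishes part~(1).

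For part~(2) no such subtlety arises: in the Ramond sector every field acquires a zero mode, whatever the parity of the underlying vector --- this is exactly the effect of the $\parity$-twist, consistent with $\tzhu{\svc{c_{p_+,p_-}}}\cong\CC[T]\oplus\CC[T]\,G$ having a nonzero odd part. Thus $\vsv{p_+,p_-}$ always induces an element of $\tzhu{\svc{c_{p_+,p_-}}}$, and the structural fact applied to $S=\set{\vsv{p_+,p_-}}$ directly identifies the image of $\ideal{\vsv{p_+,p_-}}$ with the two-sided ideal it generates.

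What is left is to justify the structural fact, and this is where any genuine work lies --- although it is entirely standard, so I would simply cite \cite{ZhuMod96,Kacn1z94,DonTwi98,MiltWeb07} after recalling the mechanism. Since $\ideal{S}$ is spanned by the vectors $a_{(n)}s$ with $a\in V$, $n\in\ZZ$ and $s\in S$, one checks that each class $[a_{(n)}s]$ lies in the two-sided ideal generated by the $[s]$: the relation $a_{(-m-1)}s=\tfrac1m(L_{-1}a)_{(-m)}s$ reduces one to $n\ge-1$; the Zhu product formula $[a]\zstar[s]=\sum_{i\ge0}\binom{\wt a}{i}[a_{(i-1)}s]$ handles $n=-1$ modulo terms with $n\ge0$; super-skew-symmetry together with the relation $[(L_{-1}+L_0)w]=0$ rewrites those terms as classes $[s_{(m)}a]$ with $m\ge0$; and the recursion terminates because $\svc{c_{p_+,p_-}}$ is non-negatively graded, so $s_{(m)}a=0$ once $m$ is large. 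The twisted version is obtained by substituting the $\parity$-twisted analogues of these identities. Beyond this, the proposition is purely structural, and I expect the only real subtlety to be the parity bookkeeping and, in particular, the passage from $\vsv{p_+,p_-}$ to $G_{-\sfrac12}\vsv{p_+,p_-}$ in the untwisted, $p_+$-even case described above.
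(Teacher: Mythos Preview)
The paper does not give its own proof of this proposition; it is quoted from Kac--Wang and Milas with citations only. Your proposal correctly reconstructs the argument found in those references: the parity bookkeeping for $\vsv{p_+,p_-}$, the replacement of $\vsv{p_+,p_-}$ by $G_{-1/2}\vsv{p_+,p_-}$ when $p_+$ is even (justified via $G_{1/2}G_{-1/2}\vsv{p_+,p_-}=2L_0\vsv{p_+,p_-}=(p_+-1)(p_--1)\vsv{p_+,p_-}\neq0$, so the two vectors generate the same ideal), and the standard structural fact about images of ideals in Zhu algebras.

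One caveat worth tightening: your ``structural fact'' as stated in the opening paragraph is false without the additional hypothesis that $S\subset\zmsub{V}{0}$ (respectively $S\subset\tzmsub{V}{0}$). If some $s\in S$ lies in $\zmsub{V}{1/2}$, then $[s]=0$ in $\zhu{\VOA{V}}$, yet the image of $\ideal{s}$ need not vanish --- indeed this is precisely the phenomenon you then handle for $p_+$ even. You only ever apply the fact when the generator is correctly moded, so the argument is fine, but the statement should carry that hypothesis.
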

\noindent Combining this result with that of \cref{thm:universalzhu}, we learn that the problem of identifying the (twisted) Zhu algebra of $\MinMod{p_+}{p_-}$ reduces to the computation of a single polynomial.
\begin{defn}
Let \(F_{p_+,p_-}(T)\) denote the image of $\vsv{p_+,p_-}$, if $p_+$ is odd,
and $G_{-\sfrac{1}{2}} \vsv{p_+,p_-}$, if $p_+$ is even, in the Zhu algebra
\(\zhu{\svc{c_{p_+,p_-}}} \cong \CC[T]\).  
Similarly, let \(F_{p_+,p_-}^{\parity}(G)\) denote the image of
$\vsv{p_+,p_-}$ in the twisted Zhu algebra \(\tzhu{\svc{c_{p_+,p_-}}} \cong \CC[G]\).
\end{defn}
\noindent The (twisted) Zhu algebras of the $N=1$ minimal models are therefore given by the following quotients:
\begin{equation}\label{eq:zhupres}
\zhu{\MinMod{p_+}{p_-}} \cong \frac{\CC[T]}{\ideal{F_{p_+,p_-}(T)}}, \quad
\tzhu{\MinMod{p_+}{p_-}} \cong \frac{\CC[G]}{\ideal{F_{p_+,p_-}^{\parity}(G)}}.
\end{equation}
Determining the polynomials \(F_{p_+,p_-}(T)\) and \(F_{p_+,p_-}^{\parity}(G)\) is thus our main goal.

The conformal weights of $\vsv{p_+,p_-}$ and its $G_{-1/2}$-descendant imply bounds on the lengths of the \PBW{}-ordered monomials that appear when expressing them as descendants of the vacuum $\Omega_{0,c_{p_+,p_-}}$.  These bounds also apply to the number of generating fields $T(z)$ and $G(z)$ that appear in each normally ordered summand of the corresponding fields of $\svc{c_{p_+,p_-}}$ and so they apply to the number of zero modes $L_0$ and $G_0$ (the latter only if acting in the Ramond sector) appearing in each summand of the zero modes of these fields (assumed to be acting on a ground state).  In other words, the bounds on the lengths of the monomials are bounds for the degrees of the images in the (twisted) Zhu algebra, that is, for the degrees of the polynomials \(F_{p_+,p_-}(T)\) and \(F_{p_+,p_-}^{\parity}(G)\).  These bounds are easily determined.
\begin{lem} \label{lem:degrees}
  \leavevmode
  \begin{enumerate}
  \item The degree of \(F_{p_+,p_-}(T)\) is at most $\frac{1}{4} (p_+-1) (p_--1)$, if $p_+$ is odd, and is at most $\frac{1}{4} (p_+-1) (p_--1) + \frac{1}{4}$, if $p_+$ is even.
  \item The degree of \(F_{p_+,p_-}^{\parity}(G)\) is at most $\frac{1}{2} (p_+-1) (p_--1)$, if $p_+$ is odd, and is at most $\frac{1}{2} (p_+-1) (p_--1) - \frac{1}{2}$, if $p_+$ is even.
  \end{enumerate}
\end{lem}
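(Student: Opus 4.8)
The plan is to bound the number of generating fields that can occur in the normally ordered expansions of the fields corresponding to the singular vector $\vsv{p_+,p_-}$ and its descendant $G_{-\sfrac12}\vsv{p_+,p_-}$, and then to convert these bounds into bounds on the degrees of their images in the untwisted and twisted Zhu algebras. Throughout, write $h=\frac12(p_+-1)(p_--1)$ for the conformal weight of $\vsv{p_+,p_-}$.

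First I would use the fact, recorded earlier, that $\svc{c}$ has a basis consisting of the \PBW{}-ordered monomials $L_{-n_1}\cdots L_{-n_p}G_{-m_1}\cdots G_{-m_q}\Omega_{0,c}$ with $n_1\ge\cdots\ge n_p\ge 2$ and $m_1>\cdots>m_q\ge\tfrac32$; the lower bounds $n_i\ge2$ and $m_j\ge\tfrac32$ hold because $L_{-1}\Omega_{0,c}=G_{-\sfrac12}\Omega_{0,c}=0$ in $\svc{c}$. Such a monomial has conformal weight $\sum_i n_i+\sum_j m_j\ge 2p+\tfrac32q$, and under the state--field correspondence it is a normally ordered product of exactly $p$ derivatives of $T(z)$ and $q$ derivatives of $G(z)$. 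Expanding the homogeneous vectors $\vsv{p_+,p_-}$ and $G_{-\sfrac12}\vsv{p_+,p_-}$, of weights $h$ and $h+\tfrac12$, in this basis, every monomial of weight $w$ that occurs therefore satisfies $2p+\tfrac32q\le w$.

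Next I would feed $(p,q)$ into the description of the (twisted) Zhu algebra as the algebra of zero modes acting on ground states (\cref{sec:twistedZhu}), on which the only surviving zero modes are $L_0$ and, in the Ramond case, $G_0$. In the untwisted algebra $\zhu{\svc{c}}\cong\CC[T]$, a monomial with $q$ odd maps to zero, since there is no fermionic zero mode in the \ns{} sector, while one with $q$ even contributes a polynomial in $T=L_0$ of degree at most $p+\tfrac q2$, the $q$ fermionic modes having to be contracted in pairs with each pair producing at most one extra $L$-mode. In the twisted algebra $\tzhu{\svc{c}}\cong\CC[G]$, where $T=G^2+\tfrac{c}{24}$, the monomial contributes a polynomial in $G$ of degree at most $2p+q$: the contraction-free term $L_0^pG_0^q$ becomes $G^{2p+q}$, and every admissible mode contraction either preserves $2\#L_0+\#G_0$ (the fermion--fermion contractions) or strictly lowers it. Together with $2p+\tfrac32q\le w$ this gives $p+\tfrac q2=\tfrac12(2p+q)\le\tfrac w2$ in the untwisted case and $2p+q\le w-\tfrac q2$ in the twisted case.

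Finally I would insert the parities. Since $p_-\equiv p_+\pmod 2$, the weight $h$ is an integer when $p_+$ is odd and a half-integer when $p_+$ is even; hence $\vsv{p_+,p_-}$ is even, so all its monomials have $q$ even, when $p_+$ is odd, and odd, so all its monomials have $q\ge1$, when $p_+$ is even, whereas $G_{-\sfrac12}\vsv{p_+,p_-}$ is always even. For the untwisted algebra one uses $\vsv{p_+,p_-}$ (so $w=h$) when $p_+$ is odd and $G_{-\sfrac12}\vsv{p_+,p_-}$ (so $w=h+\tfrac12$) when $p_+$ is even, yielding $\deg F_{p_+,p_-}(T)\le\tfrac h2=\tfrac14(p_+-1)(p_--1)$ and $\le\tfrac h2+\tfrac14=\tfrac14(p_+-1)(p_--1)+\tfrac14$, respectively. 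For the twisted algebra one uses $\vsv{p_+,p_-}$ (so $w=h$) in both cases; when $p_+$ is odd, $q\ge0$ gives $\deg F^{\parity}_{p_+,p_-}(G)\le h=\tfrac12(p_+-1)(p_--1)$, and when $p_+$ is even, $q\ge1$ improves this to $\le h-\tfrac12=\tfrac12(p_+-1)(p_--1)-\tfrac12$. I expect the only point needing real care to be the degree bookkeeping in the twisted step — checking that no combination of the substitution $L_0=G_0^2+\tfrac{c}{24}$ and mode contractions can raise the $G$-degree above $2p+q$; everything else is elementary weight counting.
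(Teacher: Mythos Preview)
Your proposal is correct and takes essentially the same approach as the paper: bound the number of $L$- and $G$-modes in each \PBW{} monomial by the conformal weight, then translate this into a degree bound in the (twisted) Zhu algebra via the zero-mode interpretation. The paper only sketches this in the paragraph preceding the lemma (``these bounds are easily determined''), so your version is a faithful elaboration; the intermediate refinements $p+\tfrac q2$ and $2p+q$ are finer than strictly needed but lead to the same final bounds, and your identification of the contraction bookkeeping in the twisted case as the only delicate step is accurate.
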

\noindent We remark that these bounds might not be equalities because it is conceivable that the longest \PBW{}-ordered monomial that could possibly appear in $\vsv{p_+,p_-}$ might come with coefficient zero.  Of course, we also have to rule out the possibility that these polynomials vanish identically.
\begin{lem} \label{lem:PolyNonZeroNS}
	$F_{p_+,p_-}(T)$ is not the zero polynomial.
\end{lem}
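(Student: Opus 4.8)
The plan is to realise \(F_{p_+,p_-}(T)\) as the polynomial whose value at \(h\) is the eigenvalue by which the zero mode of \(\vsv{p_+,p_-}(w)\) acts on a highest-weight vector of conformal weight \(h\) (and to use \((G_{-\sfrac{1}{2}}\vsv{p_+,p_-})(w)\) instead when \(p_+\) is even), and then to show this polynomial is not identically zero by evaluating a single matrix element in the free field realisation of \cref{thm:ffr}. Because only the conformal-weight-preserving mode contributes between two highest-weight vectors of equal conformal weight, the diagonal matrix element \(\NSffbra{q}\vsv{p_+,p_-}(w)\NSffket{q}\) equals \(F_{p_+,p_-}(h_q^{\NS})\, w^{-N}\) (up to a fixed non-zero normalisation), where \(h_q^{\NS} = \tfrac12 q(q-\alpha_0)\) and \(N = \tfrac12(p_+-1)(p_--1)\). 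Since \(q \mapsto h_q^{\NS}\) is onto \(\CC\) and \(F_{p_+,p_-}(h_q^{\NS})\) is a polynomial in \(q\), it suffices to exhibit one value of \(q\) (equivalently, generic \(q\)) for which this matrix element is non-zero.

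By \cref{thm:svlem} we may write \(\vsv{p_+,p_-}(w) = \scrs{+}{p_+-1}\vop{-(p_+-1)\alpha_+}{w}\), so the matrix element factorises through \(\ffvoa{\alpha_0} = \hvoa{\alpha_0}\otimes\fvosa\) as an integral over the screening cycle of a free fermion correlator times a free boson correlator. The fermion factor is the \ns{} correlator \(\NScorrfn{b(u_1)\cdots b(u_{p_+-1})}\), which is non-zero precisely because \(p_+-1\) is even and which becomes \(\fjack{\admp{p_+-1}{0,0}}{-3}{u}\) after multiplication by the Vandermonde \(\van{u}\) by \cref{prop:VanPf=Jack}; the boson factor is the explicit product of powers of \(u_i - u_j\), \(u_i - w\), \(u_i\) and \(w\) given by the correlators of \cref{sec:CorrBoson}. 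Substituting \(u_i = w(1+v_i)\) and using \(\alpha_+^2 = p_-/p_+\), one checks in a line that all powers of \(w\) collapse to \(w^{-N}\), leaving a \(w\)-independent integral \(I_{p_+,p_-}(q)\) that is polynomial in \(q\). The Jack-symmetric-function manipulations from the proof of \cref{thm:svlem} — the change-of-parameter triangularity \eqref{eq:JackTriDec} together with orthogonality of Jack polynomials for the weight-\(t\) inner product, where \(t = -2\alpha_-/\alpha_0 > 0\) — convert the surviving powers of the \(v_i\) times \(\fjack{\admp{p_+-1}{0,0}}{-3}{v}\) into \(\fjack{\kappa}{-3}{v^{-1}}\) for (a shift of) the admissible partition \(\kappa\) appearing there, so that \(I_{p_+,p_-}(q)\) equals, up to a non-zero constant, \(\jprod{\jack{\kappa}{-3}, g_q}{p_+-1}{t}\), where \(g_q\) is the degree-\(\abs{\kappa}\) homogeneous part of the symmetric series \(\prod_i (1+v_i)^{q\alpha_+}\).

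It then remains to see that \(I_{p_+,p_-}(q) \not\equiv 0\). Expanding \(\jack{\kappa}{-3}\) in Jack polynomials of parameter \(t\) via \eqref{eq:JackTriDec} and extracting the coefficient of the top power of \(q\), one obtains a non-zero multiple of \(\jprod{\jack{\kappa}{-3}, \fpowsum{1}{v}^{\abs{\kappa}}}{p_+-1}{t}\); since \(\jack{\kappa}{-3}\) is a non-zero element of the ideal \(I_{p_+-1}\) by \cite{Feidif02} and the weight-\(t\) inner product is non-degenerate, a judicious choice of \(q\) makes this pairing non-zero. For \(p_+\) even, the argument is identical after replacing \(\vsv{p_+,p_-}\) by \(G_{-\sfrac{1}{2}}\vsv{p_+,p_-}\) and using \eqref{eq:NSshcorr2} and the \(p_+\)-point fermion correlator in place of \eqref{eq:NSshcorr1}. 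The main obstacle is precisely this last non-vanishing: one must rule out a cancellation among the contributions of the partitions \(\mu \le \kappa\) produced by \eqref{eq:JackTriDec} to \(I_{p_+,p_-}(q)\). The crucial point — which is what makes the untwisted case ``general'' and much easier than pinning down \(F_{p_+,p_-}\) exactly — is that only a single value of \(q\) is needed here, so one may specialise \(q\) to truncate \(\prod_i(1+v_i)^{q\alpha_+}\) so that \(\kappa\) occurs in degree \(\abs{\kappa}\) with its known coefficient \(1\) while the lower terms stay controlled. Computing \(F_{p_+,p_-}\) itself would instead require evaluating \(I_{p_+,p_-}(q)\) for all \(q\), which the methods here do not achieve directly; this is why the twisted analogue of this lemma must wait until \(F_{p_+,p_-}\) has been identified.
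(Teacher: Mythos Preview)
Your approach is genuinely different from the paper's and, as written, has a real gap at the step you yourself flag as the ``main obstacle''.

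The paper's proof is short and abstract: it does not touch the free field realisation at all. Suppose \(F_{p_+,p_-}=0\). Then \(\zhu{\MinMod{p_+}{p_-}}=\zhu{\svc{c_{p_+,p_-}}}\), so every simple \ns{} Verma module \(\NSVer{h,c_{p_+,p_-}}\) would be an \(\MinMod{p_+}{p_-}\)-module, hence annihilated by all modes of \(\vsv{p_+,p_-}(w)\). But the mode of index \(-\wt\vsv{p_+,p_-}\), projected to the \uesa{} of \(\svir{\sfrac12}^{\le}\), is a non-zero linear combination of monomials in negative modes with coefficients that are polynomials in \(L_0\); evaluating on the \hwv{} of \(\NSVer{h,c_{p_+,p_-}}\) replaces \(L_0\) by \(h\), and a non-zero polynomial cannot vanish for all of the infinitely many \(h\) with \(\NSVer{h,c_{p_+,p_-}}\) simple. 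This is the whole argument.

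Your route instead anticipates the free field computation of \cref{thm:NSzhu}: you reduce to showing the matrix element \(\NSffbracket{q}{\vsv{p_+,p_-}(w)}{q}\) is not identically zero in \(q\). Up to the inner product identification this is correct; the problem is the last step. You claim the coefficient of the top power of \(q\) is a non-zero multiple of \(\jprod{\jack{\kappa}{-3},\fpowsum{1}{v}^{\abs{\kappa}}}{p_+-1}{t}\) and then invoke non-degeneracy of the inner product. Non-degeneracy says that for any non-zero \(f\) there exists some \(g\) with \(\jprod{f,g}{}{t}\neq0\); it does not say this particular pairing with \(\fpowsum{1}{v}^{\abs{\kappa}}\) is non-zero. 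Concretely, the leading coefficient is \(\sum_{\mu\le\kappa} c_{\kappa\mu}\,\jprod{\jack{\mu}{t},\djack{\mu}{t}}{p_+-1}{t}\,/\,h_\mu(t)\), a signed sum over several partitions, and nothing you have written rules out cancellation. Your fallback, ``specialise \(q\) to truncate \(\prod_i(1+v_i)^{q\alpha_+}\)'', does not fix this: truncation constrains which degrees appear, but you still need to evaluate a pairing of the form \(\jprod{\jack{\kappa}{-3},\cdot}{p_+-1}{t}\) and show it is non-zero, which is exactly the same obstacle. In short, you have reduced the lemma to a statement of the same difficulty.

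Two further remarks. First, your plan is essentially the computation carried out in \cref{thm:NSzhu}; there, the non-vanishing of \(F_{p_+,p_-}\) is an \emph{input} (so that identifying enough roots determines the polynomial), not an output, which is precisely why the paper establishes \cref{lem:PolyNonZeroNS} by an independent argument. Second, the abstract argument is what makes the untwisted case easy: the moral is not that ``only one \(q\) is needed'' in a free field computation, but that Verma module \hwvs{} detect non-triviality of elements of the \uesa{} without any explicit formula for \(\vsv{p_+,p_-}\). The twisted case is harder only because the corresponding Ramond statement needs care with normal ordering; the paper instead deduces it from \cref{cor:saturation}, which is itself a by-product of the computation in \cref{thm:NSzhu}.
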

\begin{proof}
  Suppose that $F_{p_+,p_-} = 0$, so that the Zhu algebras of $\svc{c_{p_+,p_-}}$ and $\MinMod{p_+}{p_-}$ coincide and every simple $\svc{c_{p_+,p_-}}$-module is also an $\MinMod{p_+}{p_-}$-module.  In particular, every simple \ns{} Verma module is then an $\MinMod{p_+}{p_-}$-module.  We shall show that this contradicts the fact that the modes of $\vsv{p_+,p_-}(w)$ annihilate every $\MinMod{p_+}{p_-}$-module.

  Consider the mode of $\vsv{p_+,p_-}(w)$ with index $-\wt \vsv{p_+,p_-}$.  It is non-zero in the \uesa{} of the \ns{} algebra $\svir{1/2}$ because it acts non-trivially on $\svc{c_{p_+,p_-}}$.  Indeed, its projection onto the \uesa{} of the non-positive subalgebra $\svir{1/2}^{\le} = \svir{1/2}^- \oplus \svir{1/2}^0$ is a non-zero linear combination of monomials in the negative modes whose coefficients are polynomials in $L_0$.  Acting with this mode on the \hwv{} of a \ns{} Verma module $\NSVer{h,c_{p_+,p_-}}$ therefore gives the same linear combination of monomials, but where the polynomials in $L_0$ are evaluated at $L_0 = h$.  These evaluations cannot vanish for every simple \ns{} Verma module because these modules correspond to an infinitude of different values for $h$.  Thus, the mode of $\vsv{p_+,p_-}(w)$ with index $-\wt \vsv{p_+,p_-}$ does not annihilate the \hwv{} of some simple \ns{} Verma module.
\end{proof}
\noindent It is possible to generalise this proof to show that $F_{p_+,p_-}^{\parity}(G)$ is likewise non-zero.  However, this requires a technical detour addressing the subtleties of normal ordering in the Ramond sector.  Instead, we prefer to arrive at this non-vanishing as an easy consequence of the calculation of $F_{p_+,p_-}(T)$, see \cref{lem:PolyNonZeroR} below.

\subsection{The untwisted Zhu algebra of $\MinMod{p_+}{p_-}$} \label{sec:ZhuNS}

This subsection is devoted to the derivation of the explicit form of the polynomial \(F_{p_+,p_-}(T)\).  This result will be used to obtain the classification of simple modules in the \ns{} sector in \cref{sec:Spectrum}.  We also include a detailed example in which the general argument used in the proof is contrasted with a brute force computation of the polynomial \(F_{5,7}(T)\).

Before commencing this derivation, it is convenient to fix a few more definitions.
\begin{defn}
  The \emph{Kac table} of the $N=1$ minimal model $\MinMod{p_+}{p_-}$ is the set of pairs
  \begin{equation}
    \kac{p_+,p_-}=\set*{(r,s)\st 1\le r\le p_+-1,\ 1\le s\le p_--1}
  \end{equation}
  and the \emph{Neveu-Schwarz} and \emph{Ramond Kac tables} are the subsets
  \begin{align}
    \NSkac{p_+,p_-}=\set*{(r,s)\in\kac{p_+,p_-}\st r+s\ \text{is even}}\,\quad\text{and}\quad
    \Rkac{p_+,p_-}=\set*{(r,s)\in\kac{p_+,p_-}\st r+s\ \text{is odd}},
  \end{align}
  respectively.
  Let \(\sim\) denote the equivalence relation on $\kac{p_+,p_-}$ given by $(r,s)\sim (r',s')$ if and only if $(r,s)=(r',s')$ or $(r,s)=(p_+-r',p_--s')$.
  Then, the \emph{reduced Kac table}, as well as the \emph{reduced
    Neveu-Schwarz} and \emph{reduced Ramond Kac tables}, are defined to be
  \begin{equation}
    \rkac{p_+,p_-}=\kac{p_+,p_-}/\sim,\quad
    \rNSkac{p_+,p_-}=\NSkac{p_+,p_-}/\sim,\quad\text{and}\quad
    \rRkac{p_+,p_-}=\Rkac{p_+,p_-}/\sim,
  \end{equation}
  respectively.
\end{defn}
For every element \((r,s)\in\kac{p_+,p_-}\), we define the conformal weight
\begin{equation} \label{eq:DefKacWts}
  h_{r,s}=\frac{(rp_--sp_+)^2-(p_--p_+)^2}{8p_+p_-}+\frac{1-(-1)^{r+s}}{32}.
\end{equation}
The second summand on the \rhs{} evaluates to $0$, when $(r,s) \in
\NSkac{p_+,p_-}$, and to $\frac{1}{16}$, when $(r,s) \in \Rkac{p_+,p_-}$.  Note that $h_{r,s} = h_{p_+-r,p_--s}$.  We shall also define
\begin{equation} \label{eq:DefAlphaRS}
	\alpha_{r,s} = \frac{1-r}{2} \alpha_+ + \frac{1-s}{2} \alpha_-,
\end{equation}
where we recall that $\alpha_+$ and $\alpha_-$ were fixed in \eqref{eq:DefAlphas}.  Note that $h_{\alpha_{r,s}}^{\NS}$ and $h_{\alpha_{r,s}}^{\R}$ are both given by \eqref{eq:DefKacWts}, according as to whether $(r,s) \in \kac{p_+,p_-}$ has $r+s$ even or odd, respectively (see \eqref{eq:DefConfWts}).

\begin{thm}\label{thm:NSzhu}
  The polynomial \(F_{p_+,p_-}(T)\) is given, up to an irrelevant proportionality factor, by
  \begin{equation}
    F_{p_+,p_-}(T)= 
    \prod_{(r,s)}(T-h_{r,s}\,\wun),\quad (r,s)\in\rNSkac{p_+,p_-}.
  \end{equation}
\end{thm}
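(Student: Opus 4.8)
The plan is to pin down $F_{p_+,p_-}(T)$ by producing enough of its roots and then counting degrees. I will show that $F_{p_+,p_-}(h_{r,s}) = 0$ for every $(r,s) \in \NSkac{p_+,p_-}$. Since $h_{r,s} = h_{p_+-r,p_--s}$ (from \eqref{eq:DefKacWts}) and, under the coprimality hypotheses, $(r,s) \mapsto h_{r,s}$ separates distinct $\sim$-classes (an elementary arithmetic check), this yields exactly $\abs{\rNSkac{p_+,p_-}}$ distinct roots. A short parity count — splitting into $p_+,p_-$ both odd and both even, and using $\gcd\set{\frac{1}{2}(p_--p_+),p_-}=1$ to rule out $(p_+/2,p_-/2) \in \NSkac{p_+,p_-}$ — shows that this number coincides with the upper bound on $\deg F_{p_+,p_-}$ from \cref{lem:degrees}. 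Combined with \cref{lem:PolyNonZeroNS} (that $F_{p_+,p_-}$ is not identically zero), a nonzero polynomial of degree at most $\abs{\rNSkac{p_+,p_-}}$ that possesses these $\abs{\rNSkac{p_+,p_-}}$ roots must equal $\prod_{(r,s) \in \rNSkac{p_+,p_-}}(T - h_{r,s}\,\wun)$ up to scale, which is the assertion.

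For the root-finding I recall from \cref{thm:universalzhu}, \eqref{eq:zhupres} and the surrounding discussion that $F_{p_+,p_-}(h)$ is, up to normalisation, the scalar by which the zero mode of $\vsv{p_+,p_-}(w)$ — or of the field associated with $G_{-1/2}\vsv{p_+,p_-}$, when $p_+$ is even — acts on the ground state of any $\svc{c_{p_+,p_-}}$-module of conformal weight $h$. Applying this to the Fock module $\bNSFock{\alpha_{r,s}}$, whose generator has conformal weight $h_{\alpha_{r,s}}^{\NS} = h_{r,s}$ precisely when $r+s$ is even (see \eqref{eq:DefAlphaRS}), the task becomes to show that the matrix element $\NSffbra{\alpha_{r,s}}\vsv{p_+,p_-}(w)\NSffket{\alpha_{r,s}}$ (or its $G_{-1/2}$-descendant version when $p_+$ is even) vanishes; conformal covariance forces this matrix element to equal $C(\alpha_{r,s})\,w^{-\wt\vsv{p_+,p_-}}$, so it suffices to prove $C(\alpha_{r,s}) = 0$.

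To get at $C(q)$ I would substitute $\vsv{p_+,p_-} = \scrs{+}{p_+-1}\NSffket{-(p_+-1)\alpha_+}$ from \cref{thm:svlem} and expand the matrix element over $\ffvoa{\alpha_0}$ as an integral over the cycle $\cyc{p_+-1}{\alpha_0}$ of (free fermion correlator) $\times$ (free boson/vertex operator correlator), exactly as in the proof of \cref{thm:svlem} but now with $\NSffbra{q}$ inserted. For $p_+$ odd the fermion factor is the Neveu--Schwarz pfaffian \eqref{eq:NSpf} on $p_+-1$ points, and \cref{prop:VanPf=Jack} turns its product with $\van{z_1,\dots,z_{p_+-1}}$ into the admissible Jack polynomial $\fjack{\admp{p_+-1}{0,0}}{-3}{z}$; for $p_+$ even one works with $G_{-1/2}\vsv{p_+,p_-}$, whose fermion factor carries an extra $b_{-1/2}$ and is handled by \eqref{eq:NSshcorr2}, again producing an admissible Jack polynomial. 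After scaling out $w$, the boson correlator supplies the Dotsenko--Fateev weight $\prod_{i \neq j}(1 - z_i/z_j)^{1/t}$ with $t = -2\alpha_-/\alpha_0$ together with elementary factors $\prod_i(z_i-1)^{\gamma}\prod_i z_i^{\beta(q)}$ whose only $q$-dependence sits in $\beta(q) = q\alpha_+ + \text{const}$. Expanding these elementary factors into monomial symmetric functions, using the inner product \eqref{eq:intprod}, the orthogonality of Jack polynomials for $\jprod{\cdot,\cdot}{p_+-1}{t}$, the triangular relation \eqref{eq:JackTriDec} between Jack polynomials at parameter $-3$ and at parameter $t$, and \eqref{eq:isoprod}, one reduces $C(q)$ to a closed-form expression; substituting $q = \alpha_{r,s}$ with $r+s$ even and using the admissibility/dominance combinatorics (\cref{thm:admissiblebounds}, \cref{thm:uniqueadm}) to identify the surviving terms then exhibits the required zero.

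The main obstacle is precisely this last move. Unlike the Virasoro and $\AKMA{sl}{2}$ settings of \cite{RidJac14,RidSlJac15}, the Dotsenko--Fateev integral here does not collapse to a manifestly factored polynomial in $h$, so $F_{p_+,p_-}$ cannot be computed directly; one has to argue combinatorially that $C(q)$ acquires a zero at each Kac momentum $\alpha_{r,s}$ with $r+s$ even, and only then invoke the symmetry $h_{r,s} = h_{p_+-r,p_--s}$ and the degree bound to conclude. Once the $p_+$ odd case is settled, the $p_+$ even case follows by running the same argument with $G_{-1/2}\vsv{p_+,p_-}$ in place of $\vsv{p_+,p_-}$. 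I would close by working the mechanism out explicitly for $\MinMod{5}{7}$, contrasting the structural derivation of $F_{5,7}(T)$ with a brute-force expansion of $\vsv{5,7}$ as a descendant of the vacuum, as a check on the bookkeeping.
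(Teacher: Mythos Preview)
Your overall plan matches the paper's: produce enough roots of $F_{p_+,p_-}$ via the free field matrix element, then invoke \cref{lem:degrees} and \cref{lem:PolyNonZeroNS}. The free field setup, the reduction to the inner product \eqref{eq:intprod} with the $t=-3$ Jack polynomial coming from the fermionic pfaffian, and the triangular change of Jack parameter \eqref{eq:JackTriDec} are all the right moves.

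The gap is at the step you flag as ``the main obstacle''. You say you will substitute $q=\alpha_{r,s}$ and use the dominance combinatorics ``to identify the surviving terms'', but you do not name the mechanism that actually produces the zeros. The paper does \emph{not} work at specific $q$; it keeps $q$ generic, expands the factor $\prod_i (1+z_i/w)^{\alpha_+ q}$ via the \emph{specialisation} homomorphism $\Xi_{-2q/\alpha_0}$ on symmetric functions (not via monomials, and not via \eqref{eq:isoprod}, which belongs to the proof of \cref{thm:svlem}), and arrives at $F_{p_+,p_-}(h_q^{\NS})=\sum_{\mu\le\kappa} c_\mu\,\Xi_{-2q/\alpha_0}[\jack{\mu}{-2\alpha_-/\alpha_0}]$. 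The crucial point is then that $\Xi_X[\jack{\mu}{t}]$ factors as a product over the boxes of $\mu$, with each linear factor depending only on the arm and leg \emph{colengths}. Hence any box common to all $\mu\le\kappa$ contributes a linear factor in $q$ to \emph{every} summand, and therefore to $F_{p_+,p_-}(h_q^{\NS})$ itself. \Cref{thm:admissiblebounds} supplies this common diagram $\rho$; its boxes yield the roots $q=\alpha_{r,s}$ only for $r\le s$ in $\NSkac{p_+,p_-}$. (\Cref{thm:uniqueadm} plays no role here --- it enters only in the Ramond analysis.)

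This also means your symmetry step is misplaced. You claim you will show $C(\alpha_{r,s})=0$ for \emph{all} $(r,s)\in\NSkac{p_+,p_-}$ and then use $h_{r,s}=h_{p_+-r,p_--s}$ merely to deduplicate. The paper's combinatorics only deliver half of them directly; the remaining roots (those with $r>s$) come from the observation that $h_q^{\NS}$ is quadratic in $q$ with $h_q^{\NS}=h_{\alpha_0-q}^{\NS}$, so that $q=\alpha_{r,s}$ being a root forces $q=\alpha_0-\alpha_{r,s}=\alpha_{p_+-r,p_--s}$ to be one as well. Only after this pairing does one convert each pair $(q-\alpha_{r,s})(q-\alpha_{p_+-r,p_--s})$ into a linear factor $(T-h_{r,s})$ and match the degree bound. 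Without the specialisation/common-box argument and this $q\leftrightarrow\alpha_0-q$ symmetry, your sketch does not actually exhibit the zeros.
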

\begin{proof}
  As the polynomial \(F_{p_+,p_-}(T)\) does not vanish identically, it may be determined by evaluation at sufficiently many values, that is, we evaluate the zero mode of
  \(\vsv{p_+,p_-}\) (or \(G_{-\sfrac12}\vsv{p_+,p_-}\)) on candidate
  \hwvs{} of arbitrary conformal weight. These evaluations will be performed using the free field realisation by evaluating the action of the zero mode
  of \(\vsv{p_+,p_-}\) (or \(G_{-\sfrac12}\vsv{p_+,p_-}\)) on a free field
  \hwv{} \(\NSffket{q}\) of conformal weight
  \(h_q^{\NS}=\frac{1}{2}q(q-\alpha_0)\), for arbitrary $q \in \CC$. This \hwv{} will be an
  eigenvector of the zero mode 
  because the vector spans a $1$-dimensional weight space and its eigenvalue 
  is \(F_{p_+,p_-}(h_q^{\NS})\), by definition.
  Up to rescaling, \(F_{p_+,p_-}(h_q^{\NS})\) is uniquely characterised by its roots in \(q\) (with multiplicity).  Finally, as $h_q^{\NS}$ is quadratic in $q$, the roots of \(F_{p_+,p_-}(h_q^{\NS})\) come in pairs:  $q$ is a root if and only if $\alpha_0-q$ is a root.

  Consider first the case when \(p_+\) is odd, so that the singular vector \(\vsv{p_+,p_-}\) has even parity.  Then, \(F_{p_+,p_-}(h_q^{\NS})\) is determined by evaluating the matrix element
  \begin{equation} \label{eq:MatrixElement}
    \NSffbracket{q }{\vsv{p_+,p_-}(w)}{q }=F_{p_+,p_-}(h_q^{\NS}) w^{-(p_+-1)(p_--1)/2}.
  \end{equation}
  Recall that \(\NSffbra{q }\) is the vector dual to \(\NSffket{q }\) and that $\vsv{p_+,p_-}$ is expressed in terms of screening fields in \eqref{eq:VSV}.  The corresponding field is
  \begin{equation}
    \vsv{p_+,p_-}(w) = \int \scrf{+}{z_1+w} \cdots \scrf{+}{z_{p_+-1}+w} \vop{-(p_+-1)\alpha_+}{w} \: \dd z_1 \cdots \dd z_{p_+-1},
  \end{equation}
  hence the matrix element may be expressed as the integral
  \begin{multline}
    \NSffbracket{q }{\vsv{p_+,p_-}(w)}{q } = \int \van{z} \NScorrfn{b(z_1+w)\cdots b(z_{p_+-1}+w)} \\
    \cdot \: \prod_{\mathclap{1\le i\neq j\le p_+-1}} \: (z_i-z_j)^{\brac{\alpha_+^2-1}/2} \cdot \prod_{i=1}^{p_+-1} z_i^{-(p_+-1)\alpha_+^2} (z_i+w)^{\alpha_+ q} \cdot w^{-(p_+-1)\alpha_+ q} \: \dd z_1\cdots \dd z_{p_+-1},
  \end{multline}
  where we have used the definition \eqref{eq:DefScr} of the screening fields and the composition \eqref{eq:vertcomp} of vertex operators, as well as factored out a Vandermonde determinant $\van{z} = \prod_{1 \le i < j \le p_+-1} (z_i-z_j)$.
  Identity \eqref{eq:NSshcorr1}
  of \cref{thm:shiftedcor} can now be used, along with \eqref{eq:DefAlphas}, to rewrite the matrix element as
  \begin{multline}
    \NSffbracket{q }{\vsv{p_+,p_-}(w)}{q}=\int \fjack{\admp{p_+-1}{0,0}}{-3}{z_1^{-1}, \ldots, z_{p_+-1}^{-1}}
        \prod_{i=1}^{p_+-1}z_i^{-1-\brac{p_--p_+} / 2} \\
      \cdot \prod_{1\le i\neq j\le p_+-1} \brac[\Big]{1-\frac{z_i}{z_j}}^{-\alpha_0 / 2\alpha_-}
				\cdot \prod_{i=1}^{p_+-1} \brac*{1+\frac{z_i}{w}}^{\alpha_+ q}
        \: \frac{\dd z_1\cdots \dd z_{p_+-1}}{z_1\cdots z_{p_+-1}},
  \end{multline}
  which we recognise as a (finite-variable) symmetric polynomial inner product, see \eqref{eq:intprod}.  The matrix element thus becomes
  \begin{equation} \label{eq:NSmatrixel}
     \NSffbracket{q }{\vsv{p_+,p_-}(w)}{q } = \jprod{\fjack{\kappa}{-3}{z_1, \ldots, z_{p_+-1}}, \prod_{i=1}^{p_+-1} \brac*{1+\frac{z_i}{w}}^{\alpha_+q }}{p_+-1}{-2\alpha_- / \alpha_0},
  \end{equation}
  where $\kappa = \admp{p_+-1}{\frac{1}{2}(p_--p_+)+1, \tfrac{1}{2}(p_--p_+)+1}$.

  The product in \eqref{eq:NSmatrixel} is most easily expanded into Jack symmetric polynomials using an algebra homomorphism $\Xi_{X}$, which maps symmetric functions to complex numbers, called the specialisation map. This is defined to map each power sum (in an infinite number of variables $y = (y_1, y_2, \ldots)$) to the same $X \in \CC$.  Explicitly, it gives $\Xi_{X}(\fpowsum{k}{y}) = X$, for all $k \ge 1$.  We specialise with $X = -2q / \alpha_0$, as in \cite[Eq.~(A.28)]{RidSlJac15}, and combine this with the homogeneity of Jack symmetric polynomials to obtain
  \begin{align}\label{eq:jackprod}
    \prod_{i=1}^{p_+-1}\brac*{1+\frac{z_i}{w}}^{\alpha_+ q}
    &= \prod_{k \ge 1} \exp \brac[\bigg]{-\alpha_+ q \frac{\fpowsum{k}{-z_1/w, \ldots, -z_{p_+-1}/w}}{k}} \notag \\
    &= \Xi_{-2q / \alpha_0} \sqbrac*{\prod_{k \ge 1} \exp \brac[\bigg]{-\frac{\alpha_0}{2 \alpha_-} \frac{\fpowsum{k}{-z_1/w, \ldots, -z_{p_+-1}/w} \fpowsum{k}{y_1, y_2, \ldots}}{k}}} \notag \\
    &= \sum_\lambda (-w)^{-\abs{\lambda}} \fdjack{\lambda}{-2\alpha_- / \alpha_0}{z_1, \ldots, z_{p_+-1}}
    \Xi_{-2q / \alpha_0} \sqbrac*{\fjack{\lambda}{-2\alpha_- / \alpha_0}{y_1, y_2, \ldots}}.
  \end{align}
  where the sum runs over all partitions.

  On the other hand, the Jack symmetric polynomial in \eqref{eq:NSmatrixel} needs to be expanded into Jack polynomials with parameter $-2 \alpha_- / \alpha_0$, for which we make use of the triangular decomposition \eqref{eq:JackTriDec}.
  Using this and \eqref{eq:jackprod}, the matrix element \eqref{eq:NSmatrixel} now takes the form
  \begin{multline} \label{eq:NSeval}
    \NSffbracket{q}{\vsv{p_+,p_-}(w)}{q}
    = \sum_{\mu\le \kappa} \sum_\lambda c_{\kappa\mu} (-w)^{-\abs{\lambda}}
    \jprod{\jack{\mu}{-2 \alpha_- / \alpha_0}, \djack{\lambda}{-2 \alpha_- / \alpha_0}}{p_+-1}{-2 \alpha_- / \alpha_0}
    \Xi_{-2q / \alpha_0} \sqbrac*{\fjack{\lambda}{-2 \alpha_- / \alpha_0}{y_1, y_2, \ldots}} \\
    =\sum_{\mu\le \kappa} c_{\kappa\mu} \jprod{\jack{\mu}{-2 \alpha_- / \alpha_0}, \djack{\mu}{-2 \alpha_- / \alpha_0}}{p_+-1}{-2 \alpha_- / \alpha_0} \Xi_{-2q / \alpha_0} \sqbrac*{\fjack{\mu}{-2 \alpha_- / \alpha_0}{y_1, y_2, \ldots}} (-w)^{-(p_+-1)(p_--1) / 2},
  \end{multline}
  with the help of \eqref{eq:AdmWt}.  Up to an unimportant sign, we have therefore identified $F_{p_+,p_-}(h_q^{\NS})$ as
  \begin{equation} \label{eq:F=Jack}
	F_{p_+,p_-}(h_q^{\NS}) = \sum_{\mu\le \kappa} c_{\mu}^{\NS} \, \Xi_{-2q / \alpha_0} \sqbrac*{\fjack{\mu}{-2 \alpha_- / \alpha_0}{y_1, y_2, \ldots}},
	\end{equation}
	where the constants
	\begin{equation}
		c_{\mu}^{\NS} = c_{\kappa\mu} \jprod{\jack{\mu}{-2 \alpha_- / \alpha_0}, \djack{\mu}{-2 \alpha_- / \alpha_0}}{p_+-1}{-2 \alpha_- / \alpha_0}
	\end{equation}
	do not depend on $q$.

  The explicit form of the specialised Jack symmetric function in
  \eqref{eq:F=Jack} is (see, for example, \cite[Eq. (A.24)]{RidSlJac15})
  \begin{equation}\label{eq:Jackspec}
		\Xi_X(\jack{\mu}{t}) = \prod_{b\in\mu} \frac{X - l^\prime(b) + t a^\prime(b)}{1 + l(b) + t a(b)},
  \end{equation}
  where $a(b)$, $l(b)$, $a^\prime(b)$ and $l^\prime(b)$ are the arm length, leg length, arm colength and leg colength, respectively, of the box $b$ of the diagram of $\mu$.  This shows that the roots in \(X = -2q / \alpha_0\) of this specialisation only depend on the arm and leg colengths (the denominators in \eqref{eq:Jackspec} are manifestly non-zero as $t = -2 \alpha_- / \alpha_0 = 2p_+ / (p_- - p_+)$ is positive).  We recall that these colengths are the distances $j-1$ (arm) and $i-1$ (leg) from the box $b$ at position $(i,j)$ to the left and top edges of the diagram.  Thus, the boxes that are common to all partitions \(\mu\le\kappa\) will give rise to roots that are common to all summands of \eqref{eq:F=Jack} and are thus roots of $F_{p_+,p_-}(h_q^{\NS})$.  Invoking the estimate \ref{thm:admbnd1} of \cref{thm:admissiblebounds}, we learn that the parts of $\mu$ must satisfy $\mu_{2j-1}, \mu_{2j} \ge \frac{1}{2} (p_-+1)-j$, hence that the boxes common to all the \(\mu\) include those that form the partition
  \begin{equation} \label{eq:DefRho}
    \rho = \sqbrac*{\tfrac{1}{2} (p_- - 1), \tfrac{1}{2} (p_- - 1), \tfrac{1}{2} (p_- - 3), \tfrac{1}{2} (p_- - 3), \ldots, \tfrac{1}{2} (p_- - p_+) + 1, \tfrac{1}{2} (p_- - p_+) + 1}.
  \end{equation}
  Specialising the Jack symmetric polynomial $\jack{\rho}{-2\alpha_- / \alpha_0}$ will thus give some of the roots of the matrix element \eqref{eq:MatrixElement}.
 
  Performing this specialisation (and dropping various irrelevant proportionality factors) yields
  \begin{align}
    \Xi_{-2q / \alpha_0}(\jack{\rho}{-2\alpha_- / \alpha_0})
    &= \prod_{b\in\rho} \brac*{-\frac{2q}{\alpha_0} - l^\prime(b) - \frac{2\alpha_-}{\alpha_0} a^\prime(b)} \nonumber \\
		&= \prod_{i=1}^{\frac{1}{2} (p_+-1)} \prod_{j=1}^{\frac{1}{2} (p_-+1)-i}
		\brac[\big]{q + (i-1) \alpha_0 + (j-1)\alpha_-} \brac[\big]{q + (i-\tfrac{1}{2}) \alpha_0 + (j-1)\alpha_-} \nonumber \\
		&= \prod_{i=1}^{\frac{1}{2} (p_+-1)} \prod_{j=1}^{\frac{1}{2} (p_-+1)-i}
		\brac*{q  - \alpha_{2i-1,2i+2j-3}} \brac*{q  - \alpha_{2i,2i+2j-2}}
		= \underset{r=s \bmod{2}}{\prod_{r=1}^{p_+-1} \prod_{s=r}^{p_--1}} \brac*{q - \alpha_{r,s}}.
  \end{align}
  Thus, \(q =\alpha_{r,s}\) is a root of \(F_{p_+,p_-}(h_q^{\NS})\), for all \((r,s)\in\NSkac{p_+,p_-}\)
  with \(r\le s\). Suppose now that \((r,s)\in\NSkac{p_+,p_-}\) has $r>s$.  Then, \((p_+-r,p_--s)\in\NSkac{p_+,p_-}\) has $p_+-r < p_--s-(p_--p_+) < p_--s$, so $\alpha_{p_+-r,p_--s}$ is a root and, thus, so is \(\alpha_0-\alpha_{p_+-r,p_--s}=\alpha_{r,s}\).  We have therefore established that $q=\alpha_{r,s}$ is a root of \(F_{p_+,p_-}(h_q^{\NS})\), for all $(r,s) \in \NSkac{p_+,p_-}$.  It now follows that \(F_{p_+,p_-}(h_q^{\NS})\) contains the factor
  \begin{equation} \label{eq:q->h}
    \brac*{q -\alpha_{r,s}} \brac*{q - \alpha_{p_+-r,p_--s}} = 2 \brac*{h_q^{\NS} - h_{r,s}},
  \end{equation}
  hence that the conformal weights \(h_{r,s}\), with \((r,s)\in\rNSkac{p_+,p_-}\) are roots of \(F_{p_+,p_-}(T)\). Since the
  degree of \(F_{p_+,p_-}(T)\) is at most \(\frac{1}{4}(p_+-1)(p_--1)\), which coincides with $\abs[\big]{\rNSkac{p_+,p_-}}$,  we have found all the roots and the theorem follows (for \(p_+\) odd).

  The proof for \(p_+\) even uses very similar reasoning.  The main difference is that the matrix element to be evaluated is
  \begin{equation}\label{eq:NSevenmatel}
    \NSffbracket{q}{(G_{-\sfrac12}\vsv{p_+,p_-})(w)}{q}.
  \end{equation}
  As the screening operator \(\scrs{+}{p_+-1}\) is a module homomorphism for the \(N=1\) superconformal algebra, we have
  \begin{align}
		G_{-\sfrac12}\vsv{p_+,p_-} &= G_{-\sfrac12}\scrs{+}{p_+-1}\NSffket{-(p_+-1)\alpha_+}
		= \scrs{+}{p_+-1}G_{-\sfrac12}\NSffket{(p_+-1)\alpha_+} \notag \\
		&= \scrs{+}{p_+-1}(p_+-1)\alpha_+ b_{-\sfrac12}\NSffket{-(p_+-1)\alpha_+}.
  \end{align}
  The matrix element \eqref{eq:NSevenmatel} is therefore proportional to
  \begin{equation}
    \int\NSffbracket{q}{\scrf{+}{z_1+w}\cdots\scrf{+}{z_{p_+-1}+w}b(w)V_{-(p_+-1)\alpha_+}(w)}{q} \: \dd z_1\cdots \dd z_{p_+-1}.
  \end{equation}
  One now follows the same arguments as before, except that \eqref{eq:NSshcorr2} is used to express this matrix element using the Jack polynomial inner product instead of \eqref{eq:NSshcorr1}.  This inner product can then be written as a sum of specialisations of Jack symmetric polynomials, thus allowing one to find the common roots \(q\) of \(F_{p_+,p_-}(h_q^{\NS})\), this time by using estimate \ref{thm:admbnd2} of \cref{thm:admissiblebounds}.  We note that for this case, the partition $\rho$ of \eqref{eq:DefRho} is replaced by
  \begin{equation}
    \sqbrac*{\tfrac{1}{2} p_-, \tfrac{1}{2} p_- - 1, \tfrac{1}{2} p_- - 1, \tfrac{1}{2} p_- - 2, \ldots, \tfrac{1}{2} (p_- - p_+) + 1, \tfrac{1}{2} (p_- - p_+) + 1}.
  \end{equation}
  The rest of the proof is identical.
\end{proof}

We recall that the filtration on Zhu's algebra gave an easy upper bound on the degree of $F_{p_+,p_-}(T)$ (\cref{lem:degrees}).  A consequence of the previous proof is that this bound is, in fact, always saturated.  The saturation of this bound was stated by Kac and Wang in \cite[Thm.~3.1]{Kacn1z94}, without proof.  It seems plausible that a direct proof might be obtained by generalising the method of Astashkevich \cite[Thms.~3.1 and 8.4]{AstStr97} from Verma modules to the \ns{} module underlying the universal \vosa{} $\svc{c_{p_+,p_-}}$.  In any case, this saturation implies the following result.
\begin{cor} \label{cor:saturation}
	If $\vsv{p_+,p_-} \in \svc{c_{p_+,p_-}}$ is expressed as a linear combination of \PBW{}-ordered monomials acting on the \hwv{} $\Omega_{0,c_{p_+,p_-}}$, then the coefficient of $L_{-2}^{(p_+-1)(p_--1)/4}$ is non-zero, if $p_+$ is odd, and that of $L_{-2}^{(p_+-1)(p_--1)/4-3/4} G_{-3/2}$ is non-zero, if $p_+$ is even.
\end{cor}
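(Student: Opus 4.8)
The plan is to deduce \cref{cor:saturation} from the fact that the degree bound of \cref{lem:degrees} is actually saturated --- which is now guaranteed by \cref{thm:NSzhu} --- and then to identify the extremal \PBW{} monomial responsible for the top degree.

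First I would recall that the bound in \cref{lem:degrees} is the maximal length of a \PBW{}-ordered monomial of the correct conformal weight and parity that can occur in $\vsv{p_+,p_-}$, if $p_+$ is odd, and in $G_{-\sfrac12}\vsv{p_+,p_-}$, if $p_+$ is even: in the untwisted sector each $L_{-n}$, and each \emph{pair} of modes $G_{-m}$, contributes at most one power of $L_0$ to the corresponding zero mode acting on a ground state, while an odd number of the $G_{-m}$ kills that zero mode altogether. Write $d$ for this bound. A short weight count --- using $\wt\vsv{p_+,p_-}=\tfrac{1}{2}(p_+-1)(p_--1)$, $\wt G_{-\sfrac12}\vsv{p_+,p_-}=\tfrac{1}{2}(p_+-1)(p_--1)+\tfrac{1}{2}$, that each $L_{-n}$ has weight at least $2$, and that the $G_{-m}$ carry distinct indices in $\ZZ+\sfrac12$ bounded below by $\sfrac32$ --- then shows that $L_{-2}^{d}\,\Omega_{0,c_{p_+,p_-}}$ is the \emph{unique} even-parity \PBW{} monomial of weight $2d$ having length $d$, all other \PBW{} monomials of the relevant weight having strictly smaller length. (For $p_+$ even one needs that $d$ is a positive integer with $2d=\wt G_{-\sfrac12}\vsv{p_+,p_-}$: writing $p_\pm=2p'_\pm$, the coprimality condition forces $p'_++p'_-$ to be odd, whence $(p_+-1)(p_--1)=4p'_+p'_--2(p'_++p'_-)+1\equiv 3\pmod 4$, so $d=\tfrac{1}{4}\brac{(p_+-1)(p_--1)+1}$.)

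Next I would invoke the standard fact (see \cref{sec:twistedZhu} and \cite{ZhuMod96}) that the image of $L_{-2}^{k}\,\Omega_{0,c}$ in $\zhu{\svc{c}}\cong\CC[T]$ is a degree-$k$ polynomial in $T$ with non-zero leading coefficient, while any \PBW{} monomial with an odd number of $G$-modes maps to $0$ and any other \PBW{} monomial of length $<d$ maps to a polynomial of degree $<d$. Combined with the previous paragraph, this means $L_{-2}^{d}\,\Omega_{0,c_{p_+,p_-}}$ is the only \PBW{} monomial of the relevant weight whose image contributes in degree $d$; hence the leading coefficient of $F_{p_+,p_-}(T)$ is a non-zero multiple of the coefficient of $L_{-2}^{d}\,\Omega_{0,c_{p_+,p_-}}$ in $\vsv{p_+,p_-}$ (for $p_+$ odd) or in $G_{-\sfrac12}\vsv{p_+,p_-}$ (for $p_+$ even). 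Since \cref{thm:NSzhu} gives $\deg F_{p_+,p_-}(T)=\abs{\rNSkac{p_+,p_-}}$, and a direct count --- the very arithmetic underlying the bound $d$ in \cref{lem:degrees} --- shows $\abs{\rNSkac{p_+,p_-}}=d$, that coefficient is non-zero. This is precisely \cref{cor:saturation} when $p_+$ is odd.

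For $p_+$ even it remains to transfer this from $G_{-\sfrac12}\vsv{p_+,p_-}$ back to $\vsv{p_+,p_-}$. By the weight count, $\vsv{p_+,p_-}$, which is odd of weight $2d-\sfrac12$, has exactly one \PBW{} monomial of maximal length $d$, namely $L_{-2}^{d-1}G_{-\sfrac32}\,\Omega_{0,c_{p_+,p_-}}$, all others having length at most $d-1$. Acting with $G_{-\sfrac12}$ on a monomial of length at most $d-1$ and re-ordering yields only monomials of length at most $d-1$, since commuting $G_{-\sfrac12}$ rightward to reach and annihilate $\Omega_{0,c_{p_+,p_-}}$ costs at least one bracket and $G_{-\sfrac12}\Omega_{0,c_{p_+,p_-}}=0$ in $\svc{c_{p_+,p_-}}$; whereas $G_{-\sfrac12}L_{-2}^{d-1}G_{-\sfrac32}\,\Omega_{0,c_{p_+,p_-}}=2L_{-2}^{d}\,\Omega_{0,c_{p_+,p_-}}+(\text{monomials of length}<d)$ by $\acomm{G_{-\sfrac12}}{G_{-\sfrac32}}=2L_{-2}$, the corrections from $\comm{G_{-\sfrac12}}{L_{-2}}\propto G_{-\sfrac52}$ being of strictly smaller length. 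Hence the coefficient of $L_{-2}^{d}\,\Omega_{0,c_{p_+,p_-}}$ in $G_{-\sfrac12}\vsv{p_+,p_-}$ equals twice that of $L_{-2}^{d-1}G_{-\sfrac32}\,\Omega_{0,c_{p_+,p_-}}$ in $\vsv{p_+,p_-}$; by the preceding paragraph the former is non-zero, so the latter is too, and $d-1=\tfrac{1}{4}(p_+-1)(p_--1)-\tfrac{3}{4}$. I expect the main obstacle to be the combinatorial bookkeeping --- carrying out the weight inequality carefully enough to isolate the unique extremal \PBW{} monomial in each case, and checking that neither passing to zero modes nor applying $G_{-\sfrac12}$ introduces a competing top-degree contribution --- whereas the Zhu-theoretic input (that $L_{-2}^{k}\Omega$ has image of exact degree $k$) and the count $\abs{\rNSkac{p_+,p_-}}=d$ are routine.
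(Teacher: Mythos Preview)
Your approach is exactly that of the paper --- which simply notes that the saturation of the degree bound of \cref{lem:degrees}, now established by \cref{thm:NSzhu}, implies the corollary --- and you have supplied the details that the paper leaves implicit.

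One point of imprecision is worth tightening. You assert that $L_{-2}^{d}\,\Omega$ is the \emph{unique} even-parity \PBW{} monomial of weight $2d$ having length $d$. If ``length'' means the number of modes, this is false for $d\ge 2$: the monomial $L_{-2}^{d-2}G_{-5/2}G_{-3/2}\,\Omega$ also has $d$ modes, even parity and weight $2d$. What is true --- and is what you actually need --- is that $L_{-2}^{d}\,\Omega$ is the unique monomial whose image in $\zhu{\svc{c}}\cong\CC[T]$ has degree $d$: a monomial with $j$ $L$-modes and $2k$ $G$-modes maps to degree at most $j+k$, and the weight constraint $2d\ge 2j + (3/2+5/2+\cdots+(4k{+}1)/2) = 2j+2k^2+2k$ forces $j+k\le d-k^2$, equal to $d$ only when $k=0$. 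Your own earlier remark (``each pair of modes $G_{-m}$ contributes at most one power of $L_0$'') is precisely this statement, so the argument stands once ``length'' is read as the Zhu-degree $j+k$ rather than the mode count $j+2k$.

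The same remark applies to the $p_+$-even transfer. The commutator corrections arising from $\comm{G_{-1/2}}{L_{-2}}\propto G_{-5/2}$ reorder to multiples of $L_{-2}^{d-2}G_{-5/2}G_{-3/2}\,\Omega$ plus genuinely shorter terms; the displayed monomial still has $d$ modes, so it is not of ``strictly smaller length'' in the mode-count sense. But it is not $L_{-2}^{d}\,\Omega$, and your earlier step (applying $G_{-1/2}$ to a mode-count $\le d{-}1$ monomial yields mode-count $\le d{-}1$) already shows that nothing else can produce $L_{-2}^{d}\,\Omega$. Hence the coefficient of $L_{-2}^{d}\,\Omega$ in $G_{-1/2}\vsv{p_+,p_-}$ is exactly twice that of $L_{-2}^{d-1}G_{-3/2}\,\Omega$ in $\vsv{p_+,p_-}$, as you claim.
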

\noindent We mention that the analogue of this result for universal Virasoro \voas{} is proven in \cite[Lem.~9.6]{IohRep11}.

\medskip

Let us illustrate the method used in the proof of \cref{lem:PolyNonZeroNS} by working out an explicit example, namely the derivation of the allowed free field data $\alpha_{r,s}$ and conformal weights $h_{r,s}$ for the \ns{} sector of the minimal model $\MinMod{5}{7}$.  With $p_+=5$ and $p_-=7$, we have $\kappa = \admp{4}{2,2} = [4,4,2,2]$ and
\begin{equation}
-\frac{2\alpha_-}{\alpha_0}=\frac{2p_+}{p_--p_+}=5.
\end{equation}
The explicit decomposition \eqref{eq:JackTriDec} giving the $c_{\kappa\mu}$ is highly truncated as there are only $p_+ - 1 = 4$ variables:
\begin{align}
\jack{[4,4,2,2]}{-3}(z_1,\ldots, z_4)&=\jack{[4,4,2,2]}{t}(z_1,\ldots, z_4)-\frac{(t+3)}{(t+1)}\jack{[4,3,3,2]}{t}(z_1,\ldots, z_4)+\frac{6(t+4)}{(t+2)}\jack{[3,3,3,3]}{t}(z_1,\ldots, z_4)\nonumber\\
&=\jack{[4,4,2,2]}{5}(z_1,\ldots, z_4)-\frac43\jack{[4,3,3,2]}{5}(z_1,\ldots, z_4)+\frac{54}{7}\jack{[3,3,3,3]}{5}(z_1,\ldots, z_4).
\end{align}
The diagrams of the contributing partitions are
\begin{equation}
  \ydiagram{4,4,2,2}*[*(lightgray)]{3,3,2,2}\;,\qquad
  \ydiagram{4,3,3,2}*[*(lightgray)]{3,3,2,2}\;,\qquad
  \ydiagram{3,3,3,3}*[*(lightgray)]{3,3,2,2}\qquad\Rightarrow\qquad
  \rho=\,\ydiagram[*(lightgray)]{3,3,2,2}\;,
\end{equation}
where $\rho$ is the diagram formed by the boxes that are common to all three diagrams (indicated with shading).  
Each box $b$ of $\rho$ contributes a factor $q-\alpha_{r,s}(b)$ to the polynomial $F_{5,7}(T)$, where the indices $r$ and $s$ for each box are indicated below together with the corresponding conformal weights $h_{r,s}^{\NS}$ and the \ns{} Kac table of conformal weights (with shading indicating the entries determined by $\rho$):
\begin{equation}
  \alpha_{r,s}:\quad
  \ytableausetup{boxsize=2em}
  \begin{ytableau}
    1,1&1,3&1,5\\
    2,2&2,4&2,6\\
    3,3&3,5\\
    4,4&4,6
  \end{ytableau}
  \;,\qquad h_{r,s}^{\NS}:\quad
  \begin{ytableau}
    0&\frac{3}{14}&\frac{8}{7}\\
    \frac{3}{70}&\frac{4}{35}&\frac{9}{10}\\
    \frac{4}{35}&\frac{3}{70}\\
    \frac{3}{14}&0
  \end{ytableau}
  \;,\qquad\NSkac{5,7}:\quad
  \begin{ytableau}
		*(lightgray)0&&*(lightgray)\tfrac{3}{14}&&*(lightgray)\tfrac{8}{7}&\\
		&*(lightgray)\tfrac{3}{70}&&*(lightgray)\tfrac{4}{35}&&*(lightgray)\tfrac{9}{10}\\
		\tfrac{9}{10}&&*(lightgray)\tfrac{4}{35}&&*(lightgray)\tfrac{3}{70}&\\
		&\tfrac{8}{7}&&*(lightgray)\tfrac{3}{14}&&*(lightgray)0
  \end{ytableau}
  \;.
\end{equation}
As noted in the proof of \cref{lem:PolyNonZeroNS}, we obtain all the allowed conformal weights from $\rho$ because of the $\ZZ_2$-symmetry of the Kac table.

It is instructive to see that the two factors $q-\alpha_{r,s}$ that are missed by $\rho$, namely $(r,s) = (3,1)$ and $(4,2)$, do actually appear in $F_{5,7}(h_q^{\NS})$.  This requires the explicit form identified in \eqref{eq:F=Jack}.  For brevity, set
\begin{equation} \label{eq:DefJdJ}
  \JdJ{\mu}{n}{t} = \jprod{\jack{\mu}{t}, \djack{\lambda}{t}}{n}{t}
  = \prod_{b\in\mu} \frac{n-l'(b)+ta'(b)}{n-(l'(b)+1)+t(a'(b)+1)}
\end{equation}
(see \cite[Eq. VI.10.37]{MacSym95}), where we recall that $\jprod{1,1}{n}{t}$ has been normalised to $1$ in \eqref{eq:intnorm}.  With this notation, we have
\begin{multline}
	F_{5,7}(h_q^{\NS}) =\JdJ{[4,4,2,2]}{4}{5}\,\Xi_{-2q / \alpha_0} \sqbrac*{\jack{[4,4,2,2]}{5}} 
	-\frac{4}{3}\JdJ{[4,3,3,2]}{4}{5}\,\Xi_{-2q / \alpha_0} \sqbrac*{\jack{[4,3,3,2]}{5}} \\
	+\frac{54}{7}\JdJ{[3,3,3,3]}{4}{5}\,\Xi_{-2q / \alpha_0} \sqbrac*{\jack{[3,3,3,3]}{5}},
	\end{multline}
where $\alpha_0 = \frac{2}{\sqrt{35}}$.  Let $h_\mu(t)=\prod_{b\in \mu} (1 + l(b) + t a(b))$, so that the specialisation is  
\begin{equation}
	\Xi_X\sqbrac*{\jack{\mu}{t}} = \frac{1}{h_\mu(t)} \prod_{b\in\mu} \brac*{X - l^\prime(b) + t a^\prime(b)}.
\end{equation}
For the three partitions of interest here, we obtain
\begin{equation}
\begin{aligned}
h_{[4,4,2,2]}(t)&=24(t+1)^3(t+2)^3(2t+3)(3t+4)\overset{t=5}=439\,193\,664,\\
h_{[4,3,3,2]}(t)&=8(t+1)^2(t+2)^2(t+3)^2(2t+3)(3t+4)\overset{t=5}=223\,082\,496,\\
h_{[3,3,3,3]}(t)&=96(t+1)^2(t+2)^2(t+3)(t+4)(2t+1)(2t+3)\overset{t=5}=1\,743\,565\,824.
\end{aligned}
\end{equation}
Factoring out the product over the common boxes of $\rho$ now gives
\begin{equation}
F_{5,7}(h_q^{\NS}) = \brac*{-\frac{2}{\alpha_0}}^{12}  \mathcal N_\rho(4;5)   \, F^{(1)} F^{(2)},
\end{equation}
where the roots determined by $\rho$ are bundled into
\begin{equation}
F^{(1)} = \prod_{b\in\rho} \brac*{q + \tfrac{1}{2} \alpha_0 l^\prime(b) + \alpha_- a^\prime(b)}
=\ \prod_{\mathclap{\substack{(r,s)\in \NSkac{5,7} \\ (r,s)\ne(3,1),(4,2)}}}\ (q-\alpha_{r,s})
\end{equation}
and the two other roots are (hidden) in
\begin{multline}
F^{(2)}=  \frac{(q+3\alpha_-) (q+\alpha_0/2+3\alpha_-)}{h_{[4,4,2,2]}(5)} \brac*{\frac{19}{23}\frac{9}{11}}
- \frac{4}{3} \frac{(q+3 \alpha_-)(q+\alpha_0+2\alpha_-)}{h_{[4,3,3,2]}(5)} \brac*{\frac{19}{23}\frac{3}{4}} \\ 
+ \frac{54}{7} \frac{(q+\alpha_0 + 2 \alpha_-)(q+3 \alpha_0/2+2 \alpha_-)}{h_{[3,3,3,3]}(5)} \brac*{\frac{3}{4}\frac{11}{15}}.
\end{multline}
Here, the rational numbers in parentheses are the contributions to \eqref{eq:DefJdJ} from the boxes of each partition that are not in $\rho$.  A brute force simplification now results in
\begin{equation}
F^{(2)} \propto \brac*{q+ \frac{\sqrt{35}}{5}} \brac*{q+\frac{8}{\sqrt{35}}} = (q-\alpha_{3,1}) (q-\alpha_{4,2}),
\end{equation}
so we do indeed recover the two missing roots, albeit at a significant computational cost.

\subsection{The twisted Zhu algebra of $\MinMod{p_+}{p_-}$} \label{sec:ZhuR}

We now turn to the derivation of the polynomial $F_{p_+,p_-}^{\parity}(G)$, required for the classification in the Ramond sector. The first step is to demonstrate that this polynomial does not vanish identically.

\begin{lem} \label{lem:PolyNonZeroR}
	$F_{p_+,p_-}^{\parity}(G)$ is not the zero polynomial.	
\end{lem}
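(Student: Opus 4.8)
The plan is to derive the non-vanishing of $F^{\parity}_{p_+,p_-}(G)$ from \cref{cor:saturation} by a leading-term argument, in the spirit of the proof of \cref{lem:PolyNonZeroNS} but a little more delicate, and without redoing the Ramond free field computation. Recall that $F^{\parity}_{p_+,p_-}(G)$ is the image of the singular vector $\vsv{p_+,p_-} \in \svc{c_{p_+,p_-}}$, of weight $W = \tfrac12(p_+-1)(p_--1)$, under the canonical map to $\tzhu{\svc{c_{p_+,p_-}}} \cong \CC[G]$, and that on the ground state of $\R(h,c_{p_+,p_-})$ the zero mode of the corresponding field acts with $T$ realised as $L_0 = h$ and $G$ as $G_0$, where $G_0^2 = h - \tfrac{c}{24}$. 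Writing $\vsv{p_+,p_-}$ as a linear combination of \PBW{}-ordered monomials $L_{-n_1}\cdots L_{-n_p}\,G_{-m_1}\cdots G_{-m_k}\,\Omega_{0,c}$, the strategy is to show that a single distinguished monomial dictates the top-degree behaviour of this zero mode.

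The key elementary input is that the zero mode of the field attached to such a monomial, evaluated on the ground state of $\R(h,c)$, is a polynomial in $h$ of degree at most $p + \lfloor k/2 \rfloor$, times a factor $G_0$ when $k$ is odd. Indeed, after moving the positively-moded operators to the right, the only products that survive on the ground state are those in which the $G$-modes are paired through $G_0^2 = L_0 - \tfrac{c}{24}$, each surviving pair yielding one factor of $h$; in filtration terms this is the statement that $\bar G^{\,2} = 0$ and $\bar T^{\,N} \neq 0$ in $\mathrm{gr}\,\tzhu{\svc{c}}$ for every $N$, the former because $G^2 = T - \tfrac{c}{24}\wun$ has length at most $1$, the latter because the even subalgebra $\CC[T] \subset \CC[G]$ carries the standard filtration with $T$ in degree $1$. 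A short weight count then gives, for monomials of weight $W$, the following. If $p_+$ is odd, so $W \in \ZZ$, each such monomial has image of $L_0$-degree at most $N := \tfrac14(p_+-1)(p_--1)$, with equality attained only by $L_{-2}^{\,N}\,\Omega_{0,c}$. If $p_+$ is even, so $W \in \ZZ + \tfrac12$, each such monomial has image equal to $G_0$ times a polynomial in $h$ of degree at most $M := \tfrac14(p_+-1)(p_--1) - \tfrac34$, with the bound $M$ attained only by $L_{-2}^{\,M}\,G_{-3/2}\,\Omega_{0,c}$.

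It then suffices to invoke \cref{cor:saturation}. In the odd case it says that $L_{-2}^{\,N}\,\Omega_{0,c}$ appears in $\vsv{p_+,p_-}$ with non-zero coefficient; hence the zero mode of $\vsv{p_+,p_-}$ on the ground state of $\R(h,c)$ is a polynomial in $h$ of degree exactly $N$ with non-zero leading coefficient, so $F^{\parity}_{p_+,p_-}(G)$ contains a term proportional to $(T - \tfrac{c}{24}\wun)^N = G^{2N}$ and is not identically zero. In the even case \cref{cor:saturation} says $L_{-2}^{\,M}\,G_{-3/2}\,\Omega_{0,c}$ appears with non-zero coefficient, so the zero mode of $\vsv{p_+,p_-}$ is $G_0$ times a polynomial of degree exactly $M$ with non-zero leading coefficient, whence $F^{\parity}_{p_+,p_-}(G)$ contains a term proportional to $(T - \tfrac{c}{24}\wun)^M G_0 = G^{2M+1}$ and again cannot vanish.

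The main obstacle is making the degree estimate of the second paragraph fully rigorous --- that is, controlling precisely how the $G$-modes collapse via $G_0^2 = L_0 - \tfrac{c}{24}$ in zero modes on Ramond ground states, equivalently pinning down just enough of $\mathrm{gr}\,\tzhu{\svc{c}}$, so as to be sure that no \PBW{}-monomial other than the distinguished one can reach the top $L_0$-degree and produce a cancellation. Everything else is routine: the weight count is a direct computation with partitions, and the non-vanishing of the distinguished coefficient is exactly \cref{cor:saturation}, i.e.\ a corollary of the identification of $F_{p_+,p_-}(T)$ in \cref{thm:NSzhu}.
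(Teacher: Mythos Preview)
Your proposal is correct and follows essentially the same approach as the paper: invoke \cref{cor:saturation} to secure the non-zero coefficient of the distinguished \PBW{} monomial ($L_{-2}^{N}$ or $L_{-2}^{M}G_{-3/2}$), argue that this monomial alone attains the maximal $G$-degree in $\tzhu{\svc{c_{p_+,p_-}}}$, and conclude that the top term cannot be cancelled. The paper's proof is terser --- it simply asserts that ``the other \PBW{} monomials give polynomials in $G$ of (strictly) smaller degree'' --- whereas you have been more explicit about the filtration argument and have honestly flagged the degree estimate as the step requiring care; but the substance is the same.
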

\begin{proof}
	Recall that $F_{p_+,p_-}^{\parity}(G)$ is the expression for the zero mode of $\vsv{p_+,p_-}(w)$ acting on ground states.  For $p_+$ odd, the coefficient of $L_{-2}^{(p_+-1)(p_--1)/4}$ being non-zero in \cref{cor:saturation} implies that the coefficient of $T^{(p_+-1)(p_--1)/4}$, hence that of $G^{(p_+-1)(p_--1)/2}$, in $F_{p_+,p_-}^{\parity}(G)$ is non-zero.  It is easy to check that the other \PBW{} monomials give polynomials in $G$ of (strictly) smaller degree, hence this non-zero term cannot be cancelled and $F_{p_+,p_-}^{\parity}(G)$ is not zero.  A similar argument settles the case for $p_+$ even.
\end{proof}

\begin{thm}\label{thm:Rzhu}
  The polynomial \(F_{p_+,p_-}^\parity(G)\) is given, up to an irrelevant proportionality factor, by
  \begin{equation}
    F_{p_+,p_-}^\parity(G)=
    \begin{cases*}
      \prod_{(r,s)}(T-h_{r,s}\,\wun),\phantom{{}\cdot G}\quad
      (r,s)\in\rRkac{p_+,p_-},& if \(p_+\) is odd,\\
      \prod_{(r,s)}(T-h_{r,s}\,\wun)\cdot G,\quad
      (r,s)\in\rRkac{p_+,p_-}\setminus\set*{(p_+/2,p_-/2)},
      & if \(p_+\) is even,
    \end{cases*}
  \end{equation}
    where \(T=G^2+\frac{1}{24} c_{p_+,p_-} \, \wun\).
\end{thm}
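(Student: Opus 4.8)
The plan is to run the argument of \cref{thm:NSzhu} in the Ramond sector, replacing the Neveu-Schwarz correlator identities \eqref{eq:NSshcorr1}--\eqref{eq:NSshcorr2} by their Ramond counterparts \eqref{eq:Rshcorr1}--\eqref{eq:Rshcorr2} and the admissibility estimates \ref{thm:admbnd1}--\ref{thm:admbnd2} of \cref{thm:admissiblebounds} by \ref{thm:admbnd3}--\ref{thm:admbnd4}. By \cref{lem:PolyNonZeroR}, $F_{p_+,p_-}^{\parity}(G)$ is not identically zero, so it is determined up to a scalar by its roots with multiplicity, and by \cref{lem:degrees} its degree in $G$ is at most $\tfrac12 (p_+-1)(p_--1)$ for $p_+$ odd and at most $\tfrac12 (p_+-1)(p_--1) - \tfrac12$ for $p_+$ even. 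Since \cref{thm:universalzhu} identifies $\tzhu{\svc{c}} \cong \CC[G]$, with $T = G^2 + \tfrac{1}{24} c_{p_+,p_-}\wun$, and $\R(h,c)$ as the simple module on which $G$ and $T$ act as $G_0$ and $L_0$, I would compute the image of $\vsv{p_+,p_-}$ in $\tzhu{\MinMod{p_+}{p_-}}$ by evaluating the action of the zero mode of the field $\vsv{p_+,p_-}(w) = \int \scrf{+}{z_1+w} \cdots \scrf{+}{z_{p_+-1}+w} \vop{-(p_+-1)\alpha_+}{w} \: \dd z_1 \cdots \dd z_{p_+-1}$ on the free field Ramond ground states of $\bRFock{q}$, where $h_q^{\R} = \tfrac12 q(q-\alpha_0) + \tfrac{1}{16}$.

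For $p_+$ odd the singular vector is even and $p_+-1$ is even, so the relevant quantity is $\Rffbracket{q}{\vsv{p_+,p_-}(w)}{q}$; expanding this with \eqref{eq:vertcomp} and \eqref{eq:ScrComp} and substituting the Ramond fermion correlator \eqref{eq:Rshcorr1} turns it into a finite-variable Jack inner product, now taken against the linear combination of the $\fjack{[2n-1-\uniqp{2n}{m}]}{-3}{z^{-1}}$ produced by \eqref{eq:Rshcorr1}, each dressed by the rectangular shift coming from $\vop{-(p_+-1)\alpha_+}{w}$. Applying the triangular re-expansion \eqref{eq:JackTriDec} together with the specialisation identities \eqref{eq:jackprod} and \eqref{eq:Jackspec}, exactly as in the proof of \cref{thm:NSzhu}, rewrites $\Rffbracket{q}{\vsv{p_+,p_-}(w)}{q}$ (up to a power of $w$ and an irrelevant scalar) as a sum of terms $c_\mu\, \Xi_{-2q/\alpha_0}\sqbrac*{\jack{\mu}{-2\alpha_-/\alpha_0}}$ over admissible $\mu$, and estimate \ref{thm:admbnd3} of \cref{thm:admissiblebounds} bounds all contributing $\mu$ from below, producing a partition $\rho$ (the Ramond analogue of \eqref{eq:DefRho}) whose boxes are common to every such $\mu$. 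Specialising $\jack{\rho}{-2\alpha_-/\alpha_0}$ and rewriting the arm and leg colengths in terms of the $\alpha_{r,s}$ of \eqref{eq:DefAlphaRS} then exhibits the roots $q = \alpha_{r,s}$ for the $(r,s) \in \Rkac{p_+,p_-}$ with $r \le s$; the reflection $q \mapsto \alpha_0 - q$, that is $(r,s) \mapsto (p_+-r, p_--s)$, supplies the rest, so $\prod_{(r,s) \in \rRkac{p_+,p_-}}(h_q^{\R} - h_{r,s})$ divides $F_{p_+,p_-}^{\parity}(h_q^{\R})$. This product has degree $\tfrac12 (p_+-1)(p_--1)$ in $q$, which matches the bound of \cref{lem:degrees}, so there is no room for further roots, and using $(q-\alpha_{r,s})(q-\alpha_{p_+-r,p_--s}) = 2(h_q^{\R}-h_{r,s})$ together with $T = G^2 + \tfrac{1}{24} c_{p_+,p_-}\wun$ gives the claimed formula.

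For $p_+$ even the singular vector is odd and $p_+-1$ is odd, so an odd number of free fermions appears and the Ramond correlator vanishes unless it is paired with $b_0$; accordingly the image of $\vsv{p_+,p_-}$ in $\tzhu{\svc{c}} \cong \CC[G]$ is odd, of the form $G \cdot Q(T)$, and the matrix element one evaluates is the pairing of the even free field ground state $\Rffket{q}$ with the odd one $\ketb{q} \otimes b_0 \Rket$, whose fermionic part is precisely $\Rcorrfn{b_0\, b(z_1+w) \cdots b(z_{p_+-1}+w)}$ and hence given by \eqref{eq:Rshcorr2}. The same chain of manipulations, now invoking estimate \ref{thm:admbnd4} of \cref{thm:admissiblebounds} with $\ell = p_+-1$, produces the roots $q = \alpha_{r,s}$ for all $(r,s) \in \Rkac{p_+,p_-}$ except the reflection-fixed entry $(p_+/2,p_-/2)$, which corresponds to $h_{p_+/2,p_-/2} = \tfrac{1}{24} c_{p_+,p_-}$ and is already accounted for by the overall factor of $G$, since this factor annihilates the one-dimensional module $\R(\tfrac{1}{24}c_{p_+,p_-}, c_{p_+,p_-})$. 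These roots have exactly the count permitted for $Q(T)$ by \cref{lem:degrees}, forcing $Q(T) = \prod_{(r,s)}(T - h_{r,s}\wun)$ over $(r,s) \in \rRkac{p_+,p_-} \setminus \set*{(p_+/2, p_-/2)}$, which is the asserted form.

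The main obstacle, as already in \cref{thm:shiftedcor}, is that the Ramond correlators \eqref{eq:Rshcorr1}--\eqref{eq:Rshcorr2} are genuine linear combinations of Jack polynomials indexed by several partitions $\uniqp{2n}{m}$ rather than a single one as in the Neveu-Schwarz case, so the bookkeeping of how the vertex-operator rectangle and the triangular re-expansion \eqref{eq:JackTriDec} act on all of them — and the verification that the estimates of \cref{thm:admissiblebounds} really do isolate a single partition $\rho$ of exactly the right size — is considerably more delicate than for \cref{thm:NSzhu}. A secondary technical point is the parity and $b_0$ bookkeeping in the $p_+$ even case, and in particular checking that the exceptional entry $(p_+/2,p_-/2)$ is precisely the one whose contribution is absorbed into the prefactor $G$, which is what produces the discrepancy between the index set $\rRkac{p_+,p_-}\setminus\set*{(p_+/2,p_-/2)}$ of the theorem and the full reduced Ramond Kac table.
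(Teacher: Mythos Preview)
Your approach is essentially that of the paper, and the overall structure is correct. Two points deserve sharpening.

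First, in the Ramond sector the specialisation parameter is not $-2q/\alpha_0$ but $-(2q+\alpha_-)/\alpha_0$: the extra factor $\prod_i (z_i+w)^{1/2}$ on the left of the Ramond correlator identities \eqref{eq:Rshcorr1}--\eqref{eq:Rshcorr2} combines with the vertex-operator contribution to shift the exponent of $(1+z_i/w)$ by~$\tfrac12$, and this propagates into the specialisation. This is only bookkeeping, but it does affect precisely which boxes of $\rho$ correspond to which $\alpha_{r,s}$.

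Second, your handling of the fixed point $(p_+/2,p_-/2)$ for $p_+$ even is not quite right. The matrix element $\Rffbracket{q}{b_0\,\vsv{p_+,p_-}(w)}{q}$ actually acquires \emph{all} the roots $q=\alpha_{r,s}$ with $(r,s)\in\Rkac{p_+,p_-}$, including the fixed point; the analysis via $\rho$ does not exclude it. The issue is rather that when one pairs roots via $(r,s)\sim(p_+-r,p_--s)$ to form factors $h_q^{\R}-h_{r,s}$, the fixed-point root $q=\alpha_{p_+/2,p_-/2}=\tfrac{\alpha_0}{2}$ remains unpaired and so cannot be absorbed into a polynomial in $T$. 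The paper closes this by the direct free-field computation $\Rffbracket{q}{b_0 G_0}{q}=\tfrac12\bigl(q-\tfrac{\alpha_0}{2}\bigr)$, which identifies the leftover linear factor in $q$ with the action of $G$. Your a priori observation that $F_{p_+,p_-}^{\parity}$ is odd, hence of the form $G\cdot Q(T)$, is correct and useful, but on its own it does not tell you how the matrix element (a polynomial in $q$) splits into the $G$-piece and the $Q(T)$-piece; that identification is exactly what the $b_0 G_0$ computation supplies.
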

\begin{proof}
  For \(p_+\) odd, the singular vector \(\vsv{p_+,p_-}\) is even and thus, so
  is its image in \(\tzhu{\svc{c_{p_+,p_-}}}\).  It is therefore a polynomial in
  \(T\). The proof then follows the same arguments as in that of \cref{thm:NSzhu}, starting from the matrix element $\Rffbracket{q}{ \vsv{p_+,p_-}(w)}{q}$.  We outline the minor complications that are encountered.  First, the identity \eqref{eq:Rshcorr1} implies that the analogue of \eqref{eq:F=Jack} is the seemingly more complicated expression
  \begin{equation} \label{eq:Ftw=Jack}
    F_{p_+,p_-}^{\parity}(h_q^{\R}) = \sum_{m=0}^{\mathclap{\frac{1}{2}(p_+-1)}} \sum_{\mu\le \kappa(m)} c_{m\mu}^{\R}
    \Xi_{-(2q + \alpha_-) / \alpha_0} \sqbrac*{\fjack{\mu}{-2 \alpha_- / \alpha_0}{y_1, y_2, \ldots}},
  \end{equation}
  where $\kappa(m) = [\frac{1}{2}(p_+ + p_-) - 2 - \uniqp{p_+-1}{m}]$ is an admissible partition, by \cref{thm:adminversion}, the
  \begin{equation}
    c_{m\mu}^{\R}= 
    (-1)^mc_m^{(p_+-1)}c_{\kappa(m) \mu}
    \jprod{\jack{\mu}{-2 \alpha_- / \alpha_0}, \djack{\mu}{-2 \alpha_- / \alpha_0}}{p_+-1}{-2 \alpha_- / \alpha_0}
  \end{equation}
  are constants that do not depend on \(q\), and we have suppressed an unimportant overall power of 2.  We recall from \cref{def:uniqp} that $\uniqp{p_+-1}{m} \subseteq \admp{p_+-1}{1,0}$, for all $m$, hence 
  \begin{equation}
    \kappa(m) \supseteq \sqbrac*{\tfrac{1}{2}(p_+ + p_-) - 2 - \admp{p_+-1}{1,0}} = \admp{p_+-1}{\tfrac{1}{2} (p_- - p_+) + 1, \tfrac{1}{2} (p_- - p_+)}
  \end{equation}
  and so estimate \ref{thm:admbnd3} of \cref{thm:admissiblebounds} applies.  The result is that every $\mu$ in \eqref{eq:Ftw=Jack} is bounded below by
  \begin{equation}
    \rho = \sqbrac*{\tfrac{1}{2} (p_- - 1), \tfrac{1}{2} (p_- - 3), \tfrac{1}{2} (p_- - 3), \tfrac{1}{2} (p_- - 5), \ldots, \tfrac{1}{2} (p_- - p_+ + 2), \tfrac{1}{2} (p_- - p_+)}
  \end{equation}
  and this suffices to conclude the proof as in the \ns{} cases.
  
  For \(p_+\) even, the singular vector is odd, that is, the zero mode of
  \(\vsv{p_+,p_-}(w)\) is parity changing.  Therefore, any matrix element of the form \(\Rffbracket{q}{\vsv{p_+,p_-}(w)}{q}\)
  necessarily vanishes.  To circumvent this, we shall instead evaluate the matrix element \(\Rffbracket{q}{b_0 \vsv{p_+,p_-}(w)}{q}\).  This evaluation proceeds as for $p_+$ odd, using \eqref{eq:Rshcorr2} and estimate \ref{thm:admbnd4} of \cref{thm:admissiblebounds}, the result being (up to irrelevant proportionality constants)
  \begin{align} \label{eq:RPolyEven}
    \Rffbracket{q}{b_0 \vsv{p_+,p_-}(w)}{q}
    &= \underset{r \neq s \bmod{2}}{\prod_{r=1}^{p_+-1} \prod_{s=1}^{p_--1}} \brac*{q - \alpha_{r,s}} \cdot w^{-(p_+-1)(p_--1)/2} \notag \\
    &= \prod_{(r,s)} \brac*{h_q^{\R} - h_{r,s}} \cdot \brac*{q - \alpha_{p_+/2,p_-/2}} w^{-(p_+-1)(p_--1)/2},
  \end{align}
  where the final product is over all $(r,s)$ in $\rRkac{p_+,p_-}$ except $(p_+/2,p_-/2)$.  The reason for this exception is that when $p_+$ is even, the map $(r,s) \mapsto (p_+-r,p_--s)$ has a fixed point in $\Rkac{p_+,p_-}$, hence the factor $q - \alpha_{p_+/2,p_-/2}$ does not pair up, as in \eqref{eq:q->h}, to give a polynomial in $h_q^{\R}$.  The interpretation of this factor in the zero mode algebra (twisted Zhu algebra) is therefore not in terms of $L_0$ ($T$), but in terms of $G_0$ ($G$).  Indeed, the free field realisation \eqref{eq:ffr} gives
  \begin{align}
    \Rffbracket{q}{b_0 G_0}{q}
    = \Rffbracket{q}{b_0 (a_0 b_0 - \tfrac12 \alpha_0 b_0)}{q}
    = \tfrac12 (q-\tfrac{1}{2} \alpha_0)
    =\tfrac12 (q-\alpha_{\sfrac{p_+}{2},\sfrac{p_-}{2}}),
  \end{align}
  making this interpretation explicit and completing the proof.
\end{proof}

\subsection{Classifying modules} \label{sec:Spectrum}

The classification of simple modules over the twisted and untwisted Zhu algebras is now an easy consequence of identifying the polynomials $F_{p_+,p_-}(T)$ and $F_{p_+,p_-}^{\parity}(G)$.

\begin{thm}\label{thm:Zhuclass}
  \leavevmode
  \begin{enumerate}
  \item The Zhu algebra \(\zhu{\MinMod{p_+}{p_-}}\) is semisimple and, up to
    equivalence, its simple $\ZZ_2$-graded modules are exhausted by the \(\N(h_{r,s},c_{p_+,p_-})\), with \((r,s)\in\rNSkac{p_+,p_-}\), and their parity reversals.
  \item The twisted Zhu algebra \(\tzhu{\MinMod{p_+}{p_-}}\) is semisimple and, up to
    equivalence, its simple $\ZZ_2$-graded modules are exhausted by the \(\R(h_{r,s},c_{p_+,p_-})\), with \((r,s)\in\rRkac{p_+,p_-}\),
    and, if \(p_+\) is even, the parity reversal of \(\R(h_{p_+/2,p_-/2},c_{p_+,p_-})\) 
    (the others being isomorphic to their parity-reversed counterparts).
  \end{enumerate}
\end{thm}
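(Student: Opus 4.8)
The plan is to read off both Zhu algebras from their presentations $\zhu{\MinMod{p_+}{p_-}} \cong \CC[T]/\ideal{F_{p_+,p_-}(T)}$ and $\tzhu{\MinMod{p_+}{p_-}} \cong \CC[G]/\ideal{F_{p_+,p_-}^{\parity}(G)}$ of \eqref{eq:zhupres}, feeding in the explicit polynomials of \cref{thm:NSzhu,thm:Rzhu}. The structural input is elementary: a finite quotient $\CC[x]/\ideal{p(x)}$ is semisimple if and only if $p$ has no repeated root, in which case the Chinese Remainder Theorem identifies it with a product of copies of $\CC$, one per root, whose one-dimensional simple modules let $x$ act as that root. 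So the only genuinely new ingredient needed is a number-theoretic lemma: the conformal weights $h_{r,s}$ are pairwise distinct as $(r,s)$ runs over the reduced Neveu--Schwarz Kac table $\rNSkac{p_+,p_-}$, and likewise over the reduced Ramond Kac table $\rRkac{p_+,p_-}$; moreover $h_{r,s} = \tfrac{1}{24}c_{p_+,p_-}$ for some $(r,s)\in\Rkac{p_+,p_-}$ exactly when $p_+$ is even, and then only for $(r,s) = (p_+/2,p_-/2)$. Granting this, $F_{p_+,p_-}(T)$ and $F_{p_+,p_-}^{\parity}(G)$ --- the latter regarded as a polynomial in $G$ via $T = G^2 + \tfrac{1}{24}c_{p_+,p_-}$, together with the extra factor $G$ present when $p_+$ is even --- have simple roots only, so both Zhu algebras are semisimple.

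It then remains to match the one-dimensional simple modules of these commutative quotients with the modules $\N(h,c)$ and $\R(h,c)$ of \cref{sec:N=1VOSA}, keeping track of the $\ZZ_2$-grading. For the untwisted algebra, $\zhu{\svc{c}} = \CC[T]$ is purely even (the odd field $G(z)$ has half-integer modes, hence no zero mode in the Neveu--Schwarz sector), so the one-dimensional quotient module on which $T = L_0$ acts as $h_{r,s}$ is exactly $\N(h_{r,s},c_{p_+,p_-})$; together with its parity reversal this gives two non-isomorphic $\ZZ_2$-graded simple modules, and these exhaust the simple graded $\zhu{\MinMod{p_+}{p_-}}$-modules since any such is a simple $\zhu{\svc{c}}$-module annihilated by $F_{p_+,p_-}(T)$, forcing $T = h_{r,s}$ for some $(r,s)\in\rNSkac{p_+,p_-}$. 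For the twisted algebra with $p_+$ odd, each root $\lambda = h_{r,s} - \tfrac{1}{24}c_{p_+,p_-}$ of $F_{p_+,p_-}^{\parity}$ (in $T = G^2 + \tfrac{1}{24}c_{p_+,p_-}$) is non-zero, so the factor $G^2 - \lambda$ of $\CC[G]/\ideal{F_{p_+,p_-}^{\parity}(G)}$ has two distinct roots; but the unique $\ZZ_2$-graded simple module of $\CC[G]/\ideal{G^2-\lambda}$ with $\lambda\neq0$ is the two-dimensional module on which $G^2 = \lambda$, which is $\R(h_{r,s},c_{p_+,p_-})$ and is isomorphic to its parity reversal. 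When $p_+$ is even, the same applies to the factors with $(r,s)\neq(p_+/2,p_-/2)$, while the extra factor $G$ cuts out the component $\CC[G]/\ideal{G}$, on which $G = 0$ and hence $T = G^2 = 0$, i.e.\ $h = \tfrac{1}{24}c_{p_+,p_-} = h_{p_+/2,p_-/2}$; this component is purely even, with two non-isomorphic one-dimensional graded simple modules, namely $\R(h_{p_+/2,p_-/2},c_{p_+,p_-})$ and its parity reversal (recall the classification of simple $\svir{0}^0$-modules in \cref{sec:N=1VOSA}). Conversely, any simple graded module of the twisted algebra is a simple graded $\tzhu{\svc{c}}$-module annihilated by $F_{p_+,p_-}^{\parity}(G)$, so it appears in one of these families. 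Assembling the cases yields exactly the lists in the theorem.

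I expect the distinctness lemma to be the main obstacle; everything else is formal. After clearing denominators, $h_{r,s} = h_{r',s'}$ within one sector amounts to $rp_- - sp_+ = \pm(r'p_- - s'p_+)$, i.e.\ $(r-r')p_- = (s-s')p_+$ or $(r+r')p_- = (s+s')p_+$. Writing $g = \gcd(p_+,p_-)$, the hypothesis $\gcd\set{\tfrac12(p_--p_+),p_-} = 1$ forces $g\in\set{1,2}$, with $g = 2$ (equivalently $p_+$ even) exactly when $\tfrac12(p_--p_+)$ is odd, in which case $\tfrac12(p_-+p_+)$ is also odd. Using $g\mid(r\mp r')$ together with the ranges $1\le r,r'\le p_+-1$, one finds that the only possibilities are the trivial identification, the reflection $(r',s') = (p_+-r,p_--s)$, or --- when $g = 2$ --- a shift by $(\pm p_+/2,\pm p_-/2)$; the last changes the parity of $r+s$ because $\tfrac12(p_-+p_+)$ is odd, so it never occurs within a single sector. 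A parallel computation shows that $h_{r,s} = \tfrac{1}{24}c_{p_+,p_-}$ with $r+s$ odd forces $rp_- = sp_+$, hence $g = 2$ and $(r,s) = (p_+/2,p_-/2)$. This is short but must be carried out with care.
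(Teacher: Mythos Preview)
Your approach is essentially the same as the paper's: both derive the classification and semisimplicity from the presentations \eqref{eq:zhupres} and the explicit polynomials of \cref{thm:NSzhu,thm:Rzhu}. The paper phrases semisimplicity as the non-existence of non-split extensions (using that $T$ is central and that the polynomials have no repeated linear factors in $T$, with the lone factor $G$ handled separately), while you phrase it via the squarefree criterion for $\CC[x]/\ideal{p(x)}$; these are the same fact. Where you go further than the paper is in actually proving the distinctness of the $h_{r,s}$ within each sector and the characterisation of $h_{r,s}=c/24$: the paper simply asserts that $F_{p_+,p_-}$ and $F_{p_+,p_-}^{\parity}$ have no repeated factors, so your number-theoretic lemma fills a small gap that the paper leaves implicit.

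One minor slip to fix: from $(r\mp r')p_- = (s\mp s')p_+$ you should conclude that $\tfrac{p_+}{g}\mid (r\mp r')$ (since $p_+/g$ and $p_-/g$ are coprime), not $g\mid(r\mp r')$. With this correction your case analysis goes through exactly as you describe: for $g=1$ the only solutions in range are the identity and the Kac reflection, while for $g=2$ the extra solutions shift $(r,s)$ by $(\pm p_+/2,\pm p_-/2)$, changing the parity of $r+s$ because $\tfrac12(p_++p_-)$ is odd, hence they never occur within a single sector. The argument that $h_{r,s}=c/24$ in the Ramond sector forces $rp_-=sp_+$, and hence $g=2$ and $(r,s)=(p_+/2,p_-/2)$, is correct.
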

\begin{proof}
  The classification of simple modules follows immediately from the
  presentations \eqref{eq:zhupres} and the explicit formulae for the
  polynomials \(F_{p_+,p_-}(T)\) and \(F_{p_+,p_-}^\parity(G)\) in
  \cref{thm:NSzhu,thm:Rzhu}, respectively.  For the
  semisimplicity, we first note that $T$ would have a single eigenvalue when
  acting on any non-split extension of two simple modules because the
  extension would be indecomposable and $T$ is central in both
  \(\zhu{\MinMod{p_+}{p_-}}\) and \(\tzhu{\MinMod{p_+}{p_-}}\).  It follows
  that the two simple modules would need to be isomorphic.  If the simple
  modules were \ns{}, then the self-extension would have to have a
  non-semisimple action of $T$.  Similarly, if the simple modules were Ramond,
  but not isomorphic to \(\R(h_{p_+/2,p_-/2},c_{p_+,p_-})\), then the
  self-extension would have to have a non-semisimple action of $G$, hence a
  non-semisimple action of $T = G^2 + \frac{c}{24} \, \wun$ (because $G$ has
  non-zero eigenvalues).  In both cases, an indecomposable self-extension is
  ruled out because the polynomials \(F_{p_+,p_-}(T)\) and
  \(F_{p_+,p_-}^\parity(G)\) have no repeated factors of the form $T-h$, hence
  $T$ must act semisimply.  The remaining case, where the simple modules are
  Ramond and isomorphic to \(\R(h_{p_+/2,p_-/2},c_{p_+,p_-})\), would have to
  have a non-semisimple action of $G$, but not necessarily of $T$.  However,
  this is ruled out by \(G\) appearing as a single unrepeated factor in \(F_{p_+,p_-}^\parity(G)\).
\end{proof}

Let \(\NSIrr{h,c}\) and \(\RIrr{h,c}\) denote the unique simple quotients of
the \ns{} and Ramond Verma modules \(\NSVer{h,c}\) and \(\RVer{h,c}\), respectively.
 \begin{thm}\label{thm:rationality}
   The \vosa{} \(\MinMod{p_+}{p_-}\) is rational in both the Neveu-Schwarz and Ramond sectors, that is, both sectors have a finite number of simple modules and every $\ZZ_2$-graded module is semisimple.
   \begin{enumerate}
   \item Up to equivalence, the simple $\ZZ_2$-graded $\MinMod{p_+}{p_-}$-modules in the Neveu-Schwarz sector are exhausted by the \(\NSIrr{{h_{r,s},c_{p_+,p_-}}}\), with \((r,s)\in\rNSkac{p_+,p_-}\), 
   and their parity reversals.
   \item Up to equivalence, the simple $\ZZ_2$-graded $\MinMod{p_+}{p_-}$-modules in the Ramond sector are {exhausted} by the \(\RIrr{h_{r,s},c_{p_+,p_-}}\), with \((r,s)\in\rRkac{p_+,p_-}\), and, if $p_+$ is even, the parity reversal of \(\RIrr{h_{p_+/2,p_-/2}}\) (the others being isomorphic to their parity-reversed counterparts).
   \end{enumerate}
 \end{thm}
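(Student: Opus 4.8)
The plan is to derive every assertion from \cref{thm:Zhuclass} by invoking the untwisted and twisted Zhu correspondences for vertex operator superalgebras \cite{ZhuMod96,Kacn1z94,DonTwi98} (see also \cref{sec:twistedZhu} and, for the $N=1$ Ramond sector specifically, \cite{MiltWeb07}). Recall that this correspondence is a bijection between the isomorphism classes of simple $\ZZ_2$-graded $\MinMod{p_+}{p_-}$-modules in the Neveu-Schwarz (resp.\ Ramond) sector and those of simple $\ZZ_2$-graded modules over \(\zhu{\MinMod{p_+}{p_-}}\) (resp.\ \(\tzhu{\MinMod{p_+}{p_-}}\)); the forward map sends a module to its space of ground states, while the inverse sends a simple (twisted) Zhu-module $W$ to the simple quotient $L(W)$ of the generalised Verma (Zhu-induced) module it generates. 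Since \cref{thm:Zhuclass} already classifies the simple modules of both Zhu algebras, the remaining tasks are to identify the modules $L(W)$ concretely, to record finiteness, and to establish semisimplicity.

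First I would identify $L(W)$. The generalised Verma module attached to $W=\N(h,c)$ is, by construction, a highest-weight module over the Neveu-Schwarz algebra generated by $W$, hence a quotient of the Neveu-Schwarz Verma module $\NSVer{h,c}$, so that $L(W)=\NSIrr{h,c}$; likewise the one attached to $W=\R(h,c)$ is a quotient of $\RVer{h,c}$, so $L(W)=\RIrr{h,c}$, with parity reversals corresponding to parity reversals throughout. Feeding in \cref{thm:Zhuclass} and the presentations \eqref{eq:zhupres} then yields exactly the lists in parts (1) and (2), including the distinguished role of the parity reversal of $\RIrr{h_{p_+/2,p_-/2}}$ when $p_+$ is even. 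Finiteness is immediate: by \cref{lem:PolyNonZeroNS,lem:PolyNonZeroR} the polynomials $F_{p_+,p_-}(T)$ and $F_{p_+,p_-}^{\parity}(G)$ are non-zero, so \eqref{eq:zhupres} exhibits \(\zhu{\MinMod{p_+}{p_-}}\) and \(\tzhu{\MinMod{p_+}{p_-}}\) as finite-dimensional algebras, which therefore have finitely many simple modules.

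For semisimplicity I would use the following standard scheme. The bottom (minimal-weight) space of any $\ZZ_2$-graded $\MinMod{p_+}{p_-}$-module $\Mod{M}$ (finitely generated, with the grading hypotheses fixed in \cref{sec:N=1VOSA}) is a module over the relevant Zhu algebra, which is semisimple by \cref{thm:Zhuclass}, and so decomposes as a finite direct sum $\bigoplus_i W_i$ of simple (twisted) Zhu-modules. Granting that each generalised Verma module $L(W_i)$ is already simple, a finite-length argument keeping track of minimal conformal weights — using that the ground-state space of any extension of simple $\MinMod{p_+}{p_-}$-modules is again a semisimple Zhu-module and that the $L(W_i)$ are simple — shows that all such extensions split. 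Hence $\Mod{M}$ is semisimple, and $\MinMod{p_+}{p_-}$ is rational in both sectors.

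The main obstacle is the auxiliary claim just invoked: that every highest-weight $\MinMod{p_+}{p_-}$-module is irreducible, equivalently that each generalised Verma module $L(W)$ is simple. I would establish this from the structure theory of $N=1$ highest-weight modules — Astashkevich \cite{AstStr97} in the Neveu-Schwarz case and Iohara \cite{IohVer03} in the Ramond case — by checking that the image in $L(W)$ of the maximal ideal $\ideal{\vsv{p_+,p_-}} \subset \svc{c_{p_+,p_-}}$ annihilates precisely the maximal proper submodule of the pertinent $N=1$ Verma module; that the weights $h$ forced to appear here are exactly the Kac weights $h_{r,s}$ is the content of \cref{thm:NSzhu,thm:Rzhu}. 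A minor secondary point is bookkeeping: threading the $\ZZ_2$-grading and parity data — parity reversals, and the change in dimension of $\R(h,c)$ at $h=\tfrac{c}{24}$ — consistently through the (twisted) Zhu correspondence, and confirming that any coincidences $h_{r,s}=h_{r',s'}$ among inequivalent classes in $\rNSkac{p_+,p_-}$ or $\rRkac{p_+,p_-}$ do not disturb the stated enumeration.
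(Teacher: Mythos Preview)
Your classification of simples via the (twisted) Zhu correspondence matches the paper exactly, and the auxiliary fact you isolate --- that for $h=h_{r,s}$ in the Kac table the corresponding Verma module has no descendant singular vector whose conformal weight again lies in the Kac table --- is precisely what the paper invokes, citing \cite{IohVer03}.  (A terminological slip: $L(W)$ is simple by definition; the claim you actually need is that the Zhu-induced module already coincides with its simple quotient, i.e.\ that every highest-weight $\MinMod{p_+}{p_-}$-module is irreducible.)

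The semisimplicity argument, however, has a genuine gap.  Your ``finite-length argument keeping track of minimal conformal weights'' does not, as stated, rule out a non-split extension $0\to A\to E\to B\to 0$ of non-isomorphic simples with $h_A<h_B$.  In that situation the minimal-weight space of $E$ is just that of $A$, so the submodule it generates is $A$ itself: you recover the given embedding $A\hookrightarrow E$ but no splitting.  Knowing that highest-weight $\MinMod{p_+}{p_-}$-modules are simple does not help directly, because $E$ is cyclic (generated by a preimage of the highest-weight vector of $B$) but \emph{not} highest-weight, since it contains vectors of weight $h_A$ below the weight of its generator.  The paper closes this gap with a contragredient-duality step: passing to $E^*$ interchanges submodule and quotient, so one of $E$, $E^*$ is a genuine highest-weight module, and then the Verma structure theory forbids the required descendant singular vector.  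Self-extensions are handled separately, by observing that they would induce non-split self-extensions over the (semisimple) Zhu algebra.  Your proposal and the paper thus rest on the same representation-theoretic input, but you are missing the duality device (or an equivalent one, such as an appeal to $C_2$-cofiniteness together with a Zhu-semisimplicity criterion) needed to pass from ``highest-weight modules are simple'' to ``all extensions split''.
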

\begin{proof}
  The classification of simples follows from \cref{thm:Zhuclass} and the usual bijective correspondence between simple modules over the (twisted) Zhu algebra and simple (twisted) modules (with ground states) over the associated \vosa{}.  To show semisimplicity, and thus rationality, one needs to rule out indecomposable extensions of a simple module by another simple. Indecomposable self-extensions are ruled out because they would correspond to indecomposable self-extensions of (twisted) Zhu algebra modules, contradicting the semisimplicity of the latter.  To rule out indecomposable extensions involving two inequivalent simple modules, $\Mod{M}$ and $\Mod{N}$, note that either the indecomposable or its contragredient dual would have to be a \hwm{}.  The conformal weight of the ground states of the submodule, $\Mod{M}$ say, of this \hwm{} would then have to match that of a \sv{} in the Verma module that covers $\Mod{N}$.  However, it is easy to check \cite{IohVer03} that a Verma module with conformal weight in $\NSkac{p_+,p_-}$ or $\Rkac{p_+,p_-}$ never has a descendant \sv{} whose conformal weight is also in $\NSkac{p_+,p_-}$ or $\Rkac{p_+,p_-}$, respectively.  Such extensions therefore do not exist and thus the rationality in both sectors follows.
\end{proof}

\appendix

\section{Twisted Zhu algebras} \label{sec:twistedZhu}

The results of \cite{Kacn1z94} detail Zhu's algebra for untwisted modules over \vosas{}, while \cite{DonTwi98} introduces a version of Zhu's algebra for modules that have been twisted by a (finite-order) automorphism \(\parity\).  For the $N=1$ \vosas{} studied here, the untwisted modules are those in the Neveu-Schwarz sector and the Ramond sector corresponds to the special case in which \(\parity\) is the parity automorphism, defined to act as the identity on the even subspace and minus the identity on the odd subspace.  Throughout this appendix, we shall assume that $\VOA{V}$ is a \vosa{}, graded by conformal weights in $\frac{1}{2} \ZZ$, and that $\parity$ is an order $2$ automorphism of $\VOA{V}$.

Let us say that a vector $v \in \VOA{V}$ is \emph{homogeneous} if it is a simultaneous \(L_0\)- and \(\tau\)-eigenvector and, in this case, define $\wt v$ to be its conformal weight.  Let $\zmsub{V}{0}$ and $\tzmsub{V}{0}$ ($\zmsub{V}{1/2}$ and $\tzmsub{V}{1/2}$) denote the subspaces of $\VOA{V}$ spanned by the homogeneous vectors whose associated fields have integer moding (half-integer moding) when acting on the untwisted and $\parity$-twisted sectors, respectively.  Then, $\zmsub{V}{0}$ and $\zmsub{V}{1/2}$ are the eigenspaces of $(-1)^{2 L_0}$ of eigenvalues $1$ and $-1$, respectively, and we always have $\VOA{V} = \zmsub{V}{0} \oplus \zmsub{V}{1/2} = \tzmsub{V}{0} \oplus \tzmsub{V}{1/2}$.  We give three examples to illustrate this simple, but crucial, definition:
\begin{itemize}
\item Let $\VOA{V}$ be an $N=1$ \vosa{} and let $\parity$ be the parity automorphism.  Then, $\zmsub{V}{0}$ and $\zmsub{V}{1/2}$ are the even and odd subspaces of $\VOA{V}$, respectively, while $\tzmsub{V}{0} = \VOA{V}$ and $\tzmsub{V}{1/2} = 0$.
\item Let $\VOA{V}$ be the \vosa{} associated with symplectic fermions ($\AKMSA{psl}{1}{1}$), or another affine Kac-Moody superalgebra, and let $\tau$ be the parity automorphism.  Then, $\zmsub{V}{0} = \VOA{V}$ and $\zmsub{V}{1/2} = 0$, while $\tzmsub{V}{0}$ and $\tzmsub{V}{1/2}$ are the even and odd subspaces of $\VOA{V}$, respectively.
\item Let $\VOA{V}$ be the bosonic ghost system of central charge $c=-1$, so that the ghost fields have conformal weight $\frac{1}{2}$, and let $\parity = (-1)^{2L_0}$.  Then, $\zmsub{V}{0}$ and $\zmsub{V}{1/2}$ are the subspaces whose non-zero vectors are constructed using an even and odd number of ghost modes, respectively, while $\tzmsub{V}{0} = \VOA{V}$ and $\tzmsub{V}{1/2} = 0$.
\end{itemize}
Of course, the first is the example that is important for this paper.

The untwisted (twisted) Zhu algebra of a \vosa{} \(\VOA{V}\) is, as a vector space, a quotient of the subspace $\zmsub{V}{0}$ ($\tzmsub{V}{0}$).  To characterise these quotients and define the algebra operations, we consider the following bilinear products \cite{ZhuMod96} defined on both $\zmsub{V}{0}$ and $\tzmsub{V}{0}$:
\begin{equation} \label{eq:DefZhuProds}
u \zstar v = \res{z=0} \sqbrac*{u(z) v \frac{(1+z)^{\wt u}}{z}}, \quad u \zcirc v = \res{z=0} \sqbrac*{u(z) v \frac{(1+z)^{\wt u}}{z^2}}.
\end{equation}
Both may be motivated by considering the following generalised commutation relation, see \cite[App.~B]{RidSlJac15}, and assuming that it acts on a ground state:
\begin{equation} \label{eq:GCR}
	\sum_{\ell=0}^{\infty} \binom{\ell+k-1}{k-1} \sqbrac*{u_{n-\ell} v_{-n+\ell} - (-1)^{k+\abs{u}\abs{v}} v_{-n-k-\ell} u_{n+k+\ell}} = \sum_{j=0}^{\infty} \binom{\wt u+n+k-1}{j} (u_{-\wt u+j-k+1}v)_0
\end{equation}
($u$ and $v$ are here assumed to be homogeneous vectors of definite parities $\abs{u}$ and $\abs{v}$, respectively).  Indeed, taking $n=0$ and $k=1$ gives $u_0 v_0 = \sum_j \binom{\wt u}{j} (u_{-\wt u+j}v)_0 = (u\zstar v)_0$ on a ground state.  The product $\zstar$ is therefore just the abstraction of this zero mode action to elements of $\zmsub{V}{0}$ and $\tzmsub{V}{0}$.  Unlike the zero mode action however, this product fails to be associative in general.

Taking instead $n=-1$ and $k=2$ in \eqref{eq:GCR}, we obtain the relations $(u\zcirc v)_0 = \sum_j \binom{\wt u}{j} (u_{-\wt u-1+j}v)_0 = 0$.  Abstracting these identities therefore amounts to the vanishing of $u \zcirc v$, as defined in \eqref{eq:DefZhuProds}.  However, it turns out that one may obtain further vanishing relations by taking $n=-\frac{1}{2}$ (hence $u,v \in \zmsub{V}{1/2}$ or $\tzmsub{V}{1/2}$) and $k=1$ in \eqref{eq:GCR}.  The abstract version of these relations leads to the following extension of the product $\zcirc$ to both $\zmsub{V}{1/2}$ and $\tzmsub{V}{1/2}$:
\begin{equation}
u \zcirc v = \res{z=0} \sqbrac*{u(z) v \frac{(1+z)^{\wt u - 1/2}}{z}}.
\end{equation}
To impose the required vanishing and fix the non-associativity of $\zstar$, one introduces the ``subspaces of relations''
\begin{equation}
	\begin{aligned}
		\ozhu{\VOA{V}}&=\spn\set*{u\zcirc v\st u,v\in \zmsub{V}{0}} + \spn\set*{u\zcirc v\st u,v\in \zmsub{V}{1/2}}, \\
		\tozhu{\VOA{V}}&=\spn\set*{u\zcirc v\st u,v\in \tzmsub{V}{0}} + \spn\set*{u\zcirc v\st u,v\in \tzmsub{V}{1/2}}.	
	\end{aligned}
\end{equation}
These subspaces are in fact two-sided ideals of $\zmsub{V}{0}$ and $\tzmsub{V}{0}$, respectively, with respect to the product $\zstar$ \cite{Kacn1z94,DonTwi98}.
\begin{defn} \label{def:Zhu}
	The untwisted and $\tau$-twisted Zhu algebras of $\VOA{V}$ are the vector space quotients
	\begin{equation}
		\zhu{\VOA{V}}=\frac{\zmsub{V}{0}}{\ozhu{\VOA{V}}},\quad
    \tzhu{\VOA{V}}=\frac{\tzmsub{V}{0}}{\tozhu{\VOA{V}}},
	\end{equation}
	respectively, equipped with the product $\zstar$ defined in \eqref{eq:DefZhuProds}.
\end{defn}
\begin{rmk}
	In the literature, one normally finds the definition of $\zcirc$ extended further so that $v$ is not required to have the same moding as $u$.  This extension obviously has no interpretation in terms of the vanishing of zero modes, but allowing it leads to the non-integer moded elements being zero in the (twisted) Zhu algebra.  One can then extend $\zstar$ to all of $\VOA{V}$ by declaring that $u \zstar v$ is zero if either $u$ or $v$ is non-integer moded.  The utility of these extensions is not clear to us and they have the unfortunate consequence of obfuscating the relationship between (twisted) Zhu algebras and zero modes.
\end{rmk}

\begin{thm}[Kac and Wang {\cite[Thm.~1.1]{Kacn1z94}}; Dong, Li and Mason {\cite[Thm.~2.4(iii)]{DonTwi98}}] \label{thm:filtered}
  Both $\zhu{\VOA{V}}$ and $\tzhu{\VOA{V}}$ are unital associative algebras.  In each case, the unit is the image of the vacuum and the image of the conformal vector is central.  Moreover, both algebras are filtered, but not generally graded, by conformal weight.
\end{thm}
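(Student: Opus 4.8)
The plan is to reduce the whole statement to two ingredients: the elementary identities of vertex operator superalgebras — above all the creation property \(v(z)\vac=\ee^{zL_{-1}}v\) and the Jacobi identity — together with the fact, already recorded above, that \(\ozhu{\VOA{V}}\) and \(\tozhu{\VOA{V}}\) are two-sided ideals of \(\zmsub{V}{0}\) and \(\tzmsub{V}{0}\) for the product \(\zstar\). Since the twisted construction differs from the untwisted one only in the fractional modings of the fields attached to vectors in \(\zmsub{V}{1/2}\) versus \(\tzmsub{V}{1/2}\), and these enter only through the extended definition of \(\zcirc\) on those subspaces, the manipulations below are formally identical in both cases; I would phrase them for \(\zhu{\VOA{V}}\) and point out the twisted modifications only where they amount to more than a change of notation.

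The first step is a short lemma: for every conformal-weight-homogeneous \(v\in\zmsub{V}{0}\),
\begin{equation}
v\zcirc\vac=\res{z=0}\sqbrac*{v(z)\vac\,\frac{(1+z)^{\wt v}}{z^2}}=\res{z=0}\sqbrac*{\ee^{zL_{-1}}v\,\frac{(1+z)^{\wt v}}{z^2}}=(L_{-1}+L_0)v\in\ozhu{\VOA{V}},
\end{equation}
where the final residue is the coefficient of \(z\) in \((1+z)^{\wt v}\ee^{zL_{-1}}v\). In particular \(L_{-1}v\equiv-L_0v\) in \(\zhu{\VOA{V}}\). The analogous residues with \(z^{-1}\) in place of \(z^{-2}\) give \(\vac\zstar v=v\) and \(v\zstar\vac=v\) exactly, so the image \([\vac]\) of the vacuum is a two-sided unit for \(\zstar\) even before passing to the quotient; this settles unitality once associativity is available. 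In the twisted setting the identical computations apply to \(v\in\tzmsub{V}{0}\), with \(\ozhu{\VOA{V}}\) replaced by \(\tozhu{\VOA{V}}\).

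Associativity is the technical heart and, I expect, the main obstacle: one must show \((u\zstar v)\zstar w\equiv u\zstar(v\zstar w)\pmod{\ozhu{\VOA{V}}}\) for all \(u,v,w\in\zmsub{V}{0}\). Following Zhu \cite{ZhuMod96}, the plan is to write both triple products as iterated residues in two formal variables, use the Jacobi identity to reorganise the product \(u(z_1)v(z_2)w\), and deform contours so that their difference is displayed as a finite sum of residues, each carrying a factor \(z^{-2}\) (and, in the half-integer-moded terms that occur in the twisted case, a factor \(z^{-1}(1+z)^{\wt u-1/2}\)), hence each lying in \(\ozhu{\VOA{V}}\) by definition; since the latter is a two-sided ideal for \(\zstar\), the two triple products agree modulo it. This is precisely the computation of \cite[Sec.~1--2]{Kacn1z94} in the untwisted case and of \cite[Sec.~2]{DonTwi98} in the twisted case. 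The only extra bookkeeping in the twisted case is that \(\zstar\)-multiplying elements of \(\tzmsub{V}{0}\) yields integer powers of the formal variables throughout, the half-integer powers surfacing solely through the extended \(\zcirc\), exactly as the definition of \(\tozhu{\VOA{V}}\) anticipates.

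Granting associativity and unitality, the remaining assertions follow quickly. For centrality of the image \([\omega]\) of the conformal vector \(\omega\), whose field is \(\omega(z)=\sum_nL_nz^{-n-2}\), one uses the standard commutator identity
\begin{equation}
[u]\zstar[v]-[v]\zstar[u]=\sqbrac[\big]{\res{z=0}\sqbrac*{(1+z)^{\wt u-1}u(z)v}}
\end{equation}
in \(\zhu{\VOA{V}}\) (see \cite{ZhuMod96}) and specialises \(u=\omega\), so that \(\wt\omega=2\): the residue then evaluates to \((L_{-1}+L_0)v\), which vanishes in \(\zhu{\VOA{V}}\) by the lemma above, whence \([\omega]\) is central. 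Finally, let \(F_p\zhu{\VOA{V}}\) be the image of the span of the homogeneous vectors of conformal weight at most \(p\). The mode expansion \(u\zstar v=\sum_{j\ge0}\binom{\wt u}{j}u_{-\wt u+j}v\) shows that each summand has conformal weight \(\wt u+\wt v-j\le\wt u+\wt v\), the maximum \(\wt u+\wt v\) being attained only by \(u_{-\wt u}v\); hence \(F_p\zstar F_q\subseteq F_{p+q}\) and \(\zhu{\VOA{V}}\) is a filtered algebra. That this filtration need not arise from a grading is already visible from the relation \([L_{-1}v]=-[L_0v]\), which identifies images of vectors of distinct conformal weights, and, concretely for the algebras of interest here, from the lower-order terms of the polynomials \(F_{p_+,p_-}(T)\) of \cref{thm:NSzhu}. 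Every step above applies verbatim in the twisted setting, giving the corresponding statements for \(\tzhu{\VOA{V}}\).
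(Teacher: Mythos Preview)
The paper does not supply a proof of this theorem at all: it is stated as a cited result, attributed to Kac--Wang \cite{Kacn1z94} and Dong--Li--Mason \cite{DonTwi98}, and then used as a black box. Your sketch is therefore not competing with any argument in the paper; it is a reasonable outline of the standard proof found in those references, and the individual computations you carry out (the creation-property residues giving unitality and the relation \((L_{-1}+L_0)v\in\ozhu{\VOA{V}}\), the commutator formula specialised to \(\omega\), and the filtration bound from the mode expansion of \(\zstar\)) are correct. The one part you leave as a plan rather than a proof is associativity, which is indeed the substantive step; your description of it as a two-variable Jacobi/contour manipulation is accurate but would need the explicit identities of \cite{ZhuMod96,Kacn1z94,DonTwi98} to be a proof rather than a sketch.
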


Let \(\Mod{M}\) be a (twisted) module over the \vosa{} \(\VOA{V}\) whose conformal
weights are bounded below, that is, there exists $r \in \RR$ such that the real part of every
\(L_0\)-eigenvalue on $\Mod{M}$ is at least $r$.
Then, the space \(\overline{\Mod{M}}\) of ground states of $\Mod{M}$ is non-zero.
Further, let \(\overline{\Mod{M}}\) be the subspace of \(\Mod{M}\) of vectors that are
annihilated by all positive modes of all fields in \(\VOA{V}\). For example,
the space of ground states of a Verma module \(\NSVer{h,c}\) over \(\svc{c}\) is spanned by its \hwv{} and all its singular vectors. The following results may also be found in \cite{Kacn1z94,DonTwi98}.

\begin{thm} \label{thm:zeromodehomom}
  Let \(\Mod{M}\) be an untwisted (twisted) module over a \vosa{} \(\VOA{V}\). Then, the subspace
  \(\overline{\Mod{M}}\) becomes an $\zhu{\VOA{V}}$-module ($\tzhu{\VOA{V}}$-module) on which the action of \([u] \in \zhu{\VOA{V}}\) (\([u] \in \tzhu{\VOA{V}}\)) is that of the zero mode \(u_0\) of its preimage \(u \in \zmsub{V}{0}\) (\(u \in \tzmsub{V}{0}\)).
\end{thm}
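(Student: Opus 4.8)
The plan is to verify the three things needed for $\overline{\Mod{M}}$ to be a module over $\zhu{\VOA{V}}$, and then to observe that exactly the same argument works for $\tzhu{\VOA{V}}$: that the zero mode $u_0$ of a homogeneous $u\in\zmsub{V}{0}$ restricts to an operator on $\overline{\Mod{M}}$, that the assignment $u\mapsto u_0|_{\overline{\Mod{M}}}$ kills the subspace of relations $\ozhu{\VOA{V}}$, and that the Zhu product $\zstar$ is intertwined with composition of operators, with the vacuum mapping to the identity. Each of these is a one-line specialisation of the generalised commutation relation \eqref{eq:GCR} evaluated on a ground state $v\in\overline{\Mod{M}}$, together with the residue formulae \eqref{eq:DefZhuProds}; this is the standard reasoning of \cite{ZhuMod96,Kacn1z94,DonTwi98}, which I would simply reproduce in the present notation.

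\emph{Step 1 (invariance of $\overline{\Mod{M}}$).} Fix $v\in\overline{\Mod{M}}$, a homogeneous $a\in\VOA{V}$, and $m>0$. Since $a_m v=0$ we have $a_m u_0 v=\comm{a_m}{u_0}v$ (or $\acomm{a_m}{u_0}v$, if both vectors are odd). The operator $u_0$ preserves the conformal weight grading while $a_m$ lowers it by $m$, so the (super)commutator lowers weight by $m$; by the commutator formula for vertex operator superalgebras it is, moreover, a finite $\CC$-linear combination of index-$m$ modes of vectors in $\VOA{V}$. For $m>0$ these are strictly positive modes, hence annihilate $v$, so $a_m u_0 v=0$. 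As this holds for every field $a$ and every $m>0$, the vector $u_0 v$ lies in $\overline{\Mod{M}}$, and $u_0$ restricts to an operator on $\overline{\Mod{M}}$.

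\emph{Step 2 (descent to $\zhu{\VOA{V}}$, and multiplicativity).} Specialising \eqref{eq:GCR} to $n=-1$, $k=2$ with $u,w\in\zmsub{V}{0}$ and acting on $v$: on the left every mode occurring is a strictly positive mode of $u$ or of $w$ acting on $v$, so the left side vanishes, while the right side equals $(u\zcirc w)_0 v$ by the second equality of \eqref{eq:DefZhuProds}; hence $(u\zcirc w)_0 v=0$. Repeating with $n=-\tfrac12$, $k=1$ and $u,w\in\zmsub{V}{1/2}$ gives $(u\zcirc w)_0 v=0$ for the extended product $\zcirc$ defined above. Thus $u\mapsto u_0|_{\overline{\Mod{M}}}$ annihilates $\ozhu{\VOA{V}}$ and descends to a linear map $\zhu{\VOA{V}}\to\operatorname{End}(\overline{\Mod{M}})$. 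Finally, \eqref{eq:GCR} with $n=0$, $k=1$ evaluated on $v$ collapses, by the same positivity, to $u_0 w_0 v$ on the left and to $(u\zstar w)_0 v$ on the right, so the descended map is an algebra homomorphism; since $Y(\Omega,z)=\id$ the image of the vacuum acts as the identity, and together with \cref{thm:filtered} this exhibits $\overline{\Mod{M}}$ as a unital $\zhu{\VOA{V}}$-module. The twisted statement is proved verbatim, replacing $\zmsub{V}{0},\zmsub{V}{1/2}$ by $\tzmsub{V}{0},\tzmsub{V}{1/2}$ and reading all modings in the $\parity$-twisted sector, using that \eqref{eq:GCR} persists there (see \cite[App.~B]{RidSlJac15}).

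The only point requiring genuine care, and the one I expect to be the main obstacle to a clean write-up, is the moding bookkeeping: one must fix conventions so that ``strictly positive mode'' and the weight-shift counts in Steps 1 and 2 are interpreted uniformly across the untwisted and $\parity$-twisted sectors, and one must track why the half-integer-moded relations $u\zcirc w$ with $u,w\in\zmsub{V}{1/2}$ (or $\tzmsub{V}{1/2}$) are also forced to vanish on ground states. This last is the genuinely super ingredient, handled by the $n=-\tfrac12$ case of \eqref{eq:GCR}, and has no analogue for ordinary vertex operator algebras.
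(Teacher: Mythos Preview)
Your proposal is correct and follows precisely the approach that the paper sketches in the paragraphs preceding \cref{def:Zhu}: the paper motivates the products $\zstar$ and $\zcirc$ by specialising the generalised commutation relation \eqref{eq:GCR} at $(n,k)=(0,1)$, $(-1,2)$ and $(-\tfrac{1}{2},1)$, exactly as you do in Step~2. The paper does not itself give a formal proof of \cref{thm:zeromodehomom}, instead citing \cite{Kacn1z94,DonTwi98}; your write-up simply makes explicit the argument the paper outlines, with the additional (necessary) verification in Step~1 that zero modes preserve $\overline{\Mod{M}}$.
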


\begin{thm} \label{thm:zeromodeisom}
  There is a one-to-one correspondence between simple (twisted) Zhu algebra modules
  and simple (twisted) \(\VOA{V}\)-modules whose conformal weights are bounded below.  More precisely, the subspace \(\overline{\Mod{M}}\) of every such (twisted) $\VOA{V}$-module \(\Mod{M}\) is a simple (twisted) Zhu algebra module and every simple (twisted) Zhu algebra module
  \(\overline{\Mod{M}}\) can be induced to construct a unique simple (twisted) \(\VOA{V}\)-module \({\Mod{M}}\).
\end{thm}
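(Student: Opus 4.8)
The plan is to follow the argument of Zhu \cite{ZhuMod96}, in the twisted form due to Dong, Li and Mason \cite{DonTwi98}; since this is by now standard, I shall only indicate the structure of the proof and refer to those works for the (lengthy) details. Throughout, ``module'' means ``$\ZZ_2$-graded module'' and ``ground states'' refers to the subspace $\overline{\Mod{M}}$ defined above.

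First I would show that $\overline{\Mod{M}}$ is simple over $\zhu{\VOA{V}}$ (respectively $\tzhu{\VOA{V}}$) whenever $\Mod{M}$ is a simple untwisted (respectively twisted) $\VOA{V}$-module. Given a non-zero $\zhu{\VOA{V}}$-submodule (respectively $\tzhu{\VOA{V}}$-submodule) $\overline{\Mod{N}} \subseteq \overline{\Mod{M}}$, let $\Mod{N}$ be the $\VOA{V}$-submodule of $\Mod{M}$ that it generates; simplicity forces $\Mod{N} = \Mod{M}$, so it suffices to prove that every ground state of $\Mod{M}$ already lies in $\overline{\Mod{N}}$. This is the crux of the matter: one shows that the minimal-conformal-weight component of the span of all modes of all fields of $\VOA{V}$ applied to $\overline{\Mod{N}}$ is exactly $\overline{\Mod{N}}$. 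This follows from the generalised commutation relation \eqref{eq:GCR}. Given a homogeneous $u \in \VOA{V}$ and $v \in \overline{\Mod{N}}$, any mode $u_n$ with $n$ large enough that $u_n v$ is again a ground state can, using \eqref{eq:GCR} to move positive modes to the right where they annihilate $v$, be rewritten as a sum of zero modes $(u'_0)$ of elements $u' \in \zmsub{V}{0}$ (respectively $\tzmsub{V}{0}$) acting on $v$; but these are precisely the actions of elements of $\zhu{\VOA{V}}$ (respectively $\tzhu{\VOA{V}}$), so they preserve $\overline{\Mod{N}}$. The contributions that would obstruct this are exactly the relations $u \zcirc v = 0$, which hold because $\ozhu{\VOA{V}}$ (respectively $\tozhu{\VOA{V}}$) has been quotiented out in \cref{def:Zhu}; the appearance of the extra products $u \zcirc v$ with $u,v \in \zmsub{V}{1/2}$ or $\tzmsub{V}{1/2}$ in those ideals is what makes the half-integer-moded fields participate correctly.

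Next I would treat the reverse construction. Starting from a simple $\zhu{\VOA{V}}$-module (respectively $\tzhu{\VOA{V}}$-module) $U$, one forms a ``generalised Verma module'' $\widehat{\Mod{M}}(U)$: the free module generated by $U$ over the negative modes of the fields of $\VOA{V}$ (in the appropriate moding for the untwisted or twisted sector), graded so that $U$ sits in degree zero, and then imposes all the relations forced by locality and associativity of the vertex operators together with the requirement that the zero mode $u_0$ act on $U$ via the given Zhu-algebra action. One verifies that $\widehat{\Mod{M}}(U)$ is a genuine (twisted) $\VOA{V}$-module, bounded below in conformal weight, with top space isomorphic to $U$. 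It therefore possesses a unique maximal (twisted) submodule $\Mod{J}$ with $\Mod{J} \cap \widehat{\Mod{M}}(U)_0 = 0$, namely the sum of all submodules meeting the top space trivially, and $\Mod{M}_U := \widehat{\Mod{M}}(U)/\Mod{J}$ has ground-state space $U$. Since $U$ is simple, the span argument of the previous paragraph (run in reverse) shows $\Mod{M}_U$ is simple.

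Finally, bijectivity: the assignments $\Mod{M} \mapsto \overline{\Mod{M}}$ and $U \mapsto \Mod{M}_U$ are mutually inverse up to isomorphism. The identification $\overline{\Mod{M}_U} \cong U$ is built into the construction. Conversely, given a simple $\Mod{M}$, the universal property of $\widehat{\Mod{M}}(\overline{\Mod{M}})$ yields a surjection $\widehat{\Mod{M}}(\overline{\Mod{M}}) \sra \Mod{M}$ that is the identity on degree-zero parts; its kernel meets the top space trivially, hence lies in $\Mod{J}$, so it descends to a surjection $\Mod{M}_{\overline{\Mod{M}}} \sra \Mod{M}$, which is an isomorphism because both sides are simple. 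The main obstacle is the ground-state-span argument of the first paragraph (equivalently, the fact that $\ozhu{\VOA{V}}$ and $\tozhu{\VOA{V}}$ capture \emph{all} relations among zero modes on ground states) and the verification that $\widehat{\Mod{M}}(U)$ is well defined as a vertex operator (super)algebra module; in the twisted case this also requires the bookkeeping of the $\tzmsub{V}{0}$/$\tzmsub{V}{1/2}$ splitting and the half-integer modings, which is carried out in \cite{DonTwi98} and, for the Ramond sector relevant here, introduces no new difficulty.
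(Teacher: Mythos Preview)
Your proposal is correct and aligns with the paper's treatment: the paper does not give its own proof of this theorem but simply states it as a known result, citing \cite{Kacn1z94,DonTwi98} (``The following results may also be found in \cite{Kacn1z94,DonTwi98}''). Your sketch of the Zhu/Dong--Li--Mason argument is therefore consistent with, and more detailed than, what the paper provides.
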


\begin{rmk}
  A point to be emphasised is that the (twisted) Zhu algebra is, by
  construction (specifically, that presented above), canonically isomorphic
  to the algebra of zero modes acting on ground states of (twisted)
  modules.  In other words, \cref{thm:zeromodehomom} constructs an algebra
  homomorphism from the (twisted) Zhu to the algebra of zero modes (acting on ground states) and
  \cref{thm:zeromodeisom} implies that this homomorphism is indeed an
  isomorphism because every simple (twisted) Zhu algebra module can be 
  induced.
\end{rmk}

\flushleft

\begin{thebibliography}{10}

\bibitem{RidJac14}
D~Ridout and S~Wood.
\newblock From {Jack} polynomials to minimal model spectra.
\newblock {\em J. Phys.}, A48:045201, 2015.
\newblock \textsf{arXiv:1409.4847 \mbox{[hep-th]}}.

\bibitem{RidSlJac15}
D~Ridout and S~Wood.
\newblock Relaxed singular vectors, {Jack} symmetric functions and fractional
  level $\widehat{\mathfrak{sl}}(2)$ models.
\newblock {\em Nucl. Phys.}, B894:621--664, 2015.
\newblock \textsf{arXiv:1501.07318 [\mbox{hep-th}]}.

\bibitem{TsuExt13}
A~Tsuchiya and S~Wood.
\newblock On the extended {$W$}-algebra of type $sl_2$ at positive rational
  level.
\newblock {\em Int. Math. Res. Not.}, 2015:5357--5435, 2015.
\newblock \textsf{arXiv:1302.6435 [\mbox{hep-th}]}.

\bibitem{EicMin85}
H~Eichenherr.
\newblock Minimal operator algebras in superconformal quantum field theory.
\newblock {\em Phys. Lett.}, B151:26--30, 1985.

\bibitem{BerSup85}
M~Bershadsky, V~Knizhnik, and M~Teitelman.
\newblock Superconformal symmetry in two dimensions.
\newblock {\em Phys. Lett.}, B151:31--36, 1985.

\bibitem{FriSup85}
D~Friedan, Z~Qiu, and S~Shenker.
\newblock Superconformal invariance in two dimensions and the tricritical
  {Ising} model.
\newblock {\em Phys. Lett.}, B151:37--43, 1985.

\bibitem{GodUni86}
P~Goddard, A~Kent, and D~Olive.
\newblock Unitary representations of the {Virasoro} and super-{Virasoro}
  algebras.
\newblock {\em Comm. Math. Phys.}, 103:105--119, 1986.

\bibitem{WanRat93}
W~Wang.
\newblock Rationality of {Virasoro} vertex operator algebras.
\newblock {\em Int. Math. Res. Not.}, 7:197--211, 1993.

\bibitem{Kacn1z94}
V~Kac and W~Wang.
\newblock Vertex operator superalgebras and their representations.
\newblock In {\em Mathematical aspects of conformal and topological field
  theories and quantum groups}, volume 175 of {\em Contemporary Mathematics},
  pages 161--191, Providence, 1994. American Mathematical Society.
\newblock \textsf{arXiv:\mbox{hep-th}/9312065}.

\bibitem{AdaRat97}
D~Adamovi\'{c}.
\newblock Rationality of {Neveu}-{Schwarz} vertex operator superalgebras.
\newblock {\em Int. Math. Res. Not.}, 1997:865--874, 1997.

\bibitem{AdaVer95}
D~Adamovi\'{c} and A~Milas.
\newblock Vertex operator algebras associated to modular invariant
  representations of {$A_1^{\left(1\right)}$}.
\newblock {\em Math. Res. Lett.}, 2:563--575, 1995.
\newblock \textsf{arXiv:\mbox{q-alg}/9509025}.

\bibitem{DonTwi98}
C~Dong, H~Li, and G~Mason.
\newblock Twisted representations of vertex operator algebras.
\newblock {\em Math. Ann.}, 310:571--600, 1998.
\newblock \textsf{arXiv:\mbox{q-alg}/9509005}.

\bibitem{MiltWeb07}
A~Milas.
\newblock Characters, supercharacters and {Weber} modular functions.
\newblock {\em J. Reine Angew. Math.}, 608:35--64, 2007.

\bibitem{WakFoc86b}
M~Wakimoto and H~Yamada.
\newblock The {Fock} representations of the {Virasoro} algebra and the {Hirota}
  equations of the modified {KP} hierarchies.
\newblock {\em Hiroshima Math. J.}, 16:427--441, 1986.

\bibitem{KatMis92}
M~Kato and Y~Yamada.
\newblock Missing link between {Virasoro} and $\widehat{sl(2)}$ {Kac}-{Moody}
  algebras.
\newblock {\em Progr. Theoret. Phys. Suppl.}, 110:291--302, 1992.

\bibitem{MimSin95}
K~Mimachi and Y~Yamada.
\newblock Singular vectors of the {Virasoro} algebra in terms of {Jack}
  symmetric polynomials.
\newblock {\em Comm. Math. Phys.}, 174:447--455, 1995.

\bibitem{AwaWN95}
H~Awata, Y~Matsuo, S~Odake, and J~Shiraishi.
\newblock Excited states of the {Calogero}-{Sutherland} model and singular
  vectors of the {$W_N$} algebra.
\newblock {\em Nucl. Phys.}, B449:347--374, 1995.
\newblock \textsf{arXiv:\mbox{hep-th}/9503043}.

\bibitem{AldLio10}
L~Alday, D~Gaiotto, and Y~Tachikawa.
\newblock {Liouville} correlation functions from four-dimensional gauge
  theories.
\newblock {\em Lett. Math. Phys.}, 91:167--197, 2010.
\newblock \textsf{arXiv:0909.0945 \mbox{[hep-th]}}.

\bibitem{BelUg13}
G~Tarnopolsky A~Belavin, M~Bershtein.
\newblock Bases in coset conformal field theory from {AGT} correspondence and
  {Macdonald} polynomials at the roots of unity.
\newblock {\em J. High Energy Phys.}, 1303:019, 2013.
\newblock \textsf{arXiv:1211.2788 [hep-th]}.

\bibitem{YanUg15}
S~Yanagida.
\newblock Singular vectors of {$N=1$} super {Virasoro} algebra via {Uglov}
  symmetric functions.
\newblock \textsf{arXiv:1508.06036 [math.QA]}.

\bibitem{UglYan98}
D~Uglov.
\newblock {Yangian} {Gelfan}-{Zetlin} bases, $\mathfrak{gl}_n$-{Jack}
  polynomials and computation of dynamical correlation functions in the spin
  {Calogero}-{Sutherland} model.
\newblock {\em Comm. Math. Phys.}, 191:663--696, 1998.
\newblock \textsf{arXiv:\mbox{hep-th}/9702020}.

\bibitem{DesSup01}
P~Desrosiers, L~Lapointe, and P~Mathieu.
\newblock Supersymmetric {Calogero}-{Moser}-{Sutherland} models and {Jack}
  superpolynomials.
\newblock {\em Nucl. Phys.}, B606:547--582, 2001.
\newblock \textsf{arXiv:\mbox{hep-th}/0103178}.

\bibitem{DesSJa12}
P~Desrosiers, L~Lapointe, and P~Mathieu.
\newblock Superconformal field theory and {Jack} superpolynomials.
\newblock {\em J. High Energy Phys.}, 1209:037, 2012.
\newblock \textsf{arXiv:1205.0784 [hep-th]}.

\bibitem{AlaRam13}
L~Alarie-V\'{e}zina, P~Desrosiers, and P~Mathieu.
\newblock {Ramond} singular vectors and {Jack} superpolynomials.
\newblock {\em J. Phys.}, A47:035202, 2013.
\newblock \textsf{arXiv:1309.7965 [\mbox{hep-th}]}.

\bibitem{OPDS}
O~Blondeau-Fournier, P~Mathieu, D~Ridout, and S~Wood.
\newblock The super-{Virasoro} singular vectors and {Jack} superpolynomials
  relationship revisited.
\newblock {\em Nucl. Phys.}, B913:34--63, 2016.
\newblock \textsf{arXiv:1605.08621 [\mbox{math-ph}]}.

\bibitem{Feidif02}
B~Feigin, M~Jimbo, T~Miwa, and E~Mukhin.
\newblock A differential ideal of symmetric polynomials spanned by {Jack}
  polynomials at $\beta = -(r-1)/(k+1)$.
\newblock {\em Int. Math. Res. Not.}, 2002:1223--1237, 2002.
\newblock \textsf{arXiv:math.QA/0112127}.

\bibitem{MacSym95}
I~Macdonald.
\newblock {\em Symmetric Functions and {Hall} Polynomials}.
\newblock Oxford Mathematical Monographs. Clarendon Press, Oxford, 2nd edition,
  1995.

\bibitem{NevFac71}
A~Neveu and J~Schwarz.
\newblock Factorizable dual model of pions.
\newblock {\em Nucl. Phys.}, B31:86--112, 1971.

\bibitem{RamDua71}
P~Ramond.
\newblock Dual theory for free fermions.
\newblock {\em Phys. Rev.}, D3:2415--2418, 1971.

\bibitem{AstStr97}
A~Astashkevich.
\newblock On the structure of {Verma} modules over {Virasoro} and
  {Neveu}-{Schwarz} algebras.
\newblock {\em Comm. Math. Phys.}, 186:531--562, 1997.
\newblock \textsf{arXiv:\mbox{hep-th}/9511032}.

\bibitem{IohVer03}
K~Iohara and Y~Koga.
\newblock Representation theory of {Neveu}-{Schwarz} and {Ramond} algebras {I}:
  {Verma} modules.
\newblock {\em Adv. Math.}, 178:1--65, 2003.

\bibitem{IohFoc03}
K~Iohara and Y~Koga.
\newblock Representation theory of {Neveu}-{Schwarz} and {Ramond} algebras
  {II}: {Fock} modules.
\newblock {\em Ann. Inst. Fourier (Grenoble)}, 53:1755--1818, 2003.

\bibitem{DotScreen84}
V~Dotsenko and V~Fateev.
\newblock Conformal algebra and multipoint correlation functions in {2D}
  statistical models.
\newblock {\em Nucl. Phys.}, B240:312--348, 1984.

\bibitem{TsuFoc86}
A~Tsuchiya and Y~Kanie.
\newblock {Fock} space representations of the {Virasoro} algebra --
  intertwining operators.
\newblock {\em Publ. Res. Inst. Math. Sci.}, 22:259--327, 1986.

\bibitem{KatoMatsu88}
M~Kato and S~Matsuda.
\newblock Null field construction in conformal and superconformal algebras.
\newblock {\em Adv. Stud. Pure Math.}, 16:205--254, 1988.

\bibitem{MusFin88}
G~Mussardo, G~Sotkov, and M~Stanishkov.
\newblock Fine structure of the supersymmetric operator product expansion
  algebras.
\newblock {\em Nucl. Phys.}, B305:69--108, 1988.

\bibitem{BerMod07}
B~Bernevig and F~Haldane.
\newblock Fractional quantum {Hall} states and {Jack} polynomials.
\newblock {\em Phys. Rev. Lett.}, 100:246802, 2008.
\newblock \textsf{arXiv:0707.3637 [\mbox{cond-mat}.\mbox{mes-hall}]}.

\bibitem{ZhuMod96}
Y~Zhu.
\newblock Modular invariance of characters of vertex operator algebras.
\newblock {\em J. Amer. Math. Soc.}, 9:237--302, 1996.

\bibitem{FeiAnn92}
B~Feigin, T~Nakanishi, and H~Ooguri.
\newblock The annihilating ideals of minimal models.
\newblock {\em Int. J. Mod. Phys.}, A7:217--238, 1992.

\bibitem{FreVer92}
I~Frenkel and Y~Zhu.
\newblock Vertex operator algebras associated to representations of affine and
  {Virasoro} algebras.
\newblock {\em Duke Math. J.}, 66:123--168, 1992.

\bibitem{MatZer05}
A~Tsuchiya and K~Nagatomo.
\newblock Conformal field theories associated to regular chiral vertex operator
  algebras, {I}: Theories over the projective line.
\newblock {\em Duke Math. J.}, 128:393--471, 2005.
\newblock \textsf{arXiv:math/0206223 [math.QA]}.

\bibitem{IohRep11}
K~Iohara and Y~Koga.
\newblock {\em Representation Theory of the {Virasoro} Algebra}.
\newblock Springer Monographs in Mathematics. Springer-Verlag, London, 2011.

\end{thebibliography}

\end{document}